\documentclass[a4paper]{article}
\pdfoutput=1

\usepackage{amsmath}
\usepackage{amssymb}
\usepackage{amsthm}
\usepackage{amsfonts}

\usepackage{newpxtext}
\usepackage{newpxmath}

\usepackage[margin=2.5cm]{geometry}

\usepackage{algorithm}
\usepackage{algpseudocode}

\usepackage[hyphens]{url}
\usepackage[hidelinks]{hyperref}

\usepackage{graphicx}

\usepackage{thm-restate}

\newtheorem{theorem}{Theorem}
\newtheorem{lemma}[theorem]{Lemma}
\newtheorem{corollary}[theorem]{Corollary}
\newtheorem{proposition}[theorem]{Proposition}
\newtheorem{definition}[theorem]{Definition}
\newtheorem{remark}[theorem]{Remark}

\DeclareMathOperator{\E}{E}
\DeclareMathOperator{\Var}{Var}

\title{Huffman-Bucket Sketch \\
    \small A Simple $O(m)$ Algorithm for Cardinality Estimation
}
\author{Matti Karppa \\
    \small University of Gothenburg, Sweden \\
    \small Chalmers University of Technology, Sweden
}
\date{}

\begin{document}

\maketitle

\begin{abstract}
\noindent We introduce the \emph{Huffman-Bucket Sketch (HBS)}, a simple, mergeable data structure that losslessly compresses a 
HyperLogLog (HLL) sketch with $m$ registers to optimal space $O(m+\log n)$ bits, with amortized constant-time updates,
acting as a drop-in replacement for HLL that retains mergeability and substantially reduces memory requirements. 
We partition registers into small buckets and encode their values with a global Huffman codebook derived from 
the strongly concentrated HLL rank distribution, using the current cardinality estimate for determining the mode
of the distribution.
We prove that the Huffman tree needs rebuilding only $O(\log n)$ times over a stream, roughly when cardinality doubles.
The framework can be extended to other sketches with similar strongly concentrated
distributions. We provide preliminary numerical evidence that suggests that HBS is practical and can potentially be competitive
with state-of-the-art in practice.
\end{abstract}

\section{Introduction}
\label{sec:introduction}

Estimating the \emph{number of distinct elements}, or \emph{cardinality} in a massive data stream is a standard primitive in
databases~\cite{Trummer:2019,ChiaDPSLDWG:2019,FreitagN:2019,WangCLL:2021,PavlopoulouCT:2023}, 
networking~\cite{ChabchoubCD:2014,Ben-BasatEFMR:2018,DaoJJMN:2022,ClemensSGH:2023}, and metagenomics~\cite{BakerLangmead:2019,MarcaisSPK:2019},
among others. As the universe of the elements can be essentially infinite, such as the set of all variable-length strings,
storing a representative for each element takes $\Omega(n\log n)$ bits, by the pigeonhole principle, 
assuming the true cardinality of the stream is~$n$.
Even if we can initialize binary counters for all elements in the universe, an exact count needs
$\Theta(n)$ bits~\cite{AlonMS:1999}, which can be infeasibly large. 
However, the cardinality can be estimated with fewer bits, 
and it is known that $\Theta(m)$ bits are necessary and sufficient to achieve 
a relative standard error of $\sqrt{\Var\left[\frac{\hat{n}}{n}\right]}=O(1/\sqrt{m})$~\cite{KaneNW:2010}.

A practical standard solution to the distinct elements problem is the HyperLogLog (HLL) sketch \cite{FlajoletFGM:2007},
which has size $O(m\log\log n)$ bits for a relative error of $O(1/\sqrt{m})$. 
It is attractive because it is almost trivial to implement, can be implemented incredibly efficiently,
has constant-time updates, and is mergeable, meaning that a large datastream can easily be processed in 
a distributed fashion by maintaining sketches of substreams that can later be merged into one sketch, as if
the entire stream had been processed by a single sketch.

A lot of the work on improving the space usage over HLL has had to compromise some of these properties,
such as mergeability~\cite{PettieWY:2021} or truly constant-time updates~\cite{KarppaP:2022}.
In this paper, we present a simple algorithm and an accompanying data structure, the \emph{Huffman-Bucket Sketch (HBS)}, 
that 
can \emph{losslessly compress} the HLL sketch to
$O(m)$ bits using Huffman codes, while preserving mergeability and sustaining efficient updates. 
While not truly constant time,
the updates are amortized $O(1)$, and under mild assumptions on the relationship between $n$ and $m$,
\emph{most} updates can be constant time, with only a small fraction of updates requiring more time to update the codewords.

The main idea behind the algorithm is to leverage the fact that the distribution of register values (or \emph{ranks})
in the HLL sketch is highly concentrated around $\lceil \log_2(n/m)\rceil$, with very fast decaying tails.
In fact, as it is well known, the entropy of the register values is asymptotically constant per register~\cite{Durand:2004},
and this manifests in the fact that if we split the registers into $O(\log n)$-sized buckets,
the distribution of the Huffman codeword lengths in each bucket is also highly concentrated, requiring
only $O(\log n)$ bits per bucket with high probability.

We show that, as we process a stream of $n$ distinct elements, the Huffman tree needs to be reconstructed only $O(\log n)$
times, essentially only when the cardinality is doubled, which we can detect by maintaining a local cardinality estimate.
That is, even though the true cardinality~$n$ is unknown to us, we can pull ourselves up from the swamp by our
hair, much like Baron von M\"unchhausen,
by using the current global estimate to approximate the register value distribution, which is unambiguously determined
by the cardinality for a fixed number of registers, and subsequently unambiguously determines the Huffman tree.

The resulting data structure supports the regular operations of estmating a cardinality,
inserting a new element in the sketch, querying a register value, and merging two sketches.
The asymptotic size of the sketch is $O(m+\log n)$ bits, which is optimal~\cite{KaneNW:2010}, 
and the amortized time for insertions is $O(1)$, while merges can be carried out in $\tilde O(m)$ time.\footnote{$\tilde O(\cdot )$ suppresses polylogarithmic factors.}

The idea of bucketing originates from practical considerations: if the size of the bucket is sufficiently small,
such as a single machine word or a cache line, then any operations performed on it are essentially constant time.
We complement the theoretical, asymptotic analysis by considering the practical implications and implementation
concerns of the algorithm, and show numerical evidence that suggests that the algorithm is competetive with 
the state-of-the-art, especially accounting the fact that it is, in fact, a drop-in replacement for HLL
that can be decompressed back to a HLL sketch at any point, and thus maintains the same mergeability properties.
We do not analyze the estimation procedure in this paper,
as any HLL-compatible estimator can be used, such as 
the original HLL estimator, Ertl's MLE estimator~\cite{Ertl:2017}, or Pettie and Wang's 
state-of-the-art estimator~\cite{PettieW:2021} with relative variance of $1.075/m$.

The rest of the paper is organized as follows.
In Section~\ref{sec:preliminaries}, we review the notation, the mathematical preliminaries, the Poisson model, and recap HLL.
Section~\ref{sec:algorithm} is the main result
section, describing the HBS algorithm and the data structure.
Section~\ref{sec:analysis} provides asymptotic analysis of the algorithm in the Poissonized balls-and-bins model. 
In Section~\ref{sec:generalizations}, we briefly outline extensions of the framework. Section~\ref{sec:practicalimplications}
discusses implementation considerations and empirical implications, including the Memory-Variance Product (MVP) of the sketch.

\subsection{Related work}

The line of work leading to HLL and its variants was initiated by Flajolet and Martin with their
\emph{probabilistic counting} algorithm~\cite{FlajoletM:1985}, the foundation of which was a sparse indicator 
matrix, the \emph{FM85 matrix}, 
whose columns correspond to geometrically distributed \emph{ranks} of hash values, and rows to uniformly
distributed substreams. This was refined by Durand and Flajolet's LogLog algorithm~\cite{DurandF:2003}, 
whose key innovation was compressing the FM85 matrix by only retaining the maximum rank per row.
HLL~\cite{FlajoletFGM:2007} was obtained from this by replacing the arithmetic mean in the
estimator with the harmonic mean, improving the accuracy and robustness of the estimator, and has
since become the standard solution for cardinality estimation in practice.

HLL has seen a large number of refinements and variants, both practical and theoretical, including 
HyperLogLog++\cite{HeuleNH:2013}, HLL-Tailcut~\cite{XiaoCZL:2020}, and HyperLogLogLog~\cite{KarppaP:2022}.
Another, related and simple cardinality estimation method is linear counting~\cite{WhangVT:1990}, 
which can be seen as a Bloom filter with one hash function, and is optimal for small cardinalities, assuming
$m = \Theta(n)$, and is used as small-cardinality correction in HLL.

Kane, Nelson, and Woodruff~\cite{KaneNW:2010}, later improved by B\l{}asiok~\cite{Blasiok:2020}, 
showed the existence of an optimal $O(m+\log n)$-bit sketch with constant time
updates, but this sketch is widely considered impractical.
Pettie and Wang have established sharp information-theoretic limits~\cite{PettieW:2021} for cardinality estimation,
and observed that the HLL sketch loses some information of the FM85 matrix that would be beneficial
for further compression, confirming the observation by Lang~\cite{Lang:2017} that the FM85 matrix can be compressed
beyond the entropy of the HLL sketch. This has led to various other work that extends
the compression using additional information from the FM85 matrix, including
UltraLogLog~\cite{Ertl:2024}, ExaLogLog~\cite{Ertl:2025}, and SpikeSketch~\cite{DuHSLZG:2023}.
Another line of work has focused on forfeiting mergeability by using the martingale transform~\cite{PettieWY:2021}.
Finally, while we only address compression of the sketch, a lot of the other work has also focused on improving
the estimators~\cite{Ertl:2017,Ertl:2024,WangP:2023}.

\section{Preliminaries}
\label{sec:preliminaries}

\subsection{Mathematical notation and preliminaries}
We write $[n]=\{1,2,\ldots,n\}$ for the set of the first $n$ positive integers, and 
the Iverson bracket notation $\llbracket P \rrbracket$ to denote the indicator of a predicate~$P$, 
which is 1 if $P$ is true and 0 otherwise.

\subsection{Poissonized balls-and-bins model}
The analysis of the algorithm is carried out in the Poissonized balls-and-bins model,
which is a standard model for analyzing such 
algorithms~\cite{Ertl:2017,Ertl:2024,Ertl:2025,PettieW:2021,PettieWY:2021,WangP:2023}, 
and is known to be almost equivalent to the exact model.

In the balls-and-the-bins model, we throw $n$ balls independently and uniformly at random into $m$ bins, and analyze 
the distribution. The downside of this model is that the bins are dependent. To address this, we Poissonize the model
by letting balls arrive independently at random in the bins, with rate $\lambda=\frac{n}{m}$,
meaning that the number of balls in each bin is independently and identically distributed as $\mathrm{Pois}(\lambda)$. 
This makes $n$ a random variable,
but assuming the number of bins is sufficiently large, the distribution is well known to be almost equal.\footnote{For a standard
textbook reference, see~\cite{MitzenmacherU:2005}.}
In the actual analysis, we have two conjoined processes where the row of the FM85 matrix is selected uniformly at random
but the column is selected according to the geometric distribution. Nevertheless, the same Poissonization technique applies.

\subsection{HyperLogLog (HLL)}
Let $w$ be the word size. We assume that all hash functions $h : \mathcal U \to [2^w]$ are fully random and yield
$w$-bit hashes. As we are going to hash all elements, by the pigeonhole principle,
we require that $w=\Omega(\log n)$ to be able to tell all elements apart.

The HLL sketch~\cite{FlajoletFGM:2007} consists of an array $M$ of $m$ registers, each of which records the maximum \emph{rank} of hashes of
substreams that are mapped to the register. All registers are initialized to zero. Upon receiving a new element~$y$,
the element is hashed into a $w$-bit value $x=h(y)$. The hash value is split in two parts:
$\log_2 m$ bits of $x$ are used to select the register index~$j\in[m]$, and the remaining bits are used 
to compute the rank $r=\rho(x)$, which is the 1-based position of the leftmost 1-bit in the remaining part of the hash value.
It is immediate that $r\sim \mathrm{Geom}(1/2)$. The update rule is thus $M[j]\gets \max\{M[j],r\}$.
The sketch is clearly idempotent and can be easily merged by taking the elementwise maximum of the registers.

The size of the HLL sketch is $O(m\log\log n)$ bits, as each register 
records an index of the $w$-bit hash value, which can be stored in $O(\log w)=O(\log\log n)$ bits.
The update time is $O(1)$, assuming the hash value can be computed in constant time.
In addition to the sketch, $w=O(\log n)$ auxiliary bits are needed to store the hash value~$x$,
yielding a total size of $O(m\log\log n+\log n)$ bits for the sketch.
We skip the estimation routine, as it is not relevant for this paper.

\subsection{Huffman coding}
The Huffman code~\cite{Huffman:1952} over $n$ symbols with probabilities $p_r$ is constructed  
by assigning each symbol~$r$ as a singleton tree with weight~$p_r$, and then recursively merging the two
trees with the least weights, by joining them with a node that records the sum of the total weights of the trees, 
until only tree remains.
If $p_1 \leq p_2 \leq \cdots \leq p_n$, that is, the probabilities are sorted, this can be done in $O(n)$ 
time~\cite{Leeuwen:1976}. 
The resulting tree is a full binary tree with $n-1$ internal nodes and $n$ leaves; the codeword is obtained by traversing 
the tree from the root to a leaf corresponding to the symbol, assigning \texttt{0} and \texttt{1} to the left and right
edges, and concatenating the bits. Let $L$ be a random variable for the length of the codeword.
It is well known that
the expected codeword length is at most the entropy of the distribution plus one bit, that is,
$\E[L]\leq H(R)+1=1-\sum_{r=1}^n p_r \log_2 p_r$.

\section{Algorithm and data structure}
\label{sec:algorithm}

In this section, we describe the algorithm and the associated data structure. At some points, we will make
forward-references to propositions that are proved in Section~\ref{sec:analysis}, 
but the description of the algorithm and data structure is self-contained, 
and does not require any knowledge of the analysis. The description here is aimed at the asymptotic analysis,
and a practical implementation would likely deviate on some points for improved leading constants, 
which will be addressed in Section~\ref{sec:practicalimplications}. 

\subsection{Huffman-Bucket Sketch data structure}
\begin{samepage}
The Huffman-Bucket Sketch data structure consists of
\begin{itemize}
    \item An array of $\frac{m}{B}$ buckets, each of which holds $B$ registers, 
    \item A global Huffman tree or codebook mapping between ranks and variable-length codewords, 
    \item A global cardinality estimate $\hat{n}$, and
    \item The cardinality estimate $\hat{n}_{\textrm{old}}$ at the time of the last Huffman tree construction.
\end{itemize}
\end{samepage}
Each bucket, in turn, consists of
\begin{itemize}
    \item An array of $B$ registers, encoded as variable-width Huffman codewords,
    \item An array containing a unary encoding of register codeword lengths,
    \item The bucketwise minimum rank $r_{\min}$ of the registers in the bucket, 
    \item The count of minimum-rank registers $c_{\min}$, and
    \item The bucketwise cardinality estimate $\hat{n}_b$.
\end{itemize}
The data structure is simply initialized with estimates $\hat{n}_b=\hat{n}=0$ for all buckets $b$, with a trivial
Huffman tree,\footnote{When $n=0$, the Huffman tree is not well defined because $\Pr[R_j>0]=0$, but
applying Huffman's algorithm anyway yields a simple comb, corresponding to an unary encoding of $r$.}
 encoding all register values to the codeword corresponding to rank zero, and setting $r_{\min}=0$ and 
$c_{\min}=B$ for all buckets.

Let $B=O(\log n)$.
Suppose that the codeword lengths are $\ell_1,\ell_2,\ldots,\ell_B$, and let $L=\sum_{j=1}^B \ell_j$ be the total 
length of the codewords in the bucket.
We will later show in Proposition~\ref{prp:bucketlength} that the codeword array has at most $L=O(\log n)$ bits, 
with high probability.
The unary encoding of $\ell_j$ is \texttt{1}$^{\ell_j}$\texttt{0}, 
so therefore the unary array takes $\sum_{j=1}^B (\ell_j+1)=O(\log n)$ bits.
The minimum rank and the count of minimum rank registers require $O(\log \log n)$ bits, 
and the cardinality estimate requires $O(\log n)$ bits.
Thus, the total size of a bucket is $O(\log n)$ bits.
This, in turn, means that the total size of the bucket array is $\frac{m}{B}\cdot O(\log n)=O(m)$ bits.

By Proposition~\ref{prp:bucketdistribution}, 
the registers of the buckets have the same distribution as the overall sketch, so we only need to maintain
one global Huffman tree for the entire sketch, which is constructed using the global cardinality estimate $\hat{n}$.
Furthermore, by Proposition~\ref{prp:combinedestimates}, the global cardinality estimate is just a better estimate 
of the latent cardinality parameter that unambiguously determines the distribution of the register values in the buckets
for a fixed~$m$, and thus the
Huffman tree. The estimate~$\hat{n}$ represents our best knowledge of~$n$. 
In the case of very small cardinalities for a bucket,
the minimum rank and the count of minimum rank registers allow us to simply fall back to linear counting
as a small-cardinality correction, like in~\cite{FlajoletFGM:2007}.

The Huffman tree is a full binary tree of $O(\log n)$ nodes, and as such can be encoded in $O(\log n)$ 
bits that store the structure of the tree, mapping each node to whether it is a leaf or an internal node.
The size of the global cardinality estimates is $O(\log n)$ bits. We can thus state the following theorem.
\begin{theorem} 
    The total size of the Huffman-Bucket Sketch data structure is $O(m+\log n)$ bits, which is 
    optimal~\cite{AlonMS:1999,IndykW:2003,KaneNW:2010,Woodruff:2004}.
\end{theorem}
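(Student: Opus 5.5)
The plan is to add up the space of the components listed in the data structure description. The upper bound $O(m+\log n)$ should follow from the per-bucket analysis already sketched, together with Proposition~\ref{prp:bucketlength}. Optimality should come from the cited lower bounds. The proof splits into the bucket array, the global components, and the lower bound.

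\emph{Bucket array.} I would fix $B=\Theta(\log n)$ and invoke Proposition~\ref{prp:bucketlength}. It gives that each bucket's codeword array has total length $L=\sum_{j=1}^B \ell_j=O(\log n)$ with high probability. With high probability has to mean failure probability $n^{-c}$ for a large enough constant $c$, so that a union bound over the $m/B$ buckets still leaves the bound holding simultaneously for all of them with high probability. I am assuming here that $m\le \mathrm{poly}(n)$, which is the relevant regime since otherwise $m$ dominates anyway. The remaining per-bucket contributions are as follows:
\begin{itemize}
\item the unary length array takes $L+B=O(\log n)$ bits;
\item $r_{\min}$ takes $O(\log w)=O(\log\log n)$ bits;
\item $c_{\min}\le B$ takes $O(\log\log n)$ bits;
\item $\hat n_b$ takes $O(\log n)$ bits.
\end{itemize}
Each bucket is therefore $O(\log n)=O(B)$ bits, and the array totals $\frac{m}{B}\cdot O(B)=O(m)$ bits. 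For the expected size instead of a high-probability bound, I would use that codeword lengths have geometrically decaying tails, since ranks far from the mode have exponentially small probability. This bounds $\E[L]=O(B)$ directly by linearity, using $\E[\ell_j]\le H(R)+1=O(1)$.

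\emph{Global components.} The Huffman tree has $O(\log n)$ leaves, one per possible rank up to $w=O(\log n)$, so it is a full binary tree with $O(\log n)$ nodes. A preorder leaf/internal bit string encodes its shape in $O(\log n)$ bits. The leaf-to-rank assignment is either canonical, being determined by $\hat n$ through the distribution, or storable in $O(\log n\log\log n)$ bits. To keep that within the bound I would argue that the tree can simply be recomputed from $\hat n_{\textrm{old}}$, so only the shape, or nothing at all, needs storing. The estimates $\hat n$ and $\hat n_{\textrm{old}}$ are $O(\log n)$ bits each, as are the $w$ auxiliary hash bits. Summing all of this with the bucket array gives $O(m+\log n)$.

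\emph{Optimality.} Relative error $O(1/\sqrt m)$ requires $\Omega(m)$ bits by the cited lower bounds~\cite{IndykW:2003,Woodruff:2004,KaneNW:2010}. Distinguishing cardinalities across $\Theta(\log n)$ scales requires $\Omega(\log n)$ bits~\cite{AlonMS:1999,KaneNW:2010}. Together these give $\Omega(m+\log n)$.

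The main obstacle is the bucket step. The worst-case bucket length is not $O(\log n)$; it is only bounded with high probability or in expectation. So I would state the theorem in that sense and make sure the tail bound in Proposition~\ref{prp:bucketlength} is strong enough to survive the union bound over all buckets. I would also need to check that $B=\Theta(\log n)$ is compatible with $n$ being unknown, by choosing $B$ relative to $w$.
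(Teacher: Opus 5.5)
Your proposal is essentially the paper's argument: sum the per-bucket costs using Proposition~\ref{prp:bucketlength}, add the $O(\log n)$-bit global components, and cite the known lower bounds. Your union bound over the $m/B$ buckets, the observation that the tree is a deterministic function of $\hat n_{\textrm{old}}$ (so the leaf-to-rank map need not be stored), and the caveat that the bound holds only with high probability are sound refinements of points the paper leaves implicit. One small correction: distinguishing among $\Theta(\log n)$ scales only forces $\Omega(\log\log n)$ bits by counting memory states, so the $\Omega(\log n)$ term must rest on the cited lower bounds themselves rather than on that heuristic.
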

\begin{proof}
    This follows directly from the definition of the data structure and
    Proposition~\ref{prp:bucketlength}.
\end{proof}

We can improve the running times of operations on the sketch if we assume $m=\Omega(\log^2 n)$.
This allows us to replace the Huffman tree with two lookup tables: an array that maps ranks to codewords,
taking $O(\log^2 n)$ bits, and a hash table that maps codewords to ranks, taking as well $O(\log^2 n)$ bits.
The total size of the sketch becomes then $O(m)$ bits.

\subsection{Operations on the Huffman-Bucket Sketch}

The sketch supports four operations: peek, poke, insert, and merge. 
The peek operation takes a bucketwise index pair $(b,j)$ and returns the value of the $j^\textrm{th}$ register in the 
bucket~$b$. The poke operation takes a rank value~$r$, a bucketwise index pair $(b,j)$, 
and sets the $j^\textrm{th}$ register in the bucket~$b$ to $r$. 
The insert operation takes an element~$y\in\mathcal U$ and updates the sketch with $y$.
The merge operation takes two sketches and merges them into one sketch.
The operations are described in pseudocode in Appendix~\ref{app:pseudocode}. 
\subsubsection{Peek (lookup register value)}
Registers are indexed by a bucketwise index pair $(b,j)\in[m/B]\times[B]\simeq [m]$.
To obtain the value of the register, without making any assumptions on $n$ and $m$,
we need to locate the codeword in the array, and decode it using the Huffman tree.
Locating the start and the length of the codeword using the unary array takes 
$O(B)=O(\log n)$ time, and decoding the codeword takes $O(\log n)$ time,
as we may need to traverse the entire Huffman tree in the worst case.
Thus the peek operation can take $O(\log n)$ time in the worst case.

If we assume $m=\Omega(\log^2 n)$, the decoding becomes a constant-time lookup.
Furthermore, if we assume that the word length of the machine is $\Omega(\log n)$, 
we can locate the codeword and its length in constant time, therefore making the
peek operation run in $O(1)$ time.

\subsubsection{Poke (set register value)}
Given a new rank value~$r$ to be set in the $j^\textrm{th}$ register of the bucket~$b$, 
we need to encode $r$ into a codeword and replace the old codeword with the new codeword.
This may require us to shift the right-hand side of the bucket to accommodate the new codeword if it
differs in length.
Without any assumptions on $n$ and $m$, 
the encoding and shifting take $O(\log n)$ time.
However, if we assume $m=\Omega(\log^2 n)$ and 
that we can modify $\Omega(\log n)$ bits in constant time, 
then encoding, and shifting
operations become $O(1)$.

\subsubsection{Insert (update sketch with a new element)}
Given an element~$y\in\mathcal U$, use three distinct hash functions $h_b : \mathcal U \to [m/B]$, 
$h_j : \mathcal U \to [B]$, and $h_r : \mathcal U \to [2^w]$ (or one hash function and select an appropriate number of
independent bits) to compute the bucket index~$b=h_b(y)\in[m/B]$, 
the register index~$j=h_j(y)\in[B]$, and the rank~$r=\rho(h_r(y))\in [w]$.

If $r \leq r_{\min}$ for the bucket~$b$, the update becomes a nop.
Otherwise, we need to peek the old register value~$r_{\mathrm{old}}$.
If $r_{\mathrm{old}} \geq r$, this again becomes a nop. An actual update is only
triggered if $r_{\mathrm{old}} < r$. The update is performed by poking the register $(b,j)$
with the new rank~$r$. After the poke, if $r_{\mathrm{old}} = r_{\min}$, we need to set $c_{\min}\gets c_{\min}-1$, 
and if $c_{\min}$ becomes zero, we need to recompute $r_{\min}$ and $c_{\min}$ by peeking all registers in the bucket.
We also need to update the local cardinality estimate $\hat{n}_b$ and the global cardinality estimate $\hat{n}$.
If the global estimate changes too much from $\hat{n}_{\textrm{old}}$, 
we need to reconstruct the Huffman tree and re-encode all buckets.

Without any assumptions about $n$ and $m$, 
the peek and poke operations take $O(\log n)$ time,
updating $r_{\min}$ takes $O(\log^2 n)$ and
updating the estimate(s) 
takes $O(\log n)$ time.
By unimodality of the distribution (see Section~\ref{sec:unimodality}),
it is easy to determine the unique mode in $O(1)$ time, 
sort the probabilities in $O(\log n)$ time, and construct the Huffman tree in $O(\log n)$ time.
Re-encoding all buckets takes $O(m\log n)$ time.
Worst-case running time is thus $O(m\log n+\log^2 n)$.
We will later show in Theorem~\ref{thm:amortizedupdates} that this leads to an amortized 
running time of $O(1)$ per insertion over the entire stream,
assuming $m=O(n/\log^3 n)$. The intuitive reason for the bound on $m$ is that if $m$ is too large, 
the sketch becomes too sparse, 
as almost all registers are zero; in this regime, it would make more sense to use linear counting anyway~\cite{WhangVT:1990}.

If we assume $m=\Omega(\log^2 n)$ and that we can modify $\Omega(\log n)$ bits in constant time, 
then peek, poke, and estimate-update\footnote{At least if we use the vanilla HLL estimator, which is just a harmonic mean times a constant.} 
become $O(1)$, and updating $r_{\min}$ becomes $O(\log n)$.
Reconstructing the Huffman tree remains $O(\log n)$,
but re-encoding the codewords becomes only $O(m)$.
The best case is thus $O(1)$ and worst case is $O(m)$, with the Huffman tree construction subsumed the re-encoding of all buckets.
In this regime, as shown in Theorem~\ref{thm:amortizedupdateswithconstantwords} we also have an amortized running time of $O(1)$ 
per insertion over the entire stream, but we can easily lighten the assumption on $m$ to assuming $m=O(n/\log^2 n)$.
So, the algorithm works best in the regime where $m$ is sufficiently larger than the hash values,
but substantially smaller than $n$ to avoid sparsity, which corresponds to the real-world regime.

\subsubsection{Merge (combine two sketches into one sketch)}
Mergeability in itself is inherited directly from HLL,
as the sketches are lossless compressions of HLL sketches.
For a merge operation, we need to necessarily decode all register values,
but we can potentially reuse the Huffman tree, and we may not need to re-encode all values
if the new estimate is not too large. The best case scenario is when either one of the sketches
has a considerably larger global estimate than the other, in which case the smaller sketch hardly contributes
anything, we reuse the larger sketch's Huffman tree, and mostly just copy codewords as-is.
The worst case scenario is when the two sketches have similar estimates but represent disjoint sets of elements,
in which case the new estimate is approximately twice as big as the maximum of the old ones, necessitating
Huffman tree reconstruction, and re-encoding of all registers.

The merge operation proceeds in two passes. First, we need to decode all register values to compute 
the new bucketwise and global estimates, taking $O(m\log n)$ time without assumptions on $n$ and $m$.
We then construct the new Huffman tree if necessary, taking $O(\log n)$ time.
Finally, we need to re-encode all registers with the new Huffman tree, which takes $O(m \log n)$ time.

If we can assume $m=\Omega(\log^2 n)$ and that we can access $\Omega(\log n)$ bits in constant time, 
the decoding becomes a constant-time lookup, and the re-estimation and re-encoding become $O(m)$ time.
Therefore, under these assumptions, the merge operation takes $O(m)$ time, as the Huffman tree construction
is subsumed by the $O(m)$ time of re-encoding all registers.

\begin{theorem}
\label{thm:mergetime}
    The merge operation takes $O(m\log n)$ time in the worst case. 
    However, assuming $m=\Omega(\log^2 n)$ and that $\Omega(\log n)$ bits can be accessed in constant time,
    the merge operation can be implemented in $O(m)$ time.
\end{theorem}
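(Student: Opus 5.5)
The proof will simply account for the cost of the three phases of the merge procedure described above: decoding both sketches, rebuilding the Huffman tree if needed, and re-encoding the result. Each phase is bounded under the two regimes in the statement and the totals are added.

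\emph{Decoding and register-wise maximum.} For each of the $m/B$ bucket pairs, we walk the unary length array and the codeword array of both sketches in parallel. We decode each register with the corresponding sketch's own Huffman tree, take the elementwise maximum, and store the merged ranks in a temporary buffer of $B$ values. In the general case a single peek costs $O(\log n)$ (tree traversal of depth $O(\log n)$). Scanning a bucket sequentially lets us reuse the running offset instead of rescanning the unary array, so a bucket costs $O(B\log n)$ and all buckets cost $O(m\log n)$. During the same scan we recompute $r_{\min}$, $c_{\min}$ and the bucketwise estimate $\hat n_b$ (a harmonic-mean style sum, $O(1)$ per register). We then aggregate $\hat n$ in $O(m/B)$ additional time.

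\emph{Tree reconstruction.} By the unimodality argument of Section~\ref{sec:unimodality}, the probabilities determined by the new $\hat n$ can be sorted in $O(\log n)$ time. Van Leeuwen's linear-time construction then gives the tree in $O(\log n)$. When $m=\Omega(\log^2 n)$, the two lookup tables (rank-to-codeword array and codeword-to-rank hash table) of size $O(\log n)$ entries are also built in $O(\log n)$ expected time. This is dominated by $O(m)$ in both regimes.

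\emph{Re-encoding.} Each merged bucket is written by concatenating codewords and unary lengths. Each codeword has length $O(\log n)$, so writing it bit by bit takes $O(\log n)$, and the total is $O(m\log n)$. Under the word-RAM assumption, each codeword and its unary length fit in $O(1)$ words, so each is emitted by $O(1)$ shift/or operations. Decoding likewise becomes $O(1)$ per register, since the next codeword can be extracted with one word read and a lookup (or by reading its length from the unary array via a constant-time rank/select or leading-ones primitive on $\Omega(\log n)$ bits). Hence both passes become $O(m)$.

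The main obstacle is the constant-time per-register claim in the second regime. Locating the $j$th codeword and finding the length of a unary run must be $O(1)$, which relies on a constant-time leading-ones (or popcount/select) operation on an $\Omega(\log n)$-bit word. I would argue this is covered by the assumption that $\Omega(\log n)$ bits are accessed in constant time, together with sequential processing. Sequential processing means we only ever need the \emph{next} codeword, not random access, which avoids select and needs only a count-leading-ones primitive (itself implementable with a precomputed table of size $o(m)$). A secondary technicality is that bucket lengths are $O(\log n)$ only with high probability by Proposition~\ref{prp:bucketlength}. I would therefore state the bounds as holding whenever the sketch's buckets satisfy that length bound, which is exactly the situation in which the space bound already applies.
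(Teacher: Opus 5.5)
Your proposal is correct and follows the same route as the paper. The paper's proof tallies the same phases (decode both sketches and recompute the estimates, rebuild the Huffman tree in $O(\log n)$ time, re-encode all registers) to get $O(m\log n)$ in general, and $O(m)$ when $m=\Omega(\log^2 n)$ makes lookup tables affordable and $\Omega(\log n)$ bits can be accessed in constant time. Your sequential-scan and count-leading-ones details make explicit what the paper leaves implicit, and since every codeword has length $O(\log n)$ deterministically (the tree has only $O(\log n)$ leaves), your per-register bounds do not need the high-probability bucket-length caveat.
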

\begin{proof}
    The result follows directly from the description of the merge operation above.
\end{proof}

\section{Analysis in the poissonized balls-and-bins model}
\label{sec:analysis}

We now analyze the algorithm in a standard balls-and-bins model. 
We will be assuming a Poisson approximation of the distribution throughout this section, 
treating the number of balls~$n$ as a random variable and all of the~$m$ registers as independent and identically distributed,
receiving $\lambda = \frac{n}{m}$ balls on average. We shall refer to $\lambda$ as the \emph{load factor}.

\subsection{Rank distribution}
\label{sec:distribution}

Assuming~$m$ registers, we shall consider the rank distribution of the fixed register~$j\in[m]$ after~$n$ insertions.
Consider a matrix of bins with~$m$ rows and infinitely many columns. We independently throw balls into the bins, such
that each ball is thrown into a random row~$j$ uniformly at random, and to column~$r\geq 1$ with probability $2^{-r}$.
In the Poisson model, this means that the number of balls~$N_{j,r}$ in the bin $(j,r)$
is $N_{j,r}\sim \mathrm{Pois}(\lambda 2^{-r})$, independently for all $j\in[m]$ and $r\geq 1$.
If we consider the matrix of indicator variables $I_{j,r}=\llbracket N_{j,r}>0 \rrbracket$,
this corresponds to the FM85 matrix~\cite{FlajoletM:1985} that underlies the entire LogLog family of algorithms.

The register value $R_j=\max_{r\geq 1} \{ r \mid N_{j,r} > 0 \}$ is thus the maximum~$r$ such that $N_{j,r}>0$, or zero if there is no such~$r$. By the properties of the Poisson distribution, we have the following distribution for the register value:
\begin{itemize}
    \item $\Pr[R_j\leq r] = \prod_{k=r+1}^\infty \Pr[N_{j,k}=0] = e^{-\lambda 2^{-r}}$ for $r\geq 1$,
    \item $\Pr[R_j=0] = \Pr[R_j\leq 0] = e^{-\lambda}$,
    \item $\Pr[R_j=r] = \Pr[R_j\leq r] - \Pr[R_j\leq r-1] = e^{-\lambda 2^{-r}} - e^{-\lambda 2^{-(r-1)}} = e^{-\lambda 2^{-r}} - e^{-2\lambda 2^{-r}} = e^{-\lambda 2^{-r}}(1 - e^{-\lambda 2^{-r}}) $ for $r\geq 1$,
    \item $\Pr[R_j>r] = 1 - \Pr[R_j\leq r] = 1 - e^{-\lambda 2^{-r}}$ for $r\geq 0$, and
    \item $\Pr[R_j\geq r] = \Pr[R_j>r-1] = 1 - e^{-\lambda 2^{-(r-1)}}=1-e^{-2\lambda 2^{-r}}$ for $r\geq 1$.
\end{itemize}
As we make heavy use of the definition for $\Pr[R_j= r]$, we denote this probability simply by $p_r$ when $\lambda$ is clear from context and $j$ is irrelevant.
It is also sometimes convenient to express the probabilities as exponential functions, so we denote
$x_r = x_r(\lambda) = \lambda2^{-r}$, and $g(x) = e^{-x}(1 - e^{-x})$, so that $p_r = g(x_r) = e^{-x_r}(1 - e^{-x_r})$.

Suppose we divide the registers into $m/B$ buckets of $B$ registers each.
Then, the register values still have the same distribution.
\begin{proposition}
    \label{prp:bucketdistribution}
    Consider the bucket of $B$ registers that receives $n/(m/B)$ balls in expectation.
    Then, the distribution of the register values in the bucket is the same as the full sketch 
    with $n$ balls and $m$ registers.
\end{proposition}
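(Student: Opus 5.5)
The plan is to reduce the statement to the observation that, in the Poissonized model, the law of a single register depends only on the load factor $\lambda$. Bucketing leaves $\lambda$ unchanged.

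First, I describe the bucket as its own Poissonized process. A bucket $b$ consists of $B$ rows of the FM85 matrix. Each ball is routed to bucket $b$ with probability $B/m$ via $h_b$, and then to a row within the bucket uniformly via $h_j$ with probability $1/B$. So each ball lands in a fixed row $(b,j)$ with probability $1/m$, independently of its column, which remains $\mathrm{Geom}(1/2)$ via $h_r$. In the Poisson model with total mean $n$, the bucket receives $\mathrm{Pois}(nB/m)$ balls, and by Poisson thinning each row of the bucket receives $\mathrm{Pois}\bigl((nB/m)/B\bigr)=\mathrm{Pois}(n/m)$ balls. Further thinning by column gives $N_{j,r}\sim\mathrm{Pois}(\lambda 2^{-r})$ with $\lambda = n/m$, independently over rows $j$ in the bucket and columns $r\ge 1$. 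This is exactly the description of Section~\ref{sec:distribution}, with $n' = nB/m$ balls and $m'=B$ registers, since $n'/m' = n/m = \lambda$.

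Second, I apply the derivation of Section~\ref{sec:distribution} verbatim. It gives $\Pr[R_j\le r]=e^{-\lambda 2^{-r}}$ and $p_r = g(x_r)$, so each register in the bucket has the same marginal law as a register of the full sketch. Because the $N_{j,r}$ are independent across rows, the $B$ register values in the bucket are i.i.d. with this law. Hence the joint distribution of the bucket equals that of any $B$ registers of the full sketch.

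The only delicate point is justifying independence and the correct rate when the bucket and in-bucket indices come from two hash functions. I would handle this by noting that the pair $(h_b,h_j)$ is uniform on $[m/B]\times[B]\simeq[m]$. This reduces the bucketed process to the original row selection, so no new argument beyond Poisson thinning is needed. Finally, I would remark that the statement only concerns the latent parameter $\lambda$. It does not concern any estimate of it, which is why a single global Huffman codebook suffices.
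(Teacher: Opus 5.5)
Your proposal is correct and follows the same approach as the paper, whose proof is the one-line computation $\lambda = \frac{n/(m/B)}{B} = \frac{n}{m}$: the load factor of a bucket equals that of the full sketch, so the register law is the same. Your Poisson-thinning and row-independence details (including the uniformity of $(h_b,h_j)$ on $[m/B]\times[B]$) simply make explicit what the paper leaves implicit in the Poissonized model.
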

\begin{proof}
    Consider a sketch with $B$ registers and $n/(m/B)$ balls in expectation.
    By direct computation we get $\lambda = \frac{n/(m/B)}{B} = \frac{n}{m}$, 
    which matches the full sketch.
\end{proof}

\begin{figure}[t]
    \begin{center}
        \includegraphics[width=\textwidth]{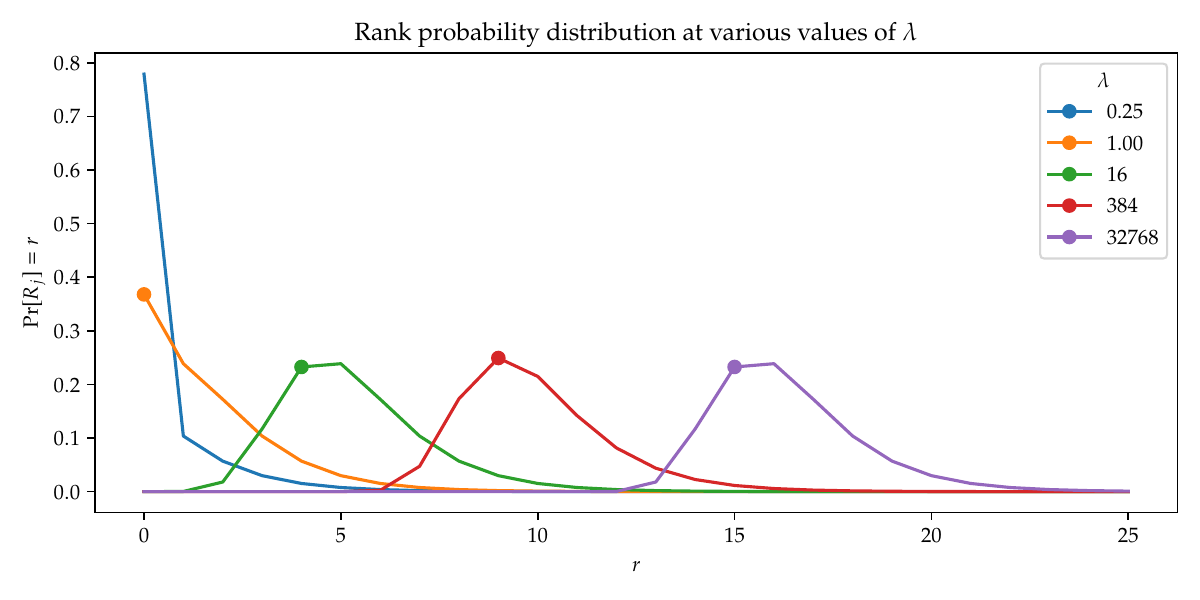}
        \caption{Register value distribution for various $\lambda$,
        with the marker representing $r^*=\lceil\log_2\lambda\rceil$.}
        \label{fig:rankdistribution}
    \end{center}
\end{figure}

\subsection{Unimodality and tails}
\label{sec:unimodality}
The distribution is very strongly concentrated around $r^*=\lceil\log_2\lambda\rceil$.
Intiuitively, this makes sense: each register essentially estimates the logarithm of the number of elements
in their substream, which is $\log_2 \lambda$ in expectation.
The rank probability has sharply decaying tails, but is not symmetric.
The right tail decays exponentially, whereas the left tail decays doubly exponentially, which is shown in Appendix~\ref{app:tailbounds}.
In fact, this is the reason why the harmonic mean of HLL improves estimation accuracy over the arithmetic
mean of LogLog: the harmonic mean penalizes the outliers in the right tail.
The rank distribution is illustrated in Figure~\ref{fig:rankdistribution} for various values of $\lambda$,
together with the marker representing $r^*$.
The shape indicates that the distribution is unimodal.
\begin{restatable}{proposition}{prpmode}
    \label{prp:mode}
    The rank distribution is unimodal and has a mode at either $r^*-1$, $r^*$, or $r^*+1$, 
    where $r^*=\lceil\log_2\lambda\rceil$.
\end{restatable}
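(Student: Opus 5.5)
The plan is to compare consecutive probabilities through the ratio $p_{r+1}/p_r$ and show that this ratio crosses $1$ only once: it is at least $1$ up to some index and below $1$ afterwards. That single sign change is exactly unimodality. Locating the crossing then pins down the mode.

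\emph{Step 1 (indices $r\geq 1$).} Fix $r\geq 1$ and write $x=x_{r+1}=\lambda 2^{-(r+1)}$, so that $x_r=2x$. Factoring $1-e^{-2x}=(1-e^{-x})(1+e^{-x})$ gives
\begin{equation*}
\frac{p_{r+1}}{p_r}=\frac{e^{-x}(1-e^{-x})}{e^{-2x}(1-e^{-x})(1+e^{-x})}=\frac{t^2}{t+1},\qquad t=e^{x}>1 .
\end{equation*}
The map $t\mapsto t^2/(t+1)$ is increasing for $t>1$, and $t=e^{\lambda 2^{-(r+1)}}$ decreases strictly as $r$ grows. Hence the ratios $p_{r+1}/p_r$ strictly decrease for $r\geq 1$, so this part of the sequence is unimodal. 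Moreover, $p_{r+1}\geq p_r$ holds if and only if $t^2-t-1\geq 0$, i.e.\ $t\geq\varphi=(1+\sqrt5)/2$. Equivalently, $\lambda 2^{-(r+1)}\geq \ln\varphi$, or $r+1\leq \log_2\lambda+c$ with $c=-\log_2\ln\varphi\approx 1.056$. The mode among $r\geq 1$ is therefore $\max\{1,\lfloor \log_2\lambda + c\rfloor\}$.

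\emph{Step 2 (position of the mode).} Since $\log_2\lambda\in(r^*-1,r^*]$ and $1<c<2$, we get $\log_2\lambda+c\in(r^*,r^*+2)$. So $\lfloor\log_2\lambda+c\rfloor\in\{r^*,r^*+1\}$, which lies inside the claimed window $\{r^*-1,r^*,r^*+1\}$. The value $r^*-1$ is only needed as slack in boundary cases.

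\emph{Step 3 (the index $r=0$).} This is the main obstacle, because $p_0=e^{-\lambda}$ does not follow the formula $g(x_r)$ and the ratio monotonicity can actually fail here. Directly,
\begin{equation*}
\frac{p_1}{p_0}=e^{\lambda/2}-1 .
\end{equation*}
For unimodality it suffices to show that if the sequence decreases at $0$, it keeps decreasing. Suppose $p_1<p_0$. Then $e^{\lambda/2}<2$, i.e.\ $\lambda/2<\ln 2$. This gives $x_2=\lambda/4<(\ln 2)/2<\ln\varphi$. By Step 1, every later ratio is then below $1$, so the sequence is decreasing from $0$ on and the mode is $0$.

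If instead $p_1\geq p_0$, the sequence increases from $0$ to $1$ and Step 1 governs the rest, so the sequence is still unimodal.

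\emph{Step 4 (small $\lambda$).} It remains to check that a mode at $0$, or at the boundary value $1$, lies within one of $r^*$. This happens when $\lambda$ is small (roughly $\lambda<2\ln 2$, giving $r^*\leq 1$). For $\lambda\geq 1$ this is a direct check; for $\lambda<1$ the window around the nonpositive $r^*$ must be read as clamped to $r\geq 0$. I would handle these finitely many regimes by direct inspection.
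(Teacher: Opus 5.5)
Your argument is correct, but it uses a different key tool from the paper.

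\textbf{The paper's route.} The paper works with the continuous function $g(x)=e^{-x}(1-e^{-x})$. Since $g'$ changes sign only at $\ln 2$ and $x_r=\lambda 2^{-r}$ decreases in $r$, the sequence $(p_r)_{r\ge 1}$ is unimodal with mode at $r'-1$ or $r'$, where $r'$ is the first index with $x_{r'}\le\ln 2$. From $x_{r^*}\in(1/2,1]$ it gets $r'\in\{r^*,r^*+1\}$, hence the three-point window. It handles $r=0$ separately via $p_0=e^{-\lambda}\ge 1/4=\max g\ge p_1$ when $\lambda\le 2\ln 2$.

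\textbf{Your route.} You compute the exact consecutive ratio $p_{r+1}/p_r=t^2/(t+1)$ and the exact threshold $x_{r+1}\ge\ln\varphi$ for $p_{r+1}\ge p_r$.

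\textbf{What each buys.} The paper's argument needs only the sign of $g'$, but it localizes the mode only to two candidates. Its aside that the mode is whichever of $x_{r'-1},x_{r'}$ lies closer to $\ln 2$ is not accurate, since $g$ is asymmetric; the true cutoff is your $\ln\varphi$. Your argument gives the explicit mode $\lfloor\log_2\lambda-\log_2\ln\varphi\rfloor$ for every $\lambda>2\ln 2$. So the mode actually lies in $\{r^*,r^*+1\}$ there, and $r^*-1$ occurs only as the mode $0$ when $\lambda\in(1,2\ln 2]$. You also correctly note that ratio monotonicity fails at $r=0$, and you handle that index soundly.

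\textbf{Your Step~4.} It follows at once from Steps~1--3:
\begin{itemize}
\item For $\lambda>2\ln 2$ one has $\log_2\lambda+c>1$, so the clamp at $1$ is inactive and Step~2 applies.
\item For $\lambda\in(1,2\ln 2]$ the mode is $0=r^*-1$.
\item For $\lambda\in(1/4,1]$ the mode is $0\in\{r^*,r^*+1\}$.
\item For $\lambda\le 1/4$ one has $r^*\le -2$, so the statement is literally false. It needs exactly the clamped reading you propose, a boundary case the paper's proof passes over silently.
\end{itemize}
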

\begin{proof}
    See Appendix~\ref{app:modeproof}.
\end{proof}

\subsection{Combined estimates}
It was already established in Proposition~\ref{prp:bucketdistribution} that the buckets have the same distribution as the full sketch.
This implies that the global cardinality estimate $\hat{n}=\sum_{b=1}^B \hat{n}_b$ is simply a more accurate estimate of the latent cardinality $n$.
\begin{restatable}{proposition}{prpcombinedestimates}
    \label{prp:combinedestimates}
    Let $\hat{n}=\sum_{b=1}^{m/B} \hat{n}_b$ be the sum of the bucketwise estimates, 
    each with a relative standard error of $O(1/\sqrt{B})$.
    Then, $\hat{n}$ has relative standard error of $O(1/\sqrt{m})$.
\end{restatable}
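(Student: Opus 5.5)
The plan is a variance computation that uses independence across buckets, which the Poissonized model supplies. By Proposition~\ref{prp:bucketdistribution}, each bucket $b$ behaves like an independent HLL sketch with $B$ registers and load factor $\lambda=n/m$. Write $n_b$ for the latent cardinality of bucket $b$, so its mean is $nB/m$. In the Poisson model the counts $N_{j,r}$ are independent across all registers $j$. Hence the bucket contents, and the estimates $\hat{n}_b$ (each a deterministic function of its bucket's registers), are mutually independent across the $m/B$ buckets.

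The hypothesis says each bucketwise estimate has relative standard error $O(1/\sqrt{B})$. I read this as
\[
\E[\hat{n}_b]=\mu_b(1+o(1)) \quad\text{and}\quad \Var[\hat{n}_b]\le C\mu_b^2/B,
\]
where $\mu_b=nB/m$ and $C$ is a constant. These are the same for every bucket by Proposition~\ref{prp:bucketdistribution}. The sum $\hat n$ then has mean $(m/B)\mu_b(1+o(1))=n(1+o(1))$. By independence, the variance splits as
\[
\Var[\hat{n}]=\sum_{b=1}^{m/B}\Var[\hat{n}_b]\le \frac{m}{B}\cdot\frac{C}{B}\Bigl(\frac{nB}{m}\Bigr)^2=\frac{Cn^2}{m}.
\]
Taking the square root and dividing by $n$ gives a relative standard error of $\sqrt{C/m}=O(1/\sqrt m)$.

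The main obstacle is bias: a relative standard error for $\hat n_b$ is only meaningful if its bias is controlled. If each $\hat n_b$ has relative bias $\beta=O(1/\sqrt B)$, then summing preserves that relative bias rather than averaging it away, so $\hat n$ would carry bias $\beta n$, which is not $O(n/\sqrt m)$. I would therefore state the hypothesis as mean squared relative error about $n_b$ being $O(1/B)$ with relative bias $o(1/\sqrt m)$, for instance asymptotically unbiased estimators such as the Pettie--Wang or MLE estimators. An alternative is to interpret ``relative standard error'' strictly as the standard deviation over the mean, exactly as the statement's wording suggests, in which case only the variance computation above is needed.

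A second subtlety is that the realized $n_b$ are random in the Poisson model. I would handle this by conditioning on the loads, or equivalently by noting that $\mathrm{Pois}$ fluctuations contribute relative variance $O(m/(nB))\cdot(B/m)=O(1/n)$ per bucket, which is negligible against $1/m$ when $m\le n$.
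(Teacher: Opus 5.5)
Your proposal is correct and is essentially the paper's proof: bound $\Var[\hat n_b]\le C(nB/m)^2/B$ from the per-bucket relative error, add the variances over the $m/B$ independent buckets to get $\Var[\hat n]\le Cn^2/m$, and divide $\sqrt{\Var[\hat n]}$ by $n$. Your extra remarks refine rather than replace this argument: you justify independence via Poissonization where the paper simply assumes it, and your bias caveat is moot under the paper's definition of relative standard error as $\sqrt{\Var[\hat n/n]}$, though it correctly notes that relative bias of the bucket estimators would not average out in the sum.
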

\begin{proof}
    See Appendix~\ref{app:combinedestimatesproof}.
\end{proof}

\subsection{Bucket size}
For $j\in[B]$, let $L_j$ be the length of the $j^\textrm{th}$ codeword in the bucket.
We shall now prove that the total size of the codeword array $L=\sum_{j=1}^B L_j$ in each bucket is $O(B)=O(\log n)$ bits with high probability.
The idea is to make the observation that (i) the entropy of the register value distribution is $O(1)$ bits, 
(ii) by optimality of the Huffman code, this implies that the expected length of a codeword is $O(1)$ bits, and then (iii)
use a known upper bound on the worst-case length of a Huffman codeword together with the Chernoff bound
to show that $L=O(\log n)$ bits with high probability.
\begin{samepage}
\begin{restatable}{proposition}{prpbucketlength}
    \label{prp:bucketlength}
    Let $B=\alpha \log n$ for some constant $\alpha>0$. 
    Then there exist constants $\Gamma,\gamma>0$, independent of $n$ and $m$,
    such that for all $t\geq 0$, $\Pr[L \geq \Gamma\log n + t] \leq e^{-\gamma t}$.
\end{restatable}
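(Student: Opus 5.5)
We follow the three-step outline stated just before the proposition, but replace the crude "worst-case Huffman length plus Chernoff" step by a moment generating function bound. We exploit the fact that a Huffman codeword length is controlled by the probability of its symbol. The $L_j$, $j\in[B]$, are i.i.d.\ in the Poisson model, since the registers are independent with common law $(p_r)$ (Proposition~\ref{prp:bucketdistribution}). For simplicity we first take the Huffman tree to be built from the true $\lambda$, and address the estimated $\hat{n}$ at the end.

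\textbf{Step 1: entropy is $O(1)$ uniformly in $\lambda$.} By Proposition~\ref{prp:mode} and the tail bounds of Appendix~\ref{app:tailbounds}, writing $r=r^*+k$ gives two regimes. For $k\geq 1$ we have $p_r\leq x_r = \lambda 2^{-r}\leq 2^{-k}$. For $k\leq 0$ we have $p_r\leq e^{-x_r}\leq e^{-2^{-k-1}}$, which is doubly exponential. Hence $-\sum_r p_r\log_2 p_r$ is bounded by a convergent series independent of $\lambda$.

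\textbf{Step 2: pointwise control of codeword lengths.} Summing the entropy over $B$ i.i.d.\ registers only gives $\E[L]=O(B)$, which is not enough for an exponential tail. We therefore use the standard fact that a Huffman codeword for a symbol of probability $p$ has length at most $c_0\log_2(1/p)+c_1$. This follows because a node at depth $d$ in a Huffman tree has weight at most roughly $\phi^{-d}$ times the root weight, a Fibonacci-type argument. Given this, the moment generating function is bounded by
\[
\E\bigl[e^{sL_j}\bigr]\le e^{sc_1}\sum_r p_r^{\,1-sc_0/\ln 2}.
\]
For a small constant $s>0$ with $sc_0/\ln 2<1$, the sum converges uniformly in $\lambda$. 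On the right tail the terms behave like $2^{-k(1-\epsilon)}$, and on the left tail the doubly exponential decay dominates. So $\E[e^{sL_j}]\leq C$ for constants $s,C$ independent of $n$ and $m$.

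\textbf{Step 3: Chernoff bound.} By independence and Markov's inequality,
\[
\Pr[L\geq \Gamma\log n+t]\leq C^B e^{-s(\Gamma\log n+t)} = e^{\alpha\log n\,\ln C - s\Gamma\log n}\,e^{-st}.
\]
Choosing $\Gamma\geq \alpha\ln C/s$ gives the claim with $\gamma=s$.

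The main obstacle is Step 2 when the tree is built from $\hat{n}$ rather than $\lambda$. One route is the Fibonacci-depth argument, which requires care at the root weight. Another is to note that on low-probability symbols the Huffman length is at most the rank's distance from the mode plus $O(1)$ because the tree is near-comb-like on the tails. To handle $\hat{n}$, we note that rebuilding happens whenever $\hat{n}$ drifts by a constant factor, so the design distribution corresponds to $\lambda'$ with $\lambda'/\lambda\in[1/c,c]$. This shifts the mode by $O(1)$, so the codeword-length bound $c_0|r-r^*|+c_1'$ still holds against the true $p_r$. The mgf bound then survives with changed constants.
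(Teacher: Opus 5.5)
Your Steps 1--3 are correct and follow essentially the paper's proof. The Fibonacci depth bound you invoke is, up to the additive constant, the Abu-Mostafa--McEliece bound $L(r)\le 1.44\log_2(1/p_r)$ the paper cites, and the rest is the same uniform bound on $\E[e^{sL_j}]$ followed by Chernoff with $\Gamma\ge\alpha\ln C/s$. Your form $e^{sc_1}\sum_r p_r^{1-sc_0/\ln 2}$ is tidier than the paper's in two ways. It makes the left-tail contribution explicitly uniform in $\lambda$, whereas the paper only calls that sum ``finite''. It also bounds the factor $p_r$ in the right tail from above, whereas the paper plugs in its lower bound $2^{-k}/(4e)$ at that point.

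Your closing paragraph on $\hat{n}$ goes beyond what the paper proves, and the length bound $c_0|r-r^*|+c_1'$ claimed there is not justified on the left tail. The $\log(1/p)$ bound is exponential in $|r-r^*|$ there, and left-tail leaves can sit below the whole right-tail comb. You do not need that bound, though: bounding $\sum_r p_r\,(p'_r)^{-sc_0/\ln 2}$ directly, for design probabilities $p'_r$ with $\lambda'/\lambda\in[1/c,c]$, converges uniformly once $sc_0c/\ln 2<1$.
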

\begin{proof}
    See Appendix~\ref{app:bucketlengthproof}.
\end{proof}
\end{samepage}

\subsection{Number of Huffman tree reconstructions}
\label{sec:treechanges}
We will now show that the number of Huffman tree reconstructions over $n$ insertions is $O(\log n)$.
The intuition behind the result is that is unimodal per Proposition~\ref{prp:mode}, and 
the tails are very steep per Appendix~\ref{app:tailbounds}, so the distribution is very rigid in the tails and 
there can be no changes to the Huffman tree in the parts corresponding to the tails, as the tree collapses into
a chain or a ``comb'' structure with a predefined merge order. Moreover, since the left tail decays much faster
than the right tail, we can partition the distribution into three parts: the left tail, 
a constant-width \emph{band} around the mode, and the right tail. The left tail must collapse into a comb
before the right tail has any effect on the tree; and the right tail must collapse into a comb before the
band has any effect on the tree, so all nontrivial changes to the tree must be in the band.
As the band is determined by the mode, and the mode approximates the base-2 logarithm of~$n$,
this indicates that there are nontrivial changes roughly when $\lambda$ crosses a power of 
two, shifting the distribution to the right. This can only change $O(\log n)$ times, 
so the number of changes to the Huffman tree is $O(\log n)$.
\begin{restatable}{proposition}{prphuffmantreechanges}
    \label{prp:huffmantreechanges}
    Let $\lambda_n=\frac{n}{m}$ for some fixed $m$, and $N$ an upper bound on $n$. 
    The total number of tree changes as $\lambda$ varies over $\lambda_1, \lambda_2, \ldots, \lambda_N$ is $O(\log N)$.
\end{restatable}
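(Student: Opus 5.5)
The plan is to use the scale invariance of the distribution. Write $\lambda = 2^{s}$ with $s=\log_2\lambda$. Then $p_r = g(2^{s-r})$, so the sequence $(p_r)_{r\ge 0}$ at load $2\lambda$ equals the sequence at load $\lambda$ shifted right by one index, apart from the boundary symbol $r=0$ (and the truncation at the largest representable rank). Huffman's algorithm depends only on the relative order of the weights and of their partial sums during merging. So, away from the boundary, the tree built for $\lambda$ and the tree built for $2\lambda$ agree up to relabelling $r\mapsto r+1$. I will therefore work with the fractional part $\theta=\{s\}\in[0,1)$ and the integer part $k=\lfloor s\rfloor$. The goal is to show that, as $\theta$ sweeps $[0,1)$ with $k$ fixed, the tree changes only $O(1)$ times. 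Since $k$ ranges over $O(\log N)$ values as $\lambda$ runs over $\lambda_1,\dots,\lambda_N$ (that is, $\lambda\in[1/m,N/m]$), the total is $O(\log N)$.

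\textbf{Step 1: band decomposition.} Fix a constant width $c$. By Proposition~\ref{prp:mode} and the tail bounds of Appendix~\ref{app:tailbounds}, the symbols split into three groups: a left tail $r< r^*-c$ with doubly exponentially decaying weights, a band $|r-r^*|\le c$, and a right tail $r>r^*+c$ with weights decaying like $2^{-(r-r^*)}$ times a constant. For $c$ large enough I will show two facts. First, each left-tail weight is smaller than the sum of all smaller left-tail weights plus \emph{any} other weight. Second, each right-tail weight exceeds the sum of all right-tail weights further out. The first fact means Huffman merges the left tail first, in a forced order, producing a comb. The second fact, using the geometric ratio (roughly $1/2$, corrected by the factor $e^{-x_r}$), means the right-tail symbols are merged from the outside inward into a comb, again in a forced order, before any band symbol participates. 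The exception is possibly a constant number of near-ties at the band boundary, which I absorb by enlarging $c$. The combinatorial shape of both combs is therefore independent of $\theta$, and the tree is determined by the order in which Huffman processes the $O(1)$ band weights together with the two comb roots.

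\textbf{Step 2: finiteness of changes per octave.} With $k$ fixed, each comparison Huffman performs among these $O(1)$ items is between two finite sums of functions $g(2^{\theta+k-r})$. Each such difference is a nonzero real-analytic function of $\theta$ on $[0,1]$: it is an exponential polynomial in $e^{-2^{\theta}a}$. Hence it has finitely many zeros, and by scale invariance the number of zeros is independent of $k$. So the merge order, and with it the tree, changes at most a constant number $C$ of times per octave. Summing gives at most $C\cdot O(\log N)$ changes, plus $O(\log N)$ for the symbol $0$ and the top truncation, each of which enters the band at most once.

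The main obstacle is Step~1. It requires verifying that the comparisons involving the tails are uniformly strict in $\theta$, so that no tail comparison flips, and in particular controlling the right tail, whose ratio $p_{r+1}/p_r\to 1/2$ makes the sum-dominance inequality nearly tight. I would first try to prove $p_r>\sum_{r'>r}p_{r'}$ for $r\ge r^*+c$ directly from $g(x)\approx x$ for small $x$. If that turns out to be tight, the fallback is to include more right-tail terms in the band, since any fixed constant width still gives $O(1)$ changes per octave by the analyticity argument.
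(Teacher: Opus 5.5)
Your overall architecture matches the paper's: octaves $I_k$, combed tails around a constant-width band, finitely many sign changes of short exponential sums per octave, and $O(\log N)$ octaves. But Step~1, which you rightly flag as the crux, fails as planned in two places.

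First, the right-tail inequality $p_r>\sum_{r'>r}p_{r'}$ is false for \emph{every} $r$, not just near the band. By telescoping, $\sum_{r'>r}p_{r'}=\Pr[R_j>r]=1-e^{-x_r}=e^{x_r}p_r>p_r$. Neither $g(x)\approx x$ nor widening the band can rescue it. It is also not what a comb needs. The missing idea is the local Fibonacci-type criterion $w_i\ge w_{i+1}+w_{i+2}$ (the paper's comb lemma). By induction it gives $\sum_{s\ge i+1}w_s\le w_{i+1}+w_i\le w_{i-1}$, so at every step the two smallest items are the running comb node and the next leaf, and the merge order is forced. This criterion holds with slack. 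The ratio $p_{r+1}/p_r=h(x_r)=e^{x_r/2}/(1+e^{x_r/2})$ is increasing in $x_r$ and tends to $1/2$. So for $x_r\le\eta$ one gets $p_{r+1}+p_{r+2}\le(q_*+q_*^2)p_r=p_r$ with $q_*=(\sqrt5-1)/2$; in fact the sum is about $\tfrac34 p_r$.

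Second, and more seriously, your first fact fails in your model, where the right tail runs up to a fixed largest representable rank $w$ (or is infinite). What forcing the left comb first actually requires is that the left-tail leaves and running sums lie below every other weight. But right-tail weights go down to about $\lambda 2^{-w}$, far below, e.g., $p_{k-3}\approx e^{-8\theta}$. The deep left leaves therefore interleave with the deep right comb, and this is not a boundary effect you can charge once:
\begin{itemize}
\item Leaf $k-j$ has $\log_2$-weight about $-\theta 2^j\log_2 e$, while leaf $k+j'$ has about $\log_2\theta-j'$.
\item As $\theta$ sweeps an octave, leaf $k-j$ therefore passes roughly $2^{j-1}\log_2 e$ right-tail leaves.
\item Each passage shifts its depth in the comb by one, so the codeword lengths change.
\end{itemize}
Summing over $j$ gives on the order of $\min\{2^k,w-k\}$ tree changes in octave $k$, not $O(1)$. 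So both your per-octave bound and the extra ``$O(\log N)$ for the top truncation'' break down.

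The paper's proof avoids exactly this by cutting the alphabet at $r_{\max}=\lceil\log_2(\lambda/\delta)\rceil$. This cutoff moves with $\lambda$ and stays within $O(1)$ of $k$. Then every right-pool weight is at least $g(\delta/2)$, $K_L$ is chosen with $2e^{-2^{K_L}}<g(\delta/2)$, and only constantly many symbols remain outside the left comb. You need to adopt this $\lambda$-relative truncation.

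A smaller point: with a fixed cutoff, the comb-root weight contains the term $e^{-\lambda 2^{-w}}$. This breaks your scale-invariance argument for a $k$-independent zero count. The paper's Rolle bound gives the uniform count directly: an $\ell$-term exponential sum in $\lambda$ has at most $\ell-1$ positive zeros, and telescoping makes every node weight such a sum with $O(1)$ terms.
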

\begin{proof}
    See Appendix~\ref{app:huffmantreechangesproof}.
\end{proof}

\subsection{Amortized updates}
We will now analyze the amortized update costs. There are three kinds of updates:
(i) ordinary updates,
(ii) minimum-rank updates, and
(iii) tree updates.
For a total of $N$ updates, the number of tree updates was bounded to be at most $O(\log N)$ in Section~\ref{sec:treechanges},
so it remains to bound the number of ordinary updates and minimum-rank updates.
The intuition is that there are not too many such updates if $\lambda$ is sufficiently large. This is because
once the sketch becomes ``saturated'' enough, almost no new element can trigger an update, as the rank distribution 
is geometrically decreasing. Conversely, 
if $m$ is close to $n$, the sketch is very sparse and most registers are zero, meaning we are better off using linear counting~\cite{WhangVT:1990} anyway.
\begin{restatable}{theorem}{thmamortizedupdates}
    \label{thm:amortizedupdates}
    Assuming an ordinary update takes $O(\log n)$ time, a minimum-rank update takes $O(\log^2 n)$ time, and
    a Huffman tree update followed by re-encoding of all codewords takes $O(m\log n)$ time,
    and furthermore assuming that $m=O(n/\log^3 n)$, the amortized cost of updates is O(1) over $n$ updates.
\end{restatable}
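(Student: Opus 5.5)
We bound the total cost over the stream of $n$ insertions by splitting it into the three kinds of updates named in the statement. Every insertion pays $O(1)$ for hashing and the comparison $r\le r_{\min}$, so we only need to charge the expensive events. For tree updates, Proposition~\ref{prp:huffmantreechanges} (with $N=n$) gives $O(\log n)$ reconstructions. Each costs $O(m\log n)$, for a total of $O(m\log^2 n)$. Under $m=O(n/\log^3 n)$ this is $O(n/\log n)=o(n)$. There is one subtlety: reconstructions are triggered by the estimate $\hat n$ drifting from $\hat n_{\mathrm{old}}$, not by the true $\lambda$. We handle it by Proposition~\ref{prp:combinedestimates}, which says $\hat n$ is within a constant factor of $n$ with high probability. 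Hence the trigger fires $O(1)$ times per doubling of $n$, i.e.\ $O(\log n)$ times.

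For ordinary updates we count \emph{register changes}. In the Poissonized model, a register $j$ changes value only when a ball lands in a column $r$ exceeding the current maximum. The number of record values of the maximum over the arrival process in one row is at most the number of distinct columns ever hit. Its expectation is $\sum_{r\ge1}\Pr[N_{j,r}>0]\le\sum_r \min(1,\lambda 2^{-r})=O(\log\lambda+1)$. Over $m$ registers we get $O(m\log(n/m)+m)$ changes, hence ordinary-update cost $O(m\log n\cdot\log n)=O(m\log^2 n)=o(n)$. The additional peeks that occur when $r>r_{\min}$ but $r\le r_{\mathrm{old}}$ also cost $O(\log n)$ each, so they must be bounded too. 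Conditioned on the bucket minimum, an insertion reaches the peek only with probability $\Pr[\rho>r_{\min}]=2^{-r_{\min}}$. By the doubly exponential left tail (Appendix~\ref{app:tailbounds}) and a union bound over the $B=O(\log n)$ registers in a bucket, $r_{\min}\ge\log_2\lambda-\log_2\log\log n-O(1)$ with high probability. So the peek probability is $O(\log\log n/\lambda)$, and summing over arrivals as $\lambda$ grows gives expected peek cost $\sum_{t}O(\log n\cdot m\log\log n/t)=O(m\log n\log\log n\log n)=o(n)$. During the early phase where $\lambda<1$ we simply charge every insertion $O(\log n)$; this happens for only $m$ insertions, costing $O(m\log n)$.

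Minimum-rank updates occur only when $c_{\min}$ drops to zero, i.e.\ when $r_{\min}$ of a bucket increases. Each increase costs $O(\log^2 n)$. Since $r_{\min}\le\max$ register value $=O(\log n)$ with high probability, each bucket has $O(\log n)$ such events. Over $m/B$ buckets the total is $O((m/\log n)\cdot\log n\cdot\log^2 n)=O(m\log^2 n)=o(n)$. Summing the three contributions plus the $O(n)$ baseline gives $O(n)$ total, hence amortized $O(1)$.

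The main obstacle is the peek accounting for insertions with $r_{\min}<r\le r_{\mathrm{old}}$. These do not change the sketch, so they are not covered by counting register changes, and controlling them requires a with-high-probability lower bound on the bucket minimum uniformly over time. I would first try a union bound over the $O(\log n)$ dyadic epochs of $\lambda$, using monotonicity of $r_{\min}$ in time. The $\log^3 n$ slack in the hypothesis is what absorbs the resulting polylogarithmic factors.
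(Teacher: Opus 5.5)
Your proposal is correct. It follows the paper's three-way split, and your treatment of tree updates and minimum-rank updates is the paper's. Tree updates use $O(\log n)$ rebuilds from Proposition~\ref{prp:huffmantreechanges}, for $O(m\log^2 n)$ total. Minimum-rank updates are $m/B$ buckets times $O(\log n)$ rank values at $O(\log^2 n)$ each.

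The paper does not address your $\hat n$-versus-$\lambda$ trigger remark at all. If you keep it, note that Proposition~\ref{prp:combinedestimates} with Chebyshev gives a constant-factor guarantee only with probability $1-O(1/m)$, and only at a fixed time. Your "$O(1)$ triggers per doubling" conclusion therefore needs a different justification, such as a multiplicative trigger on a nondecreasing estimate.

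\textbf{Where you genuinely differ: ordinary updates.} The paper counts every insertion that passes the $r_{\min}$ filter. It computes $\Pr[R\geq R_{\min}]$ by marginalizing over $\Pr[R_{\min}\leq r]=1-(1-e^{-x_r})^B$ and bounds this by $O(B/\lambda)$ using $1-(1-e^{-x_r})^B\leq Be^{-x_r}$. Summing the harmonic series gives $O(m\log^2 n)$ expected ordinary updates, hence $O(m\log^3 n)$ cost. This is exactly where the hypothesis $m=O(n/\log^3 n)$ is consumed.

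You instead place $R_{\min}$ near $r^*-\log_2\ln B$ using the doubly exponential left tail and a union bound over the bucket. That gives a per-arrival filter-passing probability of $O(\log\log n/\lambda)$. This is genuinely sharper. The paper applies its Bernoulli step even where $Be^{-x_r}>1$, losing a factor $B/\log B$. Your route would need only $m=O(n/(\log^2 n\log\log n))$.

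\textbf{Two tidy-ups.} First, your separate distinct-columns count of register changes is correct but redundant. Every poke is preceded by a peek, and your filter count already includes all peeks.

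Second, the threshold $r^*-\log_2\log\log n-O(1)$ holds only with probability $1-1/\mathrm{polylog}(n)$, not with high probability in the usual sense. Stated as is, the claimed peek probability silently drops the failure term. You also do not need the uniform-in-time union bound over epochs you anticipate. By linearity of expectation, the expected number of peeks is $\sum_t \E[2^{-R_{\min}(t)}]$, so only the marginal law of $R_{\min}$ at each time matters. Compute the expectation directly:
\[
\E[2^{-R_{\min}}]\leq \frac{2^{k_0}}{\lambda}+\frac{B}{\lambda}\sum_{k>k_0}2^k e^{-2^{k-1}},
\]
using $\Pr[R_{\min}\leq r^*-k]\leq Be^{-2^{k-1}}$ and $2^{-r^*}\leq 1/\lambda$. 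With $2^{k_0}=\Theta(\ln B)$ this is $O(\ln B/\lambda)$ outright, with no failure event to track.
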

\begin{proof}
    See Appendix~\ref{app:amortizedupdatesproofs}.
\end{proof}
\begin{restatable}{theorem}{thmamortizedupdateswithconstantwords}
    \label{thm:amortizedupdateswithconstantwords}
    Assuming that we can access $\Omega(\log n)$ bits in constant time and $m=\Omega(\log^2 n)$, so we can maintain a Huffman codebook,
    so we can assume an ordinary update takes $O(1)$ time, a minimum-rank update takes $O(\log n)$ time, and
    a Huffman tree update followed by re-encoding of all codewords takes $O(m)$ time,
    the amortized cost of updates is O(1) over $n$ updates if we further assume $m=O(n/\log^2 n)$.
\end{restatable}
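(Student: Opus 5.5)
We bound the total cost over $n$ insertions by splitting it into the three update kinds and showing each contributes $O(n)$ in total, as in the proof of Theorem~\ref{thm:amortizedupdates}. A \emph{nop} insertion, where $r\le r_{\min}$ or $r_{\mathrm{old}}\ge r$, costs $O(1)$ under the word-access assumption (hashing, comparing with $r_{\min}$, and a constant-time peek). The nops therefore contribute $O(n)$ in total.

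\textbf{Ordinary updates.} Each ordinary update costs $O(1)$, and there are at most $n$ of them, so their total is $O(n)$. No counting is needed here; this is where the constant-word assumption already pays off compared with Theorem~\ref{thm:amortizedupdates}.

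\textbf{Tree updates.} By Proposition~\ref{prp:huffmantreechanges}, the number of Huffman tree changes is $O(\log n)$. Each change costs $O(m)$ for re-encoding, and the $O(\log n)$ tree construction is subsumed since $m=\Omega(\log^2 n)$. The total is $O(m\log n)$, which is $O(n/\log n)=O(n)$ when $m=O(n/\log^2 n)$. One point needs care: rebuilds are triggered by the estimate $\hat n$ drifting from $\hat n_{\mathrm{old}}$, not by the true $\lambda$. I would argue that the trigger fires only when $\hat n$ changes by a constant factor, for instance doubles. By Proposition~\ref{prp:combinedestimates}, $\hat n$ concentrates around $n$, so over the stream there are $O(\log n)$ such doublings, up to lower-order fluctuation. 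I would state this in expectation, or with a hysteresis rule so that noise near a threshold cannot cause repeated rebuilds.

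\textbf{Minimum-rank updates.} Each costs $O(\log n)$, so I need the number of them to be $O(n/\log n)$. A recomputation in bucket $b$ happens only when $c_{\min}$ reaches zero, that is, when the bucket minimum $r_{\min}$ strictly increases. The bucket minimum is at most the maximum register value, which is $O(\log n)$ with high probability: by the right-tail bound in Appendix~\ref{app:tailbounds}, $\Pr[R_j>r]\le \lambda 2^{-r}$, and a union bound gives the same for all registers with high probability. Hence each bucket performs $O(\log n)$ recomputations, and over $m/B$ buckets the total count is $O(m)$ with $B=\Theta(\log n)$. The total cost is then $O(m\log n)=O(n)$ under $m=O(n/\log^2 n)$.

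\textbf{Main obstacle.} The main step is this last count, together with handling the failure event. When $r_{\max}=\Theta(w)$, the costs are still polynomial, so the low-probability failure event contributes $o(1)$ in expectation. A sharper route, if needed, is to note that $r_{\min}$ is near $r^*-O(1)$ because the left tail is doubly exponential. Then the total number of minimum-rank increments per bucket is $O(\log\lambda)$ plus $O(1)$ expected extra. Summing the three parts gives $O(n)$ total work, hence $O(1)$ amortized per insertion.
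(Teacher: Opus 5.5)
Your proposal is correct and follows the paper's proof. It uses the same split into ordinary, minimum-rank, and tree updates, with Proposition~\ref{prp:huffmantreechanges} bounding the $O(\log n)$ rebuilds at $O(m)$ each and a buckets-times-rank-values count bounding minimum-rank recomputations. The small differences favor you: bounding ordinary updates trivially by $n\cdot O(1)$ makes the paper's appeal to Proposition~\ref{prop:ordinaryupdates} unnecessary, and your $O(m)$ count of minimum-rank recomputations (total cost $O(m\log n)=O(n/\log n)$) is what the proof of Proposition~\ref{prop:minimumrankupdates} actually establishes, whereas the paper's $O(m)$ total cost relies on the stronger $O(m/\log N)$ count stated there but not proved.
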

\begin{proof}
    See Appendix~\ref{app:amortizedupdatesproofs}.
\end{proof}

\section{Generalizations}
\label{sec:generalizations}

So far we have focused the analysis of the HBS sketch specifically on the lossless compression of HLL registers.
However, the framework admits various generalizations, which we briefly outline here, without analyzing them in detail.

Perhaps the most immediate generalization is to change the basis of the rank or $\rho$ function.
Instead of looking at the location of the first 1-bit in the binary representation of the hash value, 
we could look at the location of the first nonzero digit in an arbitrary base-$b$ representation of the hash value.
This is not unlike the directions described in~\cite{Ertl:2024,PettieW:2021}.
Combining such an alternative rank function works directly with the HBS framework,
as the compression is agnostic to the actual symbols being compressed, and it should be possible to obtain
similar guarantees on the bucket behavior. The rank function could even be something completely different, as long
as it has a similar tail behavior.

Secondly, the current analysis works on the Huffman code, but there is no theoretical reason to stick to it, 
apart from its simplicity, ease of implementation, and ease of analysis.
We could consider other schemes such as arithmetic coding, provided they can be updated as efficiently,
and it can be shown that the codewords need not be updated too frequently.

Thirdly, we need not restrict ourselves to the HLL sketch and could consider
applying the same framework to other sketches, using the extra information
contained in the FM85 matrix~\cite{PettieW:2021}, like, for example, in the ULL~\cite{Ertl:2024} or ExaLogLog~\cite{Ertl:2025}.

Lastly, it should be possible to apply the framework to other problems as well, such as compressing the 
Count-Min sketch~\cite{CormodeM:2005}, which estimates the frequencies of individual elements in the data stream,
maintaining an array of integer counters using hash functions. 
The key complication for the Count-Min sketch would be that there would have to be some kind of model for the
distribution of counter values, as we cannot abstract the data-dependence away like in the case of HLL.
However, it might be possible to treat, for example, the rows of the CMS as buckets to be compressed with mild 
assumptions on the data, or alternatively learn the distribution together with the data. However, we leave this 
as an open question for now.

\section{Practical implications}
\label{sec:practicalimplications}

\begin{figure}[t]
    \begin{center}
        \begin{tabular}{@{}c@{}c@{}}
            \includegraphics[width=0.45\textwidth]{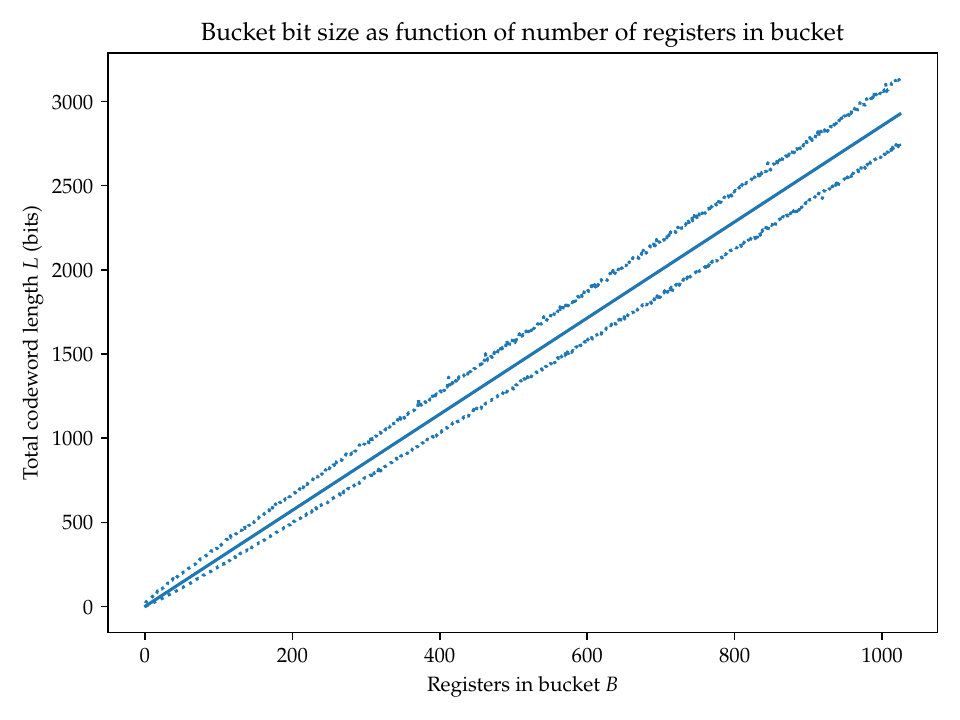} & 
            \includegraphics[width=0.45\textwidth]{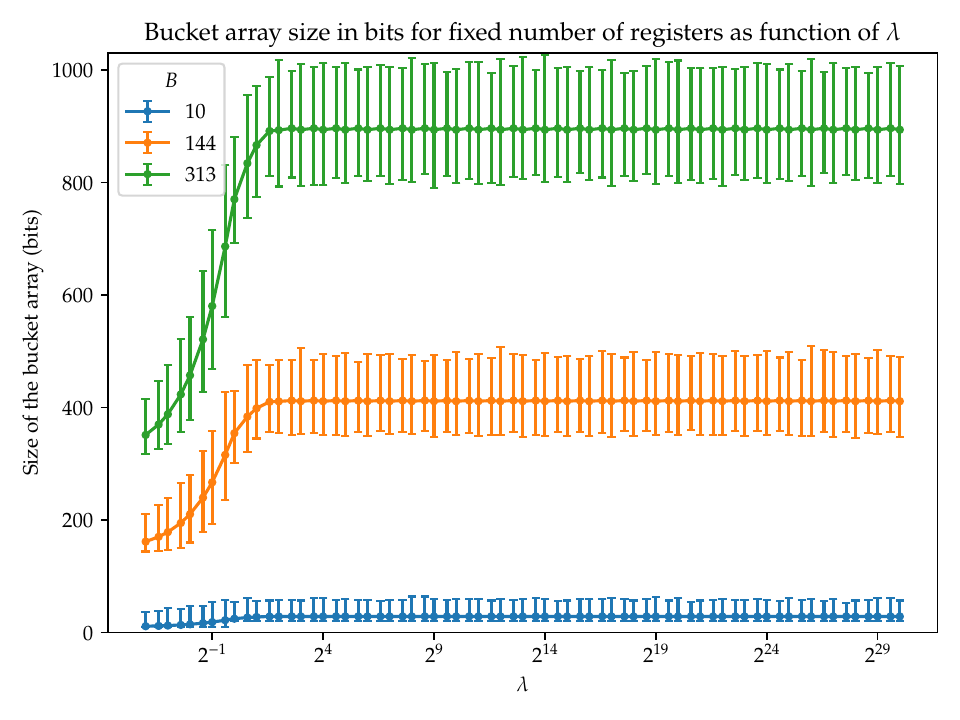}\\
            (a) & (b)
        \end{tabular}
        \caption{(a) The bit size of the bucket array or total codeword size~$L$ as function of the number of registers
        in the bucket~$B$. The solid line represents the mean of one million repetitions, and the dashed lines
        the minimum and maximum sizes.
        (b) The size of a bucket with a fixed number of registers~$B$ as a function of the load factor~$\lambda$
        over one million random sketches. 
        The marker represents the mean, the error bars represent the minimum and maximum sizes.}
        \label{fig:bucket_size}
    \end{center}
\end{figure}

\subsection{Implementation concerns}
\label{sec:implementationconcerns}
The HBS sketch was designed with practical implementation in mind. The idea of bucketing arose from the idea
that the buckets should be small enough to be compressed either into single machine words or a single cache line.
The analysis of Section~\ref{sec:analysis} is of limited value for a practical implementation, as the world is not asymptotic and constants matter.
This leads to, for example, that we cannot really use the idea of a unary index into the array. However, 
$\log n$
\emph{is} a constant for all intents and purposes for all realistic values of $n$, 
as, in practice, we use 64-bit hash functions. 
This makes 
accesses constant time in that
they do not really depend on the number of registers or the cardinality. If we wanted to improve the access time,
we could provide a within-bucket index of the bit locations of $\lceil \log_2 B \rceil-1$ equally-spaced
registers, reducing the need from decoding $\Theta (B)$ registers to $\Theta (B/\log B)$ registers, 
which is a significant improvement in practice, as $B$ is typically small. At the same time, it is not hard to see
that the size of the sketch is dominated by the codewords of the buckets, so, for example, a codebook for the 
Huffman code is minuscule in comparison.

Figure~\ref{fig:bucket_size}a shows the size of the total codeword length $L$ of a bucket as a function the number
of registers in the bucket~$B$. The values were computed by drawing one million random sketches
for each~$B$, with $n=2^{30}$, $m=2^{15}$, meaning $\lambda = 2^{15}$,
and encoding the buckets with the Huffman code. The solid line shows the mean and the dotted lines 
the minimum and maximum sizes. As the analysis of Section~\ref{sec:analysis} suggests, 
the size grows linearly in $B$, as do the minimum and maximum, meaning the size of the codeword array is determined by 
the number of registers in a bucket. The analysis also indicates that, due to the shape of the distribution as in 
Figure~\ref{fig:rankdistribution}, as the code is determined by a constant-width band around the peak around 
$r^* = \lceil \log_2 \lambda \rceil$, the bit size of the bucket is 
independent of $\lambda$, assuming $\lambda$ is sufficiently large that the left tail can emerge. If the
left tail does not emerge, then the size of the bucket is only smaller. Thus, as $\lambda\to\infty$, 
the size of the bucket is asymptotically constant, depending only on $B$.
Figure~\ref{fig:bucket_size}b shows this behavior with various values of $\lambda$ for three different values of $B$. 
The marker represents the mean, and the error bars represent the minimum and maximum sizes.

Thus, what one would do in practice, is choose a bit budget~$b_B$ for the codeword array, and then choose
$B$ such that there is sufficient slack that it is improbable for any bucket to overflow. 
We provide a rudimentary analysis here for 
the arbitrary case of $n=2^{30}$ and $m=2^{15}$ with values drawn from the random sketches of Figure~\ref{fig:bucket_size}a, 
without attempting to rigorously optimize the parameters, but just to get a sense of the practical implications.
We consider
three 
such budgets:
64 bits, 512 bits, and 1024 bits, corresponding to one machine word, one cache line, and two cache lines, respectively.
If we choose suitable values of $B$ based on the data from Figure~\ref{fig:bucket_size}a, 
we get values 10, 144, and 313, respectively.

In addition to this array for each bucket,
we need to store the minimum rank $r_{\min}$ and the count of minimum-rank registers $c_{\min}$
for each bucket, which take 6 and $\lceil \log_2 B \rceil$ bits, respectively.
We also need to store the global estimates $\hat n$ and $\hat{n}_{\text{old}}$, and the Huffman tree.
Unlike the theoretical description, we do not want to store local estimates because the 64 bits required for that would
be too much, but this is fine if we use the vanilla estimator~\cite{FlajoletFGM:2007} for the global $\hat{n}$,
as it is easily updated in constant time because it is simply a harmonic mean.
The Huffman tree takes only 127 bits if we just store the indicator bits for the leaves and the internal nodes.
Since there are $\left\lceil \frac{m}{B} \right\rceil$ buckets, 
the total size of the sketch is $\left\lceil \frac{m}{B} \right\rceil \cdot (b_B + 6 + \lceil \log_2 B \rceil) + 127$ bits.
The total sizes are shown in Table~\ref{tab:memoryusage}, under the \emph{small sketch} column.
It is obvious that the total size decreases as the bit budget increases, representing a tradeoff between
the size of the sketch and the update times.

For more streamlined updates, we could store the index into a logarithmic number of registers within a bucket,
requiring $\left(\lceil \log_2 B \rceil-1\right)\cdot \lceil \log_2 n_B \rceil$ extra bits per bucket,
and store the Huffman code as a pair of lookup tables. The encoding table would require
$(63+6)\cdot 64 = 4416$ bits, as we need to store the codeword and its length for each possible value $r$,
and the decoding table would be a hash table of size at least
$(63+6+6)\cdot 64 = 4800$ bits, as we need to store the codeword and its length as a key, and also
the value $r$ as a value. The total sketch sizes are given in Table~\ref{tab:memoryusage} under the \emph{big sketch} 
column.

As a final observation, it is possible to further reduce the size of the sketch in practice by observing that we are
using \emph{a lot of} slack here, and there are only very few buckets that actually need the bit budget we are using.
Intuitively, the overflows occur when (usually an individual) surprisingly large rank value ends up in a bucket, 
requiring a very long codeword.
It might be possible to use an adaptive scheme where most buckets are stored with fewer bits, and we allocate a handful
of \emph{big buckets} with larger sizes that can accommodate the small number of overflowing buckets.

\begin{table}[t]
    \caption{Memory usage and Memory-Variance Product (MVP) of the sketch at $m=2^{15}$ for different bit budgets.}
    \label{tab:memoryusage}
    \begin{tabular}{rrrrrr}
        Bit budget & $B$ & Total small sketch size & Total big sketch size & MVP (small) & MVP (big) \\\hline
        64 bits    & 10  & 242679 bits             & 310800 bits           & 7.961       & 10.196 \\
        512 bits   & 144 & 119657 bits             & 143111 bits           & 3.926       & 4.695 \\
        1024 bits  & 313 & 108311 bits             & 125784 bits           & 3.553       & 4.127 \\
    \end{tabular}
\end{table}

\subsection{Memory-Variance Product}
The behavior of sketches such as HLL is often measured in terms of the Memory-Variance Product (MVP)~\cite{PettieW:2021}, 
which is the product of the memory usage and the relative variance of the estimator $\Var\left[\frac{\hat n}{n}\right]$.
The analysis of Section~\ref{sec:analysis} is asymptotic and ignores constants, so it does not directly give us
a meaningful result, apart from showing that the MVP is necessarily $O(1)$.
However, we can provide some rudimentary evidence that the algorithm is practical in these terms by modeling its
behavior numerically.

As we are not analyzing the estimators in this paper, we shall assume the relative variance is bounded by 
$1.075/m$~\cite{WangP:2023}.  As we observed in Section~\ref{sec:implementationconcerns}, 
the size of the bucket is determined solely by the choice of $B$, independent of $\lambda$, 
and the size of the sketch is dominated by the codeword array, so asymptotically,
the sketch size is $O(m)$ and we can compute the constants numerically by fixing the bit budget for the bucket.
This enables us to compute the MVP for different bit budgets, as shown in Table~\ref{tab:memoryusage}.
Comparing this to ExaLogLog~\cite{Ertl:2025}, 
even the simplistic scheme presented here without fully optimizing parameters and implementation details,
we can achieve an MVP that is comparable to the MVP of 3.67 of ExaLogLog without using
the extra information of the FM85 matrix and with the possibility of applying our framework on the framework
of~\cite{Ertl:2025}.

It should be noted that this is not a rigorous or complete analysis but a 
back-of-the-envelope calculation, giving confidence that the sketch is practical in terms of the MVP. 
A more rigorous analysis of the exact memory usage and the MVP is left for future work.

\section*{Acknowledgements}
I want to thank Rasmus Pagh and Martin Aum\"uller for helpful discussions and comments.
I also want to thank Tam\'as Sarl\'os for good ideas and the wonderful analogy of pulling ourselves up from the swamp by our hair, like Baron von M\"unchhausen.
Lastly, I want to thank my master students Labiba Karar Eshaba and Yasamin Fazelidehkordi, as some of the ideas
leading to this paper originated from our discussions on their master thesis project on a related but distinct
approach to cardinality estimation.

\bibliographystyle{alpha}
\bibliography{references}

\newpage

\appendix

\begin{center}
    \Huge Appendix
\end{center}

\section{Tail bounds for the distribution}
\label{app:tailbounds}
We shall prove the following bounds on the tails of the distribution, without claiming originality for the results.
\begin{lemma}
\label{lem:rankrighttail}
    Let $r^*=\lceil\log_2\lambda\rceil$. For any integer $k\geq 0$,
    we have $\Pr[R_j>r^*+k] \leq 2^{-k}$.
\end{lemma}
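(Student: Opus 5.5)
We use the closed form from Section~\ref{sec:distribution}, namely $\Pr[R_j>r] = 1-e^{-\lambda 2^{-r}}$ for every integer $r\geq 0$, together with the elementary inequality $1-e^{-x}\leq x$, valid for all $x\geq 0$. Both ingredients are already available: the first was derived from the Poisson model, and the second follows from convexity of $e^{-x}$ (its tangent line at $0$ lies below it). The proof is then a short chain of inequalities.

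First, we note that $r^*+k\geq 0$, so the closed form applies. When $\lambda\geq 1$ this holds because $r^*\geq 0$. When $\lambda<1$ the value $r^*$ can be negative, so we must either restrict to $\lambda\geq 1$ or treat $\Pr[R_j>r]=1$ for negative $r$ separately. Since $r^*=\lceil\log_2\lambda\rceil$, we have $2^{r^*}\geq\lambda$, that is, $\lambda 2^{-r^*}\leq 1$. Substituting $r=r^*+k$ gives
\[
\Pr[R_j>r^*+k] = 1-e^{-\lambda 2^{-r^*}2^{-k}} \leq \lambda 2^{-r^*}\,2^{-k} \leq 2^{-k}.
\]

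The only real subtlety is this boundary issue for small $\lambda$ (with $\lambda=0$ as a degenerate sub-case). If $r^*+k<0$, the left-hand side equals $1$ because $R_j\geq 0$ always. The right-hand side is $2^{-k}$, which is at most $1$ for $k\geq 0$, so the bound can fail: for example, $\lambda=1/4$ and $k=0$ give $r^*=-2$, and the claim would require $1\leq 1$, which holds, but the case needs to be checked explicitly. For $r^*+k\geq 0$ the chain above applies directly. For $r^*+k<0$ we have $\Pr[R_j>r^*+k]=1$ while $2^{-k}\leq 1$, so the inequality is false whenever $k>0$. The cleanest fix is to state the lemma under the standing assumption $\lambda\geq 1$, which makes $r^*\geq 0$. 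Alternatively, one can replace $r^*$ by $\max\{r^*,0\}$, which keeps $\lambda 2^{-r^*}\leq 1$ whenever $\lambda\leq 1$. We will note this restriction explicitly; it is harmless for the uses later in the paper, where $\lambda$ is large.
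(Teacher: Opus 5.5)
Your argument is the paper's: with $\theta=\lambda 2^{-r^*}\le 1$, the closed form for $\Pr[R_j>r]$ together with $1-e^{-x}\le x$ gives $\Pr[R_j>r^*+k]=1-e^{-\theta 2^{-k}}\le \theta 2^{-k}\le 2^{-k}$. Your boundary caveat is correct and is something the paper's proof skips silently: the closed form only holds for $r\ge 0$, and the statement genuinely fails for small $\lambda$ (e.g.\ $\lambda=1/4$, $k=1$ gives left side $1$ and right side $1/2$, whereas your own example with $k=0$ shows no failure); restricting to $\lambda\ge 1$ is harmless, since the paper's later use of the lemma, in the right-tail mass bound, has $\lambda\in I_k$ with $k\ge 1$.
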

\begin{proof}
    Let $\theta = \lambda 2^{-r^*}\leq 1$.
    Thus, $\Pr[R_j>r^*+k] = 1 - e^{-\theta 2^{-k}} \leq \theta 2^{-k} \leq 2^{-k}$.
\end{proof}
\begin{lemma}
    \label{lem:ranklefttail}
    Assume $\lambda > 1$, and let $r^*=\lceil\log_2\lambda\rceil$. For any integer $k\geq 0$,
    we have $\Pr[R_j\leq r^*-k] \leq e^{-2^{k-1}}$.
\end{lemma}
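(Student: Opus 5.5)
The bound follows directly from the closed form $\Pr[R_j\leq r] = e^{-\lambda 2^{-r}}$ from Section~\ref{sec:distribution}, mirroring the proof of Lemma~\ref{lem:rankrighttail}. Let $\theta = \lambda 2^{-r^*}$. Because $r^* = \lceil \log_2\lambda\rceil$, we have $\log_2\lambda \leq r^* < \log_2\lambda + 1$, so $2^{r^*} < 2\lambda$ and therefore $\theta > 1/2$. Applying the CDF formula at $r = r^*-k$ gives
\[
\Pr[R_j \leq r^*-k] = e^{-\lambda 2^{-(r^*-k)}} = e^{-\theta 2^{k}} \leq e^{-2^{k-1}} .
\]

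The main obstacle is that the CDF formula was stated for $r\geq 1$, with the case $r=0$ listed separately, while $r^*-k$ may be zero or negative. If $r^*-k = 0$, the separately stated value $\Pr[R_j\leq 0] = e^{-\lambda}$ matches the general formula $e^{-\lambda 2^{0}}$, so the same calculation applies. If $r^*-k<0$, the probability is $0$ because ranks are nonnegative, and the bound holds trivially.

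The assumption $\lambda>1$ guarantees $r^*\geq 1$, so $r^*$ is a genuine rank at which the distribution has its bulk. The inequality $\theta>1/2$ holds for any $\lambda>0$, so no further use of the assumption is needed. I expect the only care required is this boundary bookkeeping, not any estimate.
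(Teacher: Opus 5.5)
Your proof is correct and is essentially the paper's argument: with $\theta=\lambda 2^{-r^*}>1/2$, the CDF gives $\Pr[R_j\le r^*-k]=e^{-\theta 2^{k}}\le e^{-2^{k-1}}$, and deriving $\theta>1/2$ from the ceiling, as you do, is the right justification (the paper attributes it to $\lambda>1$ and $r^*\ge 1$). Your boundary bookkeeping for $r^*-k\le 0$ is slightly more careful than the paper, which writes the equality $\Pr[R_j\le r^*-k]=e^{-\theta 2^{k}}$ even when $r^*-k<0$; there the probability is $0$, so the formula is only an upper bound.
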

\begin{proof}
    Let $\theta = \lambda 2^{-r^*}$. 
    We have $\theta 2^{k} \geq 2^{k-1}$ because $\lambda>1$ and $r^* \geq 1$.
    Thus, $\Pr[R_j\leq r^*-k] = 
    e^{-\theta 2^k} \leq e^{-2^{k-1}}$.
\end{proof}

\section{Proof of Proposition~\ref{prp:mode}}
\label{app:modeproof}

Let us start by considering the 
continuous analog of the distribution by replacing $r$ with a continuous variable $x$. 
Define $g : \mathbb{R}\to\mathbb{R}$ such that
$g(x) = e^{-x}(1-e^{-x})$.
\begin{lemma}
    \label{lem:unimodalityofg}
    The function $g$ is unimodal, with a unique maximum at $x=\ln 2$.
\end{lemma}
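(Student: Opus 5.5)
The plan is to substitute $u = e^{-x}$, which turns $g$ into a concave quadratic. The map $x \mapsto u = e^{-x}$ is a strictly decreasing bijection from $\mathbb{R}$ onto $(0,\infty)$, and under it
\[
g(x) = u(1-u) = \tfrac14 - \left(u - \tfrac12\right)^2 .
\]
The quadratic $q(u) = u - u^2$ is strictly increasing on $(0,\tfrac12]$, strictly decreasing on $[\tfrac12,\infty)$, and attains its unique maximum $\tfrac14$ at $u = \tfrac12$.

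Composing with the strictly decreasing map $x \mapsto e^{-x}$ reverses the monotonicity intervals. The value $u = \tfrac12$ corresponds to $x = \ln 2$, the interval $u \in (0,\tfrac12]$ corresponds to $x \in [\ln 2, \infty)$, and $u \in [\tfrac12,\infty)$ corresponds to $x \in (-\infty, \ln 2]$. So $g$ is strictly increasing on $(-\infty,\ln 2]$ and strictly decreasing on $[\ln 2,\infty)$, and its unique global maximum is $g(\ln 2) = \tfrac14$. This is exactly unimodality with the mode at $\ln 2$.

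As a cross-check, I would also verify this by differentiation. We have
\[
g'(x) = -e^{-x} + 2e^{-2x} = e^{-x}\left(2e^{-x} - 1\right),
\]
and since $e^{-x} > 0$, the sign of $g'$ is the sign of $2e^{-x} - 1$. This is positive for $x < \ln 2$, zero at $x = \ln 2$, and negative for $x > \ln 2$, which confirms the same monotonicity pattern.

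There is no real obstacle in this lemma. The only care needed is to state that monotonicity is strict on each side, so that the maximum is unique. That strictness is what the later discrete argument will use, where $p_r = g(x_r)$ with $x_r = \lambda 2^{-r}$ decreasing in $r$.
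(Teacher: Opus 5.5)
Your proof is correct. Your differentiation cross-check is the paper's entire proof: the paper writes $g'(x)=e^{-2x}(2-e^{x})$, which is the same expression as your $e^{-x}(2e^{-x}-1)$, and reads off its sign on either side of $\ln 2$.

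Your primary argument takes a different, derivative-free route. You substitute $u=e^{-x}$, which turns $g$ into the concave quadratic $\tfrac14-(u-\tfrac12)^2$ composed with a strictly decreasing bijection. Beyond the monotonicity pattern, this gives you the maximum value $g(\ln 2)=\tfrac14$ explicitly. That is exactly the bound $p_1\le\tfrac14$ the paper uses without justification in Lemma~\ref{lem:modeat0}.

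Both routes give strict monotonicity on each side of $\ln 2$, which is what the discrete mode argument needs.
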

\begin{proof}
    We have $g'(x) = e^{-2x}(2-e^x)$. The only critical point is at $x=\ln 2$, which is a maximum,
    since $g'(x)>0$ for $x<\ln 2$ and $g'(x)<0$ for $x>\ln 2$.
\end{proof}

Let us consider the special case where the mode is at $r=0$ and the entire sequence is strictly decreasing.
\begin{lemma}
    \label{lem:modeat0}
     If $x_1=\frac{\lambda}{2}\leq \ln 2$, then the mode of the distribution is at $r=0$,
     and the entire sequence $p_r$ is strictly decreasing.
\end{lemma}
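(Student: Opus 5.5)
The plan is to treat the two parts of the sequence separately. For $r\geq 1$ we have $p_r=g(x_r)$, and Lemma~\ref{lem:unimodalityofg} controls these terms. The single term $p_0=e^{-\lambda}$ does not have the form $g(x_r)$, so it must be compared with $p_1$ by hand. Throughout we assume $\lambda>0$; for $\lambda=0$ the distribution is a point mass at $0$ and the claim is trivial.

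\emph{Step 1: the tail $r\geq 1$.} Since $x_r=\lambda 2^{-r}$, the sequence $(x_r)_{r\geq 1}$ is strictly decreasing and positive. The hypothesis gives $x_r\leq x_1\leq \ln 2$ for every $r\geq 1$. By Lemma~\ref{lem:unimodalityofg}, $g'(x)=e^{-2x}(2-e^x)>0$ on $(-\infty,\ln 2)$, so $g$ is strictly increasing on $(0,\ln 2]$. Composing a strictly increasing function with a strictly decreasing sequence gives $p_1>p_2>p_3>\cdots$.

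\emph{Step 2: comparing $p_0$ and $p_1$.} Substitute $u=e^{-\lambda/2}\in(0,1)$. Then $p_0=e^{-\lambda}=u^2$ and $p_1=e^{-\lambda/2}(1-e^{-\lambda/2})=u(1-u)$. Hence $p_0-p_1=u(2u-1)$, which is nonnegative exactly when $u\geq \tfrac12$. The condition $u\geq\tfrac12$ is equivalent to $\lambda/2\leq\ln 2$, which is precisely the hypothesis $x_1\leq\ln 2$.

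\emph{Step 3: combining, and the main obstacle.} Steps 1 and 2 together give $p_0\geq p_1>p_2>\cdots$, so $r=0$ is a mode. The only delicate point is strictness at the boundary. When $x_1<\ln 2$ we get $p_0>p_1$ and the whole sequence is strictly decreasing. At $x_1=\ln 2$, i.e.\ $\lambda=2\ln 2$, the computation gives $p_0=p_1=\tfrac14$, so the sequence ties at its first step. In the written proof I would treat this case explicitly. Either the hypothesis is read as $x_1<\ln 2$ for the word ``strictly'', or the statement is read as: $0$ is a mode, and the sequence is strictly decreasing from $r=1$ onward with $p_0\geq p_1$. Apart from this boundary case, everything follows directly from the monotonicity in Lemma~\ref{lem:unimodalityofg}.
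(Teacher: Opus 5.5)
Your proof is correct and follows essentially the paper's route. Both use the monotonicity of $g$ on $(0,\ln 2]$ for the terms $r\ge 1$ and then compare $p_0$ with $p_1$ directly; the only difference is that the paper sandwiches $p_0=e^{-\lambda}\ge \tfrac14\ge p_1$, using that $\tfrac14=g(\ln 2)$ is the maximum of $g$, where you use the exact factorization $p_0-p_1=u(2u-1)$. Your boundary remark matches the paper, whose proof likewise only yields $p_0\ge p_1>p_2>\cdots$ and explicitly allows a joint mode at $r\in\{0,1\}$. Note also that the degenerate case $\lambda=0$, which gives the sequence $1,0,0,\ldots$, is not strictly decreasing either, so it is not quite ``trivial'' for the strictness claim.
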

\begin{proof}
Consider the case $p_0\geq p_1 > p_2$, that is, the mode is at $r=0$, or jointly at $r=0$ and $r=1$.
This can happen only if $x_1=\frac{\lambda}{2}\leq \ln 2$, that is, $\lambda \leq 2\ln 2$.
As $\lambda\leq 2\ln 2$, we have $p_0 = e^{-\lambda} \geq e^{-2\ln 2} = \frac{1}{4}$, 
and $p_1 = e^{-\lambda/2}(1-e^{-\lambda/2}) \leq \frac{1}{4}$, so this condition guarantees that the mode is at $r=0$,
and the sequence is strictly decreasing after $p_1$.
\end{proof}

We can therefore now assume that $p_1>p_0$ and the mode is at $r\geq 1$.
Let us recall the statement of Proposition~\ref{prp:mode} and prove it.
\prpmode*
\begin{proof}
Let $r'$ be the least integer, such that $x_{r'} \leq \ln 2$, the maximum of $g$.
This is the ``crossing index'' where the distribution transitions from increasing to decreasing.
By the properties of $g$, we have that $p_r$ is strictly increasing when $r<r'$,
and strictly decreasing when $r > r'$. There is a potential special case if $x_r=\ln 2$,
as the sequence is then decreasing until $x_r$, but regardless of this, 
the sequence is nonincreasing for $r\geq r'$.
Therefore, the mode is either at $r'$ or at $r'-1$, whichever leads $x_r$ to be closer to $\ln 2$.

By the definition of $r^* = \lceil\log_2\lambda\rceil$, 
we have $x_{r^*} = \lambda 2^{-r^*} \leq 1$.
Furthermore, $x_{r^*-1} = \lambda 2^{-(r^*-1)}=2\frac{\lambda}{2^{r^*}}$, and
$x_{r^*+1} = \lambda 2^{-(r^*+1)}=\frac{1}{2}\frac{\lambda}{2^{r^*}}$.
Therefore, $\frac{1}{2} < x_{r^*} \leq 1$,
$1 < x_{r^*-1} \leq 2$, and
$\frac{1}{4}< x_{r^*+1} \leq \frac{1}{2}$.
Furthermore, $\frac{1}{2} < \ln 2 < 1$, 
so the crossing index must satisfy $r^* \leq r' \leq r^*+1$.
The mode must satisfy that it is either at $r'$ or $r'-1$.
Putting this together, we have that the mode is either at $r^*-1$, $r^*$, or $r^*+1$.
The result is thus proven with the special case of mode at zero handled by Lemma~\ref{lem:modeat0}.
\end{proof}

\section{Proof of Proposition~\ref{prp:combinedestimates}}
\label{app:combinedestimatesproof}
Let us recall the statement of Proposition~\ref{prp:combinedestimates} and prove it.
\prpcombinedestimates*
\begin{proof}
    We already established in Proposition~\ref{prp:bucketdistribution} that the distribution of the register values in a 
    bucket is the same as the distribution of the register values in the full sketch.
    This means that the global cardinality estimate~$\hat{n}=\sum_{b=1}^B \hat{n}_b$ 
    is simply a more accurate estimate of the latent cardinality~$n$.
    By the properties of the vanilla HLL estimator~\cite{FlajoletFGM:2007}, 
    each $\hat{n}_b$ has a relative standard error bounded by $O(1/\sqrt{B})$.
    Since we have an expected $n/(m/B)$ elements in the bucket,
    this implies that 
    \[
        \frac{\sqrt{\Var[\hat{n}_b]}}{\frac{n}{\frac{m}{B}}} =
        \frac{\sqrt{\Var[\hat{n}_b]}}{\frac{nB}{m}} =
        \frac{\sqrt{\Var[\hat{n}_b]}}{n}\cdot \frac{m}{B}
        \leq  \frac{K}{\sqrt{B}}
    \]
    for some constant~$K$ (1.04 according to~\cite{FlajoletFGM:2007}).
    Solving for $\Var[\hat{n}_b]$ gives $\Var[\hat{n}_b] \leq \frac{K^2 n^2 B}{m^2}$.
    Assuming independence of the bucketwise estimates, we have 
    $\Var[\hat{n}] = \sum_{b=1}^{m/B} \Var[\hat{n}_b] \leq \frac{K^2 n^2 B}{m^2}\cdot \frac{m}{B} = \frac{K^2 n^2}{m}$.
    Thus, the relative standard error of $\hat{n}$ is $\frac{\sqrt{\Var[\hat{n}]}}{n} \leq \frac{K}{\sqrt{m}}$.
\end{proof}

\section{Proof of Proposition~\ref{prp:bucketlength}}
\label{app:bucketlengthproof}

The proof proceeds in two parts. First, we show that the entropy of the register values is $O(1)$. This
implies that the expected codeword length of the Huffman code is also $O(1)$, from which we can bound the total size of the bucket array 
by $O(\log n)$, as claimed.

\subsection{Entropy}
We start by proving that the entropy of the register values is $O(1)$.
The proof is presented here for completeness, and we do not claim originality for the result,
as it is well known~\cite{Durand:2004,PettieW:2021}.
The following bound on $x_r$ is useful for bounding $\log p_r$.
\begin{lemma}
    \label{lem:xrbound}
    For $r\geq r^*=\lceil\log_2\lambda\rceil$, 
    $\Pr[R_j=r]=p_r$ satisfies $p_r\geq \frac{x_r}{2e}=\frac{\lambda 2^{-r}}{2e}$.
\end{lemma}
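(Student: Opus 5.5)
We need $p_r = e^{-x_r}(1-e^{-x_r}) \geq x_r/(2e)$ for every $r\geq r^*$. The plan is to bound the two factors of $p_r = g(x_r)$ separately, using only that $x_r$ lies in a bounded interval once $r\geq r^*$.

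First, we locate $x_r$. By the definition $r^*=\lceil\log_2\lambda\rceil$ we have $2^{r^*}\geq \lambda$, so $x_{r^*}=\lambda 2^{-r^*}\leq 1$, as already observed in the proof of Proposition~\ref{prp:mode}. Since $x_r=\lambda 2^{-r}$ is decreasing in $r$, it follows that $0< x_r\leq 1$ for all $r\geq r^*$. (If $\lambda\leq 1$, so that $r^*\leq 0$, this still holds for all $r\geq 0$ in the range considered, because $x_0=\lambda\leq 1$. For $r=0$ the formula for $p_0$ is $e^{-\lambda}$ rather than $g(x_0)$, but $e^{-\lambda}\geq e^{-1}\geq \lambda/(2e)$ then holds trivially.)

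Next, we bound the two factors on $[0,1]$. The first factor satisfies $e^{-x_r}\geq e^{-1}$ because $x_r\leq 1$. For the second factor we use concavity: $1-e^{-x}$ is concave with value $0$ at $0$ and value $1-e^{-1}$ at $1$, so on $[0,1]$ it lies above its chord, giving $1-e^{-x}\geq (1-e^{-1})x\geq x/2$, since $1-e^{-1}\approx 0.632>1/2$. An alternative route is the inequality $1-e^{-x}\geq x-x^2/2\geq x/2$ for $x\in[0,1]$.

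Multiplying the two bounds gives $p_r\geq e^{-1}\cdot x_r/2 = \lambda 2^{-r}/(2e)$. There is no real obstacle here. The only care needed is the boundary case where $r^*\leq 0$ or $r=0$, in which $p_0$ has the different form $e^{-\lambda}$; this is handled by the remark in the first step.
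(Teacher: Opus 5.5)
Your proposal is correct and follows essentially the same route as the paper: use $x_r\le 1$ for $r\ge r^*$ to get $e^{-x_r}\ge 1/e$ and $1-e^{-x_r}\ge x_r/2$, then multiply the two bounds. Your concavity (chord) justification of the second inequality and your separate treatment of $p_0=e^{-\lambda}$ when $\lambda\le 1$ are details the paper leaves implicit.
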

\begin{proof}
    For $r\geq r^*$, we have $e^{-x_r}\geq 1/e$. Also, $1-e^{-x_r} \geq \frac{x_r}{2}$.
    Putting these together gives $p_r\geq \frac{x_r}{2e}$.
\end{proof}
We can now prove the entropy bound.
\begin{lemma}
    \label{lem:entropy}
    The entropy of the register value $R_j$ is $H(R_j)=-\sum_{r=0}^{\infty} p_r \log_2 p_r = O(1)$.
\end{lemma}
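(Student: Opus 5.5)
The entropy sum splits at $r^*=\lceil\log_2\lambda\rceil$ into a right part ($r\geq r^*$) and a left part ($r<r^*$), and I will bound each by a constant independent of $\lambda$. If $\lambda\leq 1$, then $r^*\leq 0$ and only the right part and $r=0$ matter; the small-$\lambda$ case (including Lemma~\ref{lem:modeat0}) is handled by the same right-tail argument with $r^*$ replaced by $\max(r^*,0)$. The term $p_0=e^{-\lambda}$ contributes at most $\max_{p}(-p\log_2 p)=1/(e\ln 2)$.

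\textbf{Right part.} For $r\geq r^*$, write $r=r^*+k$ with $k\geq 0$. Lemma~\ref{lem:xrbound} gives
\[
-\log_2 p_r\leq \log_2(2e)+r-\log_2\lambda\leq \log_2(2e)+k+1,
\]
using $r^*-\log_2\lambda<1$. For the weights I use $p_r\leq \Pr[R_j>r-1]=\Pr[R_j>r^*+k-1]\leq 2^{-(k-1)}$ from Lemma~\ref{lem:rankrighttail} (trivially $p_r\leq 1$ for $k=0$). Hence
\[
\sum_{r\geq r^*}p_r(-\log_2 p_r)\leq \sum_{k\geq 0}2^{1-k}\bigl(\log_2(2e)+k+1\bigr)=O(1).
\]

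\textbf{Left part.} For $1\leq r<r^*$ (this needs $\lambda>1$), write $r=r^*-k$ with $k\geq 1$. I avoid lower-bounding $p_r$ directly and instead use the monotonicity of $\phi(p)=-p\log_2 p$, which is increasing on $[0,1/e]$. Lemma~\ref{lem:ranklefttail} gives $p_r\leq \Pr[R_j\leq r^*-k]\leq e^{-2^{k-1}}$. For $k\geq 2$ this is below $1/e$, so $\phi(p_r)\leq e^{-2^{k-1}}\cdot 2^{k-1}\log_2 e$, which is summable in $k$. The finitely many remaining terms ($k=1$ and $r=0$) are each bounded by $1/(e\ln 2)$.

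The main obstacle is the left tail: a direct lower bound on $p_r$ there is doubly exponentially small, and the corresponding $\log$ factor is large. The monotonicity trick on $\phi$ is what keeps this under control, together with careful bookkeeping of edge cases (small $\lambda$, and $k=0,1$).
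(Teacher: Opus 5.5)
Your proof is correct and follows essentially the paper's route. You split at $r^*$ and bound the right tail by pairing the logarithmic bound from Lemma~\ref{lem:xrbound} with a geometric weight bound (the paper uses $p_r\le x_r$, you use the factor-two-looser $p_r\le 2^{1-k}$). You bound the left tail via Lemma~\ref{lem:ranklefttail} and monotonicity of $-p\log p$ on $[0,1/e]$, exactly as the paper does, and your separate handling of $p_0$ is cleaner than the paper's. One caveat: for $\lambda\le 1/2$, keep $k=r-r^*$ relative to the true (negative) $r^*$ and only truncate the sum to $r\geq 1$. Literally resetting $r^*$ to $0$ would break $r^*-\log_2\lambda<1$, and your bound would then grow like $\log_2(1/\lambda)$.
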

\begin{proof}
    Consider
    \begin{equation}
        \label{eq:entropysplit}
        H(R_j) = \sum_{r=0}^\infty -p_r \log_2 p_r = (\log_2 e)\left( \sum_{r=0}^{r^*-1} -p_r \ln p_r + \sum_{r=r^*}^\infty -p_r \ln p_r
        \right) \, .
    \end{equation}

    We need to bound the two sums of Equation~\eqref{eq:entropysplit} separately.
    For the right-hand tail, since $1-e^{-x_r} \leq x_r$, we have $p_r \leq x_re^{-x_r}\leq x_r$.
    By Lemma~\ref{lem:xrbound}, $p_r \geq \frac{x_r}{2e}$.
    Thus, for $r\geq r^*$, we have
    \begin{equation}
        \label{eq:entropyprbounds}
        \frac{x_r}{2e} \leq p_r \leq x_r \, .
    \end{equation}

    Putting Equation~\eqref{eq:entropyprbounds} into the right-hand sum of Equation~\eqref{eq:entropysplit},
    setting $\theta=\lambda2^{-r^*} \leq 1$, so for $r=r^*+k$ we 
    have $x_r = \theta 2^{-k}$, 
    we get
    \begin{equation}
        \label{eq:entropyrighttail}
        \begin{split}
        -\sum_{r=r^*}^\infty p_r \ln p_r & \leq -\sum_{k=0}^\infty x_r \ln \frac{x_r}{2e} 
        = -\sum_{k=0}^\infty x_r \left(\ln x_r - \ln 2 - 1\right) \\
        & = -\sum_{k=0}^\infty x_r \ln x_r + \sum_{k=0}^\infty \left(\ln 2 + 1\right) \\
        & = -\sum_{k=0}^\infty \theta 2^{-k} \ln \left(\theta 2^{-k}\right) + \sum_{k=0}^\infty \theta 2^{-k}\left(\ln 2 + 1\right) \\
        & = -\theta \sum_{k=0}^\infty 2^{-k}\left(\ln \theta -k\ln 2\right) + \theta (\ln 2 + 1)\sum_{k=0}^\infty 2^{-k} \\
        & = -2\theta \ln \theta + 2\theta\ln 2 + 2\theta\ln 2 + 2 \theta = -2\theta \ln \theta + 4\theta\ln 2 + 2 \theta \, .
        \end{split}
    \end{equation}
    Since $-\theta\ln\theta\leq 1/e$ and $0\leq\theta\leq 1$, Equation~\eqref{eq:entropyrighttail} is 
    bounded by $\frac{2}{e}+4\ln 2+2<6$.

    For the left-hand tail and the left sum of Equation~\eqref{eq:entropysplit}, 
    we have $p_r\leq e^{-2^{k-1}}\leq 1/e$ for $r\leq r^*$,
    hence $-p_r\ln p_r \leq 2^{k-1} e^{-2^{k-1}}$. Thus,
    \[
        -\sum_{r=0}^{r^*-1} p_r \ln p_r \leq \sum_{k=1}^\infty 2^{k-1} e^{-2^{k-1}} < 1 \, .
    \]
    Putting this together, we have $H(R_j) < (\log_2 e)(6 + 1) = O(1)$.
\end{proof}
\begin{remark}
Numerically, $H(R_j)$ goes pretty fast to $\approx 2.83196$ for fixed $m$ as $n\to\infty$.
\end{remark}

\subsection{Bucket size}

We shall now prove that the size of the codeword array in each bucket is $O(B)=O(\log n)$ bits with high probability.
Let $L_j$ be the random variable for the length of the codeword for the $j^\textrm{th}$ register in the bucket, 
and let $L=\sum_{j=1}^B L_j$ be the total length of the codewords in the bucket. We need the following lemma
on the bound of the worst-case length of $L_j$. Correspondingly, let $L(r)$ be the length of the codeword encoding 
the rank value~$r$, that is, $L_j=L(R_j)$.
\begin{lemma}[\cite{AbuMostafaM:2000}, paraphrased]
    \label{lem:abumostafa}
    Suppose $L_j$ is the length of a Huffman codeword encoding a symbol with probability $p_r$.
    Then, $L_j \leq -1.44041\log_2 p_r$ in the worst case.
\end{lemma}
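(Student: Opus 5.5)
The plan is the classical Fibonacci argument for the depth of a leaf in a Huffman tree. It bounds the depth $\ell=L(r)$ of the leaf for a rank $r$ with probability $p=p_r$ by comparing the weights along its root-to-leaf path with Fibonacci numbers. The constant in the statement is $1.44041\approx 1/\log_2\varphi$, where $\varphi=(1+\sqrt5)/2$, so the target is exactly $\ell\le\log_\varphi(1/p)$, which is equivalent to $\varphi^{\ell}\le 1/p$. I will be explicit below that the generic argument seems to stop one step short of this, so the additive-constant-free form is where the real work is.

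\textbf{Step 1 (sibling monotonicity).} Write the path from the root as $v_0,v_1,\dots,v_\ell$, with $v_\ell$ the leaf of weight $p$, and let $s_i$ be the sibling of $v_i$, so that $w(v_{i-1})=w(v_i)+w(s_i)$. I will show that $w(s_i)\ge \max\{w(v_{i+1}),w(s_{i+1})\}$ for $1\le i<\ell$. Consider the step at which $v_{i+1}$ and $s_{i+1}$ were merged. If $s_i$ already existed as a tree at that moment, then the greedy rule picked the two smallest trees, so $s_i$ is at least both of them. If $s_i$ was created later, then its weight is at least both of them because the weights of successively merged nodes in Huffman's algorithm are nondecreasing. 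Writing $a_k=w(v_{\ell-k})$, this gives $a_k\ge a_{k-1}+a_{k-2}$ for $k\ge 2$, with $a_0=p$ and $a_1=p+w(s_\ell)$.

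\textbf{Step 2 (Fibonacci growth).} From $a_0=p$ and $a_1\ge p$, induction gives $a_k\ge F_{k+1}p$ with $F_1=F_2=1$. Since $a_\ell=w(v_0)=1$, this yields $F_{\ell+1}p\le 1$. Then $F_{\ell+1}\ge\varphi^{\ell-1}$ gives $\ell\le 1+\log_\varphi(1/p)=1-1.44041\log_2 p$.

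\textbf{Step 3 (removing the additive slack), the main obstacle.} To reach $\varphi^\ell\le 1/p$ I need roughly one more Fibonacci index, i.e.\ $a_\ell\ge F_{\ell+2}p$. I would first try to exploit the start of the chain: use $w(s_{\ell-1})\ge\max\{p,w(s_\ell)\}$ together with $a_1=p+w(s_\ell)$ to get $a_2\ge 2p+w(s_\ell)$, then try to gain at the root, where $w(s_1)$ is also at least $w(v_2)$. Frankly, the generic argument resists this. A comb with leaf weights proportional to $\varepsilon,p,p,2p$ has the leaf $p$ at depth $3$ with $p\approx 1/4$, and $\log_\varphi 4<3$. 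So for arbitrary distributions the statement as written appears to need the form proved in~\cite{AbuMostafaM:2000}, which carries an $O(1)$ term. To establish it exactly as stated, I would therefore restrict to the HLL rank distribution $p_r=g(x_r)$ used in this paper. The left tail is doubly exponentially small and the right tail halves exactly, as in Appendix~\ref{app:tailbounds}. For such a distribution a deep leaf is never accompanied by a negligible sibling $\varepsilon$, and the extra gap $w(s_\ell)\gtrsim p/2$ in $a_1$ should supply the missing factor $\varphi$. If this also fails, I would fall back to recording the lemma with the additive constant. That version suffices downstream, since Proposition~\ref{prp:bucketlength} only uses $L_j=O(-\log p_{R_j})$.
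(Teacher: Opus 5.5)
Your diagnosis is correct, and it is the central point. As written (an arbitrary symbol, no additive constant) the statement is false, so no proof of it can exist. Your $\varepsilon,p,p,2p$ example is a valid counterexample: after $\varepsilon$ merges with one copy of $p$, the other copy merges with $p+\varepsilon$, leaving the first copy at depth $3$ with $p\approx 1/4$, whereas $\log_\varphi 4\approx 2.88$.

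Your Steps~1--2 correctly prove the per-symbol bound $\ell\le 1+\log_\varphi(1/p)$, which is the Katona--Nemetz form. The lost Fibonacci index is intrinsic to that setting, not a weakness of your argument.

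The paper gives no proof, only the citation to~\cite{AbuMostafaM:2000}. That result bounds codeword lengths (equivalently, the longest codeword) in terms of the \emph{smallest} source probability $p_{\min}$. In that setting your Step~3 closes at once. Every node weighs at least $p_{\min}$, so $a_0\ge p_{\min}$ and $a_1=a_0+w(s_\ell)\ge 2p_{\min}$. Hence $a_k\ge F_{k+2}\,p_{\min}$ and $\varphi^{\ell}\le F_{\ell+2}\le 1/p_{\min}$. The paper's paraphrase replaced $p_{\min}$ by the symbol's own probability $p_r$, and that substitution is exactly what your example breaks. The $p_{\min}$ version is also useless downstream, since here $p_{\min}$ can be as small as $e^{-\lambda}$.

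Drop the HLL-specific rescue, because its premise is false. For large $\lambda$ the two smallest weights are $p_0=e^{-\lambda}$ and $p_1\approx e^{-\lambda/2}$; the band and the truncated right tail are bounded below by constants (Lemma~\ref{lem:leftrighttailbandlowerbound}). So $p_0$ and $p_1$ are merged first. The leaf $p_1$, at the bottom of the left-tail comb, therefore has a sibling that is negligible relative to it, which is precisely the configuration you hoped to exclude. Whether a constant-free bound happens to hold for this distribution would need a different, distribution-specific argument, and nothing downstream requires it.

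Your fallback is the right resolution. Record what it actually gives, though, rather than ``$L_j=O(-\log p_{R_j})$'', which fails uniformly when $p_0$ is close to $1$. For $r\ge 1$ we have $p_r=g(x_r)\le 1/4$, so $1+\log_\varphi(1/p_r)\le 2\log_2(1/p_r)$. Via Lemma~\ref{lem:prkbound}, $L(r^*+k)\le 1+1.4405\,(k+\log_2(4e))\le 2k+2\log_2(4e)$ for all $k\ge 0$. That is all Equation~\eqref{eq:codewordlengthbound} and Lemma~\ref{lem:mgf} use.
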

We need the following lemma.
\begin{lemma}
    \label{lem:prkbound}
    For $r\geq r^*$, we have $p_{r^*+k} \geq \frac{2^{-k}}{4e}$. 
\end{lemma}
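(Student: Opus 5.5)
The plan is to reuse the two ingredients already established in the proof of Lemma~\ref{lem:xrbound}, specialised to the index $r=r^*+k$ with $k\geq 0$. First, set $\theta=\lambda 2^{-r^*}$. By the definition $r^*=\lceil\log_2\lambda\rceil$ we have $\theta\leq 1$. We also have $\theta>\frac12$, because $2^{r^*}<2\lambda$ (as $r^*<\log_2\lambda+1$). This requires $\lambda$ large enough that $r^*$ is the genuine ceiling; for $\lambda\leq 1$ with $r^*\le 0$ the same inequality $2^{r^*}<2\lambda$ still holds whenever $r^*=\lceil\log_2\lambda\rceil$. So in all cases
\[
\tfrac12<\theta\leq 1, \qquad x_{r^*+k}=\theta 2^{-k}.
\]

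Next, bound the two factors of $p_r=e^{-x_r}(1-e^{-x_r})$ separately. Since $x_{r^*+k}=\theta 2^{-k}\leq 1$, we get $e^{-x_r}\geq 1/e$. For the second factor, use the elementary inequality $1-e^{-x}\geq x/2$ on $[0,1]$. This follows from concavity of $1-e^{-x}$: the chord from $0$ to $1$ has slope $1-e^{-1}>\frac12$, so the function lies above $x/2$ on that interval. Combining the two bounds gives $p_{r^*+k}\geq \frac{x_{r^*+k}}{2e}=\frac{\theta 2^{-k}}{2e}$, which is exactly the bound of Lemma~\ref{lem:xrbound} at $r=r^*+k$.

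Finally, substitute $\theta>\frac12$ to obtain $p_{r^*+k}\geq \frac{2^{-k}}{4e}$, as claimed.

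The only genuinely delicate point is the lower bound $\theta>\frac12$, which turns the $\lambda$-dependent bound of Lemma~\ref{lem:xrbound} into the uniform bound $2^{-k}/(4e)$. I will check it directly from $\log_2\lambda\leq r^*<\log_2\lambda+1$ and the strictness of the ceiling. Note that $\theta=\frac12$ would require $r^*=\log_2\lambda+1$, which cannot happen. Everything else is routine.
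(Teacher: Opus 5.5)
Your proof is correct and is essentially the paper's argument: show $\theta=\lambda 2^{-r^*}\in(\tfrac12,1]$, apply the bound $p_r\ge x_r/(2e)$ of Lemma~\ref{lem:xrbound}, and use $x_{r^*+k}=\theta 2^{-k}\ge 2^{-(k+1)}$; you re-derive that lemma with a valid concavity argument for $1-e^{-x}\ge x/2$ on $[0,1]$. One small caveat concerns your aside on $\lambda\le 1$: the factorization $p_r=e^{-x_r}(1-e^{-x_r})$ holds only for $r\ge 1$, since $p_0=e^{-\lambda}$, so the case $r^*+k=0$ should use $p_0=e^{-\lambda}\ge 1/e$ directly, and negative indices are not meaningful (the paper glosses over this edge case as well).
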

\begin{proof}
    By the choice of $r^*$, we have $\frac{1}{2} < \frac{\lambda}{2^{r^*}}\leq 1$, so for $k\geq 0$,
    we have 
    $x_{r^*+k}=\frac{\lambda}{2^{r^{*}+k}} \geq \frac{1}{2} 2^{-k}=2^{-(k+1)}$. Combined with
    Lemma~\ref{lem:xrbound}, we have
    $p_r\geq \frac{x_{r^*+k}}{2e}=\frac{\lambda 2^{-{r^*+k}}}{2e} \geq 
    \frac{2^{-(k+1)}}{2e} = \frac{2^{-k}}{4e}$.
\end{proof}
Together, Lemmata~\ref{lem:abumostafa} and~\ref{lem:prkbound} imply for $k\geq 0$,
\begin{equation}
    \label{eq:codewordlengthbound}
    L(r^*+k) \leq -2\log_2 p_{r^*+k} \leq -2\log_2 \frac{2^{-k}}{4e} = 
    2k+2\log_2(4e) \, .
\end{equation}

It is well known (see, for example,~\cite[Chapter~5.7]{CoverT:2005}) that
Huffman codes are optimal in expectation, and $\E[L_j] \leq H(R_j) + 1$.
By Lemma~\ref{lem:entropy} and linearity of expectation, we thus have have $\E[L] = O(\log n)$.
However, we also need to show a stronger statement that $L$ is $O(\log n)$ with high probability.
We will do this by applying a standard Chernoff argument.
The Chernoff bound states the following:
Let $X_1,X_2,\ldots,X_n$ be independent and identically distributed random variables, 
with a shared moment-generating function $M(t)=\E[e^{t X_i}]$ that is finite for some $t>0$,
and let $X=\sum_{i=1}^n X_i$.
Then for any such $t>0$, we have
\begin{equation}
    \label{eq:chernoff}
    \Pr[X \geq a] \leq e^{-ta}\prod_{i=1}^n M(t) = e^{-ta} \left(\E[e^{t X_i}]\right)^n \, .
\end{equation}

To apply Equation~\eqref{eq:chernoff}, we need to establish that the
moment-generating function of $L_j$ is finite.
\begin{lemma}
    \label{lem:mgf}
    There exist constants $\tau>0$ and $M<\infty$,
    independent of $n$ and $m$, 
    such that $\E[e^{\tau L_j}] \leq M$.
\end{lemma}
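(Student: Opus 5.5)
We split $\E[e^{\tau L_j}]=\sum_{r\ge 0} p_r e^{\tau L(r)}$ into the right part $r\ge r^*$ and the left part $r<r^*$, and bound each by a constant independent of $\lambda$ (hence of $n$ and $m$). The only inputs are the worst-case codeword bound of Lemma~\ref{lem:abumostafa}, in the weakened form $L(r)\le -2\log_2 p_r$ already used in Equation~\eqref{eq:codewordlengthbound}, and the tail bounds of Appendix~\ref{app:tailbounds}. The basic observation is that for any symbol,
\[
p_r e^{\tau L(r)} \le p_r\, e^{-2\tau\log_2 p_r} = p_r^{\,1-2\tau/\ln 2},
\]
so once $\tau<\frac{\ln 2}{2}$ each term is a positive power of $p_r$ strictly less than one. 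It then suffices to show that $\sum_r p_r^{\beta}$ is bounded uniformly in $\lambda$ for some fixed $\beta<1$. We take $\tau=\frac{\ln 2}{4}$, so that $\beta=1/2$.

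For the right part, write $r=r^*+k$. Equation~\eqref{eq:codewordlengthbound} gives $L(r^*+k)\le 2k+2\log_2(4e)$, and Lemma~\ref{lem:rankrighttail} gives $p_{r^*+k}\le \Pr[R_j>r^*+k-1]\le 2^{-(k-1)}$. Hence
\[
\sum_{k\ge 0} p_{r^*+k}e^{\tau L(r^*+k)} \le (4e)^{2\tau/\ln 2}\cdot 2\sum_{k\ge0} 2^{-k}e^{2\tau k},
\]
which is a convergent geometric series because $e^{2\tau}<2$ (here $e^{2\tau}=\sqrt2$). Alternatively, the general power bound works directly, since $p_r\le x_r$ gives $\sum_k (\theta 2^{-k})^{1/2}=O(1)$.

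For the left part, assume $\lambda>1$; otherwise $r^*\le 0$ and there is no left part, and the degenerate case $r=0$ with $\lambda$ small contributes at most $1$ to the sum. Write $r=r^*-k$ with $k\ge1$. Lemma~\ref{lem:ranklefttail} gives $p_r\le e^{-2^{k-1}}$, so
\[
p_r^{1/2}\le e^{-2^{k-2}},
\]
and $\sum_{k\ge1}e^{-2^{k-2}}$ converges to an absolute constant.

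The main obstacle is the left tail, because there Equation~\eqref{eq:codewordlengthbound} gives no codeword-length bound. Codewords for left-tail ranks can be very long, roughly $2^{k}$ bits, because the left tail decays doubly exponentially. The trick that makes this work is to avoid bounding $L(r)$ by $k$ and instead use the Abu-Mostafa--McEliece bound as a bound relative to $-\log p_r$. The very long codewords are then paid for by their own tiny probabilities, and choosing $\tau$ below $\frac{\ln2}{2}$, with room to spare for the constant $1.44041<2$, leaves a positive power $p_r^{1/2}$ that is still summable. One should also check that the mode-region ranks, and $r=0$ when $r^*\ge1$, are covered by one of the two cases. The ranks $r\le r^*-1$ fall under Lemma~\ref{lem:ranklefttail} as $k\ge1$, and the total is at most a constant $M$ independent of $n$ and $m$.
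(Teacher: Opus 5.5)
Your proof is correct and follows the paper's skeleton: split at $r^*$, and handle $r\ge r^*$ through Equation~\eqref{eq:codewordlengthbound} and a geometric series with ratio $e^{2\tau}/2<1$. The real difference is the left part, and there your version is the more complete one.

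The paper only asserts that the sum over $r<r^*$ in Equation~\eqref{eq:mgf1} is finite. That does not give an $M$ independent of $n$ and $m$, because the sum has $r^*=\lceil\log_2\lambda\rceil$ terms and Equation~\eqref{eq:codewordlengthbound} says nothing about $L(r)$ for $r<r^*$. Your pointwise bound $p_r\,e^{\tau L(r)}\le p_r^{1-2\tau/\ln 2}$, combined with the doubly exponential decay of Lemma~\ref{lem:ranklefttail}, supplies exactly the missing uniformity. Since the same $p_r$ appears on both sides, you avoid pairing a lower bound on left-tail probabilities (to bound $L(r)$) with an upper bound (to bound the weight), and you need no appeal to the comb structure of Corollary~\ref{cor:bandpoolcomb}. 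Your argument also reduces the lemma to a uniform bound on $\sum_r p_r^{1/2}$, i.e.\ on the order-$\tfrac12$ R\'enyi entropy of $R_j$. That criterion transfers directly to other rank distributions with similar tails.

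You are also more careful on the right tail. The paper's displayed chain substitutes the \emph{lower} bound $p_{r^*+k}\ge 2^{-k}/(4e)$ of Lemma~\ref{lem:prkbound} where an upper bound on $p_{r^*+k}$ is required. Your $p_{r^*+k}\le 2^{-(k-1)}$ (or $p_r\le x_r$) is the right estimate and only changes constants.

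Two minor points. First, your heuristic that left-tail codewords may be about $2^k$ bits long only reflects what the $-1.44041\log_2 p_r$ bound permits. In the paper's comb picture the actual lengths grow roughly linearly in $k$, though nothing in your argument depends on this. Second, read literally, Lemma~\ref{lem:abumostafa} fails for $p_0=e^{-\lambda}$ close to $1$, since every codeword has length at least $1$. So for small $\lambda$ the $r=0$ term is bounded by a constant rather than by $1$; any additive constant in the true codeword-length bound merely rescales $M$.
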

\begin{proof}
    Fix $0<\tau<\frac{\ln 2}{2}$. We have
    \begin{equation}
        \label{eq:mgf1}
    \E[e^{\tau L_j}] = \sum_{r=0}^\infty e^{\tau L(r)} p_r 
    = \sum_{r=0}^{r^*-1} e^{\tau L(r)} p_r + \sum_{r=r^*}^\infty e^{\tau L(r)} p_r
    \, .
    \end{equation}
    The left-hand sum of Equation~\eqref{eq:mgf1} is finite, so it suffices to show that the right-hand sum is finite.
    By Equation~\eqref{eq:codewordlengthbound} and Lemma~\ref{lem:prkbound}, we have 
    \[
    \sum_{r=r^*}^\infty e^{\tau L(r)} p_r 
    = \sum_{k=0}^\infty e^{\tau L(r^*+k)} p_{r^*+k}
    \leq \sum_{k=0}^\infty e^{\tau (2k+2\log_2(4e))} \cdot \frac{2^{-k}}{4e}
    = \frac{e^{2\tau \log_2(4e)}}{4e} \sum_{k=0}^\infty \left(\frac{e^{2\tau}}{2}\right)^k < \infty \, ,
    \]
    since by our choice of $\tau$, we have $\frac{e^{2\tau}}{2} < 1$.
\end{proof}

Let us recall the statement of Proposition~\ref{prp:bucketlength} and prove it.
\prpbucketlength*
\begin{proof}
    Fix $\tau>0$ and $M$ such that $\E[e^{\tau L_j}] \leq M$, as guaranteed by Lemma~\ref{lem:mgf}.
    By the Chernoff bound of Equation~\eqref{eq:chernoff}, we have
    \[
    \Pr[L \geq s] \leq e^{-\tau s} \left(\E[e^{\tau L_j}]\right)^B
    \]
    Let $s=\Gamma\log n + t>0$ and $B=\alpha \log n$, so we have
    \[
    \Pr[L \geq \Gamma\log n + t] \leq 
    e^{-\tau (\Gamma\log n + t)} M^{\alpha \log n} \, .
    \]
    Choosing $\Gamma \geq \frac{\alpha\ln M+1}{\tau}$ yields by direct computation that
    \[
    \Pr[L \geq \Gamma\log n + t] \leq e^{-\tau t} \, ,
    \]
    proving the claim with $\gamma=\tau$.
\end{proof}

\section{Proof of Proposition~\ref{prp:huffmantreechanges}}
\label{app:huffmantreechangesproof}

This is a long and technical proof, so it has been broken into several subsections.
The main idea is that we establish that the tails of the Huffman tree are very rigid and become comb-like very fast, with a simple predefined merge order.
This implies that the only place where interesting tree changes can occur is within a narrow constant-width band around the mode.
We first analyze the right tail in isolation to establish the rigidity criterion. 
Then, we define the constant-width \emph{band} of large weights and the \emph{pool} of small weights.
The pool consists of the deep left and deep right tails.
We bound the mass of the left and right tails and show that the entire left tail is dominated by every element of the right tail;
similarly, the entire mass of the right tail is dominated by every element of the band.
Thus, the left tail has to collapse into a single node before it can interfere with the right tail,
and, likewise, the right tail has to collapse into a single node before it can interfere with the band.
As a consequence of Rolle's theorem, this implies that there can only be a constant number of tree changes, 
assuming $\lambda$ is constrained to a constant-width interval,
as the probability masses corresponding to subsets of nodes are finite-term sums of exponentials.
Once we have established the constant number of possible crossover points for a constant interval of $\lambda$,
the result follows by observing that there are only $O(\log n)$ such intervals.

\subsection{Basics}
Throughout this section, we assume that there exists an upper bound on the cardinality~$N$.
For fixed~$m$, it is obvious that the distribution defining the Huffman tree is unambiguously determined by~$n$.
Therefore, define $\lambda_n=\frac{n}{m}$, $x_r(\lambda)=\frac{\lambda}{2^r}$, and recall that
$p_0(\lambda)=e^{-\lambda}$ and $p_r(\lambda) = e^{-x_r(\lambda)}(1-e^{-x_r(\lambda)})$ for $r>0$.
Let $T(\lambda)$ denote the Huffman tree induced by the distribution $p_r(\lambda)$.
We shall analyze the set of \emph{change points} $\mathcal C_{N} = \{i \in [N-1] \mid T(\lambda_i)\neq T(\lambda_{i+1})\}$, 
that is, the set of indices where the Huffman tree changes. We aim to prove that $|\mathcal C_N|=O(\log N)$.

For $\lambda>0$, let $r^*(\lambda) = \lceil\log_2 \lambda\rceil$. Define the intervals
$I_0 = [0,1]$, and for $k\geq 1$, $I_k=(2^{k-1},2^k]$.
\begin{lemma}
    \label{lem:interval}
    If $k\geq 1$ and $\lambda\in I_k$, then $r^*(\lambda)=k$ and $2^{-k}\lambda\in (1/2,1]$.
\end{lemma}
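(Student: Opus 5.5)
The plan is to unfold the definitions directly; only monotonicity of $\log_2$ and of the ceiling function is needed. Fix $k\geq 1$ and $\lambda\in I_k=(2^{k-1},2^k]$. Since $\log_2$ is strictly increasing on $(0,\infty)$, the bounds $2^{k-1}<\lambda\leq 2^k$ transfer to $k-1<\log_2\lambda\leq k$.

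The ceiling of a real number $y$ is the unique integer $k$ with $k-1<y\leq k$. Applying this with $y=\log_2\lambda$ gives $r^*(\lambda)=\lceil\log_2\lambda\rceil=k$ immediately.

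For the second claim, multiply the defining inequalities of $I_k$ by $2^{-k}>0$. This gives $2^{-1}<2^{-k}\lambda\leq 1$, that is, $2^{-k}\lambda\in(1/2,1]$. Equivalently, $\theta=\lambda 2^{-r^*}$, as used in Lemmas~\ref{lem:rankrighttail} and~\ref{lem:prkbound}, lies in $(1/2,1]$.

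There is no real obstacle here. The only point to check is the endpoint convention: the interval $I_k$ is open on the left and closed on the right. This matches the ceiling exactly, since $\lambda=2^k$ gives $\log_2\lambda=k$ and $\lceil k\rceil=k$. The case $k=0$, where $I_0=[0,1]$, is excluded by hypothesis, because there $\log_2\lambda\leq 0$ and the ceiling may be nonpositive or undefined at $\lambda=0$. I would note briefly that the intervals $I_k$, $k\geq 0$, partition $[0,\infty)$. This is what later lets the tree-change count be bounded interval by interval, with $O(\log N)$ intervals covering $\lambda_1,\ldots,\lambda_N$.
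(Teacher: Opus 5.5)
Your proof is correct and takes the same approach as the paper, whose entire proof is ``Immediate from $2^{k-1}<\lambda\leq 2^k$.'' You simply spell out the two steps it leaves implicit: applying $\log_2$ and using the characterization of the ceiling, then scaling the inequalities by $2^{-k}$.
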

\begin{proof}
    Immediate from $2^{k-1} < \lambda \leq 2^k$.
\end{proof}
We also define $g(x)=e^{-x}(1-e^{-x})$ for $x>0$. Then, $p_r(\lambda) = g(x_r(\lambda))$ for $r\geq 1$ and $p_0(\lambda)=e^{-\lambda}$.

\subsection{Right tail in isolation}
We will start by analyzing the right tail \emph{in isolation}, that is, we just ignore the left tail.
We show that the part of the Huffman tree corresponding to the right-hand
tail is very rigid and collapses into a chain or a ``comb'' structure with a simple predefined merge order.
The following lemma sets the criterion for the comb structure to occur.
\begin{lemma}
    \label{lem:combcriterion}
    Let $w_0\geq w_1 \geq \cdots \geq w_M >0$ be a sequence of \emph{weights}.
    If for all $i\leq M-2$, we have 
    \begin{equation}
        \label{eq:combcriterion}
        w_i \geq w_{i+1}+w_{i+2} \, ,
    \end{equation} 
    then the Huffman tree for 
    the distribution defined by $w_i$ has a comb structure with a predefined merge order.
\end{lemma}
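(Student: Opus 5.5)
We prove the lemma by induction on $M$, tracking the Huffman merges step by step. The claim to establish is that Huffman's algorithm first merges $w_M$ and $w_{M-1}$ into a node of weight $s_{M-1}=w_{M-1}+w_M$. It then merges that node with $w_{M-2}$, giving $s_{M-2}=w_{M-2}+s_{M-1}$, and so on. In general the $t$-th merge joins $w_{M-t}$ with the accumulated node $s_{M-t+1}=\sum_{i>M-t} w_i$. The result is a comb in which $w_M$ and $w_{M-1}$ sit at the deepest level and each $w_i$ hangs one level above $w_{i+1}$. The merge order is therefore fixed by the ordering of the weights alone.

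\textbf{Step 1: the first merge.} Huffman's algorithm always merges the two smallest weights. Since the sequence is sorted, these are $w_M$ and $w_{M-1}$; ties can be resolved by a fixed tie-breaking rule consistent with the ordering.

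\textbf{Step 2: the invariant.} Suppose the current multiset of weights is $\{w_0,\dots,w_{j}, s_{j+1}\}$, where $s_{j+1}=\sum_{i=j+1}^M w_i$. The next two smallest weights must be $w_j$ and $s_{j+1}$, for which two facts are needed.
\begin{itemize}
\item First, $s_{j+1}\le w_{j-1}$, so the accumulated node does not overtake the next weight up. By telescoping the criterion~\eqref{eq:combcriterion}: starting from $w_{j-1}\ge w_j+w_{j+1}$ and repeatedly substituting $w_{j+i}\ge w_{j+i+1}+w_{j+i+2}$ gives $w_{j-1}\ge w_{j+1}+w_{j+2}+\dots+w_{M-1}+w_M = s_{j+1}$. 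Cleanly, one proves by downward induction on $i$ that $w_{i-1}\ge w_{i+1}+s_{i+2}$, that is, $w_{i-1}\ge s_{i+1}$. The base case is $i=M-1$, where the criterion reads $w_{M-2}\ge w_{M-1}+w_M=s_{M-1}$. The inductive step is $w_{i-1}\ge w_i+w_{i+1}\ge w_{i+1}+s_{i+2}=s_{i+1}$, where the last inequality uses $w_i\ge s_{i+2}$ from the induction hypothesis.
\item Second, $w_j\le w_{j-1}\le\cdots$ holds trivially because the sequence is sorted.
\end{itemize}
Hence the two smallest elements are $w_j$ and $s_{j+1}$, in some order, since both are at most $w_{j-1}$ and every other weight is at least $w_{j-1}$. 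Merging them yields $s_j$, which preserves the invariant. The process ends with a single tree when $j=0$.

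\textbf{Main obstacle.} The main subtlety is ties and the boundary indices. Ties such as $s_{j+1}=w_{j-1}$ or $w_j=s_{j+1}$ make the Huffman tree non-unique. I would handle them by fixing a deterministic tie-breaking rule that prefers original symbols with larger index, or composite nodes, and noting that any such rule still produces the comb shape. Alternatively, I would state the lemma under strict inequalities and remark that equality cases occur only at isolated $\lambda$. For the boundary cases, when $j=1$ there is no $w_{j-1}$ beyond $w_0$, and for $j=0$ only two nodes remain. Both cases are trivial.
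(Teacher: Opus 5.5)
Your proof is correct and is essentially the paper's argument, written out in full: by induction, Huffman always merges the accumulated tail node $s_{j+1}$ with the next leaf $w_j$, and the criterion is what forces this. In fact your invariant $s_{j+1}\le w_{j-1}$, proved via $w_{i-1}\ge w_i+w_{i+1}\ge w_{i+1}+s_{i+2}$, is the right one. The paper's terse claim that the new sum is at most the \emph{next} weight fails, e.g., for $(5,3,2,1,1)$, where the second merge gives $4>w_1=3$. Moreover, because the weights are strictly positive, your induction even yields $s_{j+1}<w_{j-1}$ and $w_j<w_{j-1}$, so the merged pair is always forced and the tie-breaking issues you raise never actually occur.
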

\begin{proof}
    The Huffman algorithm merges the two least probable symbols (or inner nodes) at each step.
    By the criterion in \eqref{eq:combcriterion}, the newly formed sum is forced to be less than or equal to the 
    next weight, so the result follows by induction.
\end{proof}

\begin{definition}
For $x>0$, define the ratio function $h(x)$ as
\[
    h(x) = \frac{g(x/2)}{g(x)}
    = \frac{e^{-x/2}(1-e^{-x/2})}{e^{-x}(1-e^{-x})}
    = \frac{e^{x/2}}{1+e^{x/2}}\, .
\]
\end{definition}
\begin{lemma}
    \label{lem:hxstrictlyincreasing}
    For $x>0$, $h(x)$ is strictly increasing.
\end{lemma}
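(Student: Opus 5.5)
Since the closed form $h(x)=\frac{e^{x/2}}{1+e^{x/2}}$ is already recorded in the definition of $h$, the plan is to verify that closed form and then show directly that it is increasing. For the closed form, write $g(x)=e^{-x}(1-e^{-x})$ and factor the denominator as $1-e^{-x}=(1-e^{-x/2})(1+e^{-x/2})$. The factor $1-e^{-x/2}$ then cancels against the numerator $g(x/2)=e^{-x/2}(1-e^{-x/2})$, which is legitimate because $1-e^{-x/2}>0$ for $x>0$. This leaves $h(x)=\frac{e^{-x/2}}{e^{-x}(1+e^{-x/2})}=\frac{e^{x/2}}{1+e^{-x/2}}$.

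At this point I would pause and reconcile this with the expression stated in the definition. Multiplying numerator and denominator by $e^{x/2}$ gives $\frac{e^{x}}{e^{x/2}+1}$, so the ratio actually simplifies to $\frac{e^{x}}{1+e^{x/2}}$ rather than $\frac{e^{x/2}}{1+e^{x/2}}$. Both of these are strictly increasing for $x>0$, so the monotonicity claim holds whichever form is taken. The clean argument is the substitution $u=e^{x/2}$, which is strictly increasing in $x$ and maps $(0,\infty)$ onto $(1,\infty)$.

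For the stated form, $\frac{u}{1+u}=1-\frac{1}{1+u}$ is strictly increasing in $u$. For the corrected form, $\frac{u^2}{1+u}$ has derivative $\frac{u^2+2u}{(1+u)^2}>0$ for $u>0$. Composing either with the increasing map $x\mapsto e^{x/2}$ gives strict monotonicity of $h$ on $x>0$. As a cross-check, one can differentiate directly: for the stated form, $h'(x)=\frac{\tfrac12 e^{x/2}}{(1+e^{x/2})^2}>0$.

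No step here is a real obstacle. The only delicate point is the algebraic simplification itself, where one must be careful about which closed form is correct. I would state the monotonicity argument in terms of $u$ so that it covers the verified expression regardless of that discrepancy.
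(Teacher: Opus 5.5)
Your proof is correct and is the paper's argument: the paper just differentiates to get $h'(x)=\frac{e^x(2+e^{x/2})}{2(1+e^{x/2})^2}>0$, and your substitution $u=e^{x/2}$ is the same computation done through the chain rule. You are also right that the displayed closed form in the definition is a typo, since the paper's derivative and its choice of $\eta$ (with $h(\eta)\le q_*$) both correspond exactly to your corrected form $\frac{e^{x}}{1+e^{x/2}}$.
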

\begin{proof}
    We have $h'(x) = \frac{e^x(2+e^{x/2})}{2\left(1+e^{x/2}\right)^2} > 0$.
\end{proof}

Let $q_* = \frac{\sqrt{5}-1}{2}\approx 0.618034$.
Observe that 
\begin{equation}
    \label{eq:qpqsq}
    q_*+q_*^2=1 \, .
\end{equation}
Choose $\eta>0$ such that \[
\eta \leq 2\ln\left(\sqrt{5}-1+\sqrt{6\sqrt{5}-2}\right)-4\ln 2 \approx 0.285985 \, .
\]
Importantly, this guarantees that $h(\eta) \leq q_*$.
Fix integer $K'_R\geq 1$ such that $2^{-K'_R}\leq\eta$.
We can now show that the right-hand tail of the distribution is dominated by the last element before the tail,
beyond the cut-off point $r^*(\lambda)+K'_R$.
\begin{lemma}
    \label{lem:righttaildominance}
    Let $k\geq 1$, $\lambda\in I_k$, and $r\geq k+K'_R$. 
    Then, $p_r(\lambda) \geq p_{r+1}(\lambda) + p_{r+2}(\lambda)$.
\end{lemma}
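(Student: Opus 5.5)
Write $x = x_r(\lambda)$, so that $x_{r+1} = x/2$ and $x_{r+2} = x/4$. The inequality to prove is
\[
g(x) \geq g(x/2) + g(x/4).
\]
Dividing by $g(x)>0$ and using the ratio function $h$, we have $g(x/2)/g(x) = h(x)$ and $g(x/4)/g(x) = h(x/2)\,h(x)$. It therefore suffices to show
\[
h(x)\bigl(1 + h(x/2)\bigr) \leq 1 .
\]

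The first step is to locate $x$. By Lemma~\ref{lem:interval}, $\lambda\in I_k$ gives $2^{-k}\lambda \in (1/2,1]$. Since $r\geq k+K'_R$, we get
\[
x = 2^{-k}\lambda \cdot 2^{-(r-k)} \leq 2^{-K'_R} \leq \eta .
\]
Because $h$ is strictly increasing (Lemma~\ref{lem:hxstrictlyincreasing}) and $x/2 < x \leq \eta$, both $h(x)$ and $h(x/2)$ are at most $h(\eta) \leq q_*$.

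Since $1 + h(x/2) \leq 1 + q_*$, we obtain
\[
h(x)\bigl(1+h(x/2)\bigr) \leq q_*(1+q_*) = q_* + q_*^2 = 1
\]
by \eqref{eq:qpqsq}. This gives $p_r \geq p_{r+1}+p_{r+2}$. Note that $r\geq k+K'_R\geq 2$, so $p_r = g(x_r)$ holds for all three indices and the special formula for $p_0$ never enters.

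The only real obstacle is confirming the claim that $h(\eta)\leq q_*$ under the stated bound on $\eta$; the rest is bookkeeping.

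For the check, solve $e^{\eta/2}/(1+e^{\eta/2}) = q_*$, which gives $e^{\eta/2} = q_*/(1-q_*) = q_*/q_*^2 = 1/q_* = \phi$. The threshold is thus $\eta = 2\ln\phi \approx 0.9624$. This is much larger than $0.286$, so with the stated $\eta$ the condition $h(\eta) \leq q_*$ holds with room to spare.

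Care is needed here: $h(x)\to 1/2$ as $x\to 0^+$, and $1/2<q_*$, so the bound is satisfiable at all. The stated constant $0.286$ may come from a different normalization, but monotonicity of $h$ is all that is needed. In the write-up I would simply invoke the paper's assertion that $h(\eta)\leq q_*$, together with the short verification above.
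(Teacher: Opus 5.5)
Your main argument is correct and is the paper's proof. You bound $x_r\le 2^{-K'_R}\le\eta$ via Lemma~\ref{lem:interval}, use monotonicity of $h$ to get $p_{r+1}/p_r=h(x_r)\le q_*$ and $p_{r+2}/p_{r+1}=h(x_r/2)\le q_*$, and finish with $q_*+q_*^2=1$. Your remark that $r\ge 2$, so the special formula for $p_0$ never enters, is a detail the paper leaves implicit.

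However, your verification of $h(\eta)\le q_*$ is wrong, and you call this the only real obstacle. You took the displayed closed form $h(x)=e^{x/2}/(1+e^{x/2})$ at face value, but that display in the paper is a typo. Using $1-e^{-x}=(1-e^{-x/2})(1+e^{-x/2})$ gives
\[
\frac{g(x/2)}{g(x)}=\frac{e^{-x/2}}{e^{-x}\bigl(1+e^{-x/2}\bigr)}=\frac{e^{x}}{1+e^{x/2}} ,
\]
and this is also the function whose derivative the paper writes down.

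A sanity check would have caught the error. Your formula is always below $1$, which would make $p_r$ decreasing in $r$ everywhere. That contradicts unimodality with the mode near $\log_2\lambda$; indeed $g(1/2)\approx 0.2387>g(1)\approx 0.2325$.

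With the correct $h$, set $u=e^{\eta/2}$. The equation $h(\eta)=q_*$ becomes $u^2=q_*(1+u)$, whose positive root is $u=\bigl(\sqrt5-1+\sqrt{6\sqrt5-2}\bigr)/4$. This gives exactly the paper's constant $\eta\approx 0.285985$.

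So there is no room to spare and no ``different normalization'': the stated bound on $\eta$ is precisely the threshold. At your value $\eta=2\ln\phi$ with $\phi=1/q_*$, the true ratio is $h(2\ln\phi)=\phi^2/(1+\phi)=1$, and the comb criterion fails completely.

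Drop your verification, or replace it with the computation above; the rest of the proof stands.
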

\begin{proof}
    By Lemma~\ref{lem:interval}, we $\lambda 2^{-k} \leq 1$, 
    so for $r\geq k+K'_R$, we have by the choice of $K'_R$ that
    \[ 
    x_r(\lambda) = \lambda 2^{-r} \leq 2^{-K'_R} \cdot 2^{-k} \cdot \lambda \leq 2^{-K'_R} \leq \eta \, .
    \]
    Then, $\frac{p_{r+1}}{p_r} = \frac{g(x_r/2)}{g(x_r)} = h(x_r) \leq h(\eta) = q_*$, 
    and $\frac{p_{r+2}}{p_{r+1}} = \frac{g(x_{r+1}/2)}{g(x_{r+1})} = h(x_{r+1}) \leq h(\eta) \leq q_*$,
    since by Lemma~\ref{lem:hxstrictlyincreasing}, $h$ is strictly increasing and $x_{r+1} \leq x_r \leq \eta$.
    Thus, $p_{r+2}/p_r = (p_{r+2}/p_{r+1})(p_{r+1}/p_r) \leq q_*^2$, 
    so $p_{r+1}+p_{r+2} \leq (q_* + q_*^2)p_r = p_r$ by Equation~\eqref{eq:qpqsq}.
\end{proof}
\begin{corollary}
    Fix $K'_R$ as above. For any $k\geq 1$ and any $\lambda\in I_k$, the Huffman subtree induced by 
    the indices $k+K'_R, k+K'_R+1,\ldots$ is a comb with predefined merge order.
\end{corollary}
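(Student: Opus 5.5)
The statement to prove is the corollary: for $\lambda\in I_k$, the Huffman subtree induced by the indices $k+K'_R, k+K'_R+1,\ldots$ is a comb with a predefined merge order. The plan is to combine Lemma~\ref{lem:righttaildominance} with Lemma~\ref{lem:combcriterion}, treating the right tail in isolation as the section header announces.

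First, fix $k\geq 1$ and $\lambda\in I_k$. Set $w_i = p_{k+K'_R+i}(\lambda)$ for $i=0,1,2,\ldots$; for a finite statement we truncate at some $M$, for instance at the last rank representable with $w$-bit hashes. All these weights are strictly positive.

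Second, check that the sequence is nonincreasing. By Lemma~\ref{lem:interval}, $x_{k+K'_R+i}(\lambda)\leq 2^{-K'_R}\leq\eta<\ln 2$. Lemma~\ref{lem:unimodalityofg} says $g$ is increasing on $(0,\ln 2)$, and the arguments $x_r$ decrease in $r$, so the $w_i$ decrease. Equivalently, the ratio $h(x_r)\leq q_*<1$ from the proof of Lemma~\ref{lem:righttaildominance} gives strict decrease.

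Third, verify the comb criterion \eqref{eq:combcriterion}. Lemma~\ref{lem:righttaildominance} applies to every $r\geq k+K'_R$, so $w_i\geq w_{i+1}+w_{i+2}$ holds for every $i\leq M-2$. Lemma~\ref{lem:combcriterion} then gives that Huffman's algorithm, run on these weights alone, merges $w_M$ with $w_{M-1}$, then that sum with $w_{M-2}$, and so on. The result is a comb whose merge order is fixed by the index order and does not depend on $\lambda$ within $I_k$.

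The one point needing care is what "in isolation" means. It means Huffman's algorithm applied to just this multiset of weights, with no interleaving with weights from the band or the left tail. I will state the corollary in that sense and defer the interaction with the other parts to the later domination arguments.

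A minor obstacle is the induction inside Lemma~\ref{lem:combcriterion}. After merging, the new node of weight $w_{i+1}+w_{i+2}+\dots$ must remain at most $w_i$, not just the single pair sum. This holds because the comb sum telescopes. By induction, $S_{i+1}:=\sum_{j>i}w_j\leq w_i+w_{i+1}$ might fail in general, so I would instead argue directly: the accumulated node after absorbing $w_{i+1}$ has weight $S_{i+1}$, and Huffman next compares it with $w_i$ and $w_{i-1}$. It suffices that $S_{i+1}\leq w_{i-1}$, which follows from $S_{i+1}\leq w_{i+1}/(1-q_*)\cdot$ (using the geometric ratio $q_*$), giving $S_{i+1}\leq q_*^2 w_{i-1}/(1-q_*)=q_* w_{i-1}\leq w_{i-1}$. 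This ensures that $w_{i-1}$ is never merged before the comb reaches it. If ties occur, a fixed tie-breaking rule keeps the merge order predefined.
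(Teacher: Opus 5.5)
Your proposal is correct and follows the paper's proof: Lemma~\ref{lem:righttaildominance} supplies \eqref{eq:combcriterion} for all $r\ge k+K'_R$, and Lemma~\ref{lem:combcriterion} then gives the comb; your explicit monotonicity check fills in a hypothesis the paper leaves implicit. In your side repair of the induction, three corrections apply: since $1-q_*=q_*^2$ you get $q_*^2w_{i-1}/(1-q_*)=w_{i-1}$, not $q_*w_{i-1}$ (this still suffices); the detour is unnecessary, because the criterion alone gives $\sum_{j\ge i+2}w_j\le w_i$ by downward induction, so $S_{i+1}\le w_{i+1}+w_i\le w_{i-1}$ and the ``might fail'' remark is wrong; and your opening claim that the accumulated node stays at most $w_i$ is false in general here, since every ratio $p_{r+1}/p_r$ exceeds $1/2$.
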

\begin{proof}
    Immediate from Lemmata~\ref{lem:combcriterion} and~\ref{lem:righttaildominance}.
\end{proof}

\subsection{Band and pool}
It is immediate that the same argument applies to the left-hand tail which decays doubly-ex\-po\-nen\-tial\-ly by Lemma~\ref{lem:ranklefttail}, 
so the left-hand tail is even more rigid and collapses into a comb structure with a simple predefined merge order,
if treated in isolation.
However, since it is also possible that some of the minuscule left-tail entries are merged with some entries in the right tail,
we need to construct a \emph{pool} of small entries that contains entries from both tails.
We will then show that the tails in the pool are rigid and cannot induce any nontrivial changes to the tree
until they have completely collapsed into a single node. To do this, we truncate the infinite right tail, 
so we can establish 
a dominance criterion between the left and the right tail: the entire left tail is dominated by every element of the
truncated right tail, so the left tail has to collapse into a single node before it can interfere with the right tail.
Similarly, the entire truncated right tail is dominated by every element of the band, so the right
tail has to collapse into a single node before it can interfere with the band.

Let us assume for now that $\lambda\in I_k$ for some $k\geq 1$, so that the mode of the distribution is at $k$ and
we can write $\lambda = \theta 2^k $ for some $\theta\in (1/2,1]$.
Let us choose a constant $0<\delta<\ln 2$, 
independent of $\lambda$, and determine an integer upper bound
on the ranks~$r_{\max}$.
\begin{lemma}
    \label{lem:rmaxwhp}
    Fix $0<\delta<\ln 2$, independent of $\lambda$, and let $r_{\max} = \lceil\log_2 \frac{\lambda}{\delta}\rceil$.
    Then, for every $k\geq 1$ and $\lambda\in I_k$, we have $\Pr[R_j > r_{\max}] \leq \delta$.
\end{lemma}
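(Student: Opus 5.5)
We use the closed-form right tail $\Pr[R_j>r]=1-e^{-\lambda 2^{-r}}$ from Section~\ref{sec:distribution}, together with the elementary inequality $1-e^{-x}\leq x$, and then exploit the rounding in the definition of $r_{\max}$. Concretely, setting $r=r_{\max}$ gives
\[
\Pr[R_j>r_{\max}] = 1-e^{-\lambda 2^{-r_{\max}}} \leq \lambda 2^{-r_{\max}} \, ,
\]
so it suffices to show $\lambda 2^{-r_{\max}}\leq\delta$.

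For this we use the ceiling. Since $r_{\max}=\lceil\log_2(\lambda/\delta)\rceil\geq\log_2(\lambda/\delta)$, we have $2^{r_{\max}}\geq\lambda/\delta$, that is, $\lambda 2^{-r_{\max}}\leq\delta$. This closes the chain. The hypotheses $k\geq 1$ and $\lambda\in I_k$ are not really needed for the bound itself; they only ensure that $\lambda>1$, so that $r_{\max}$ is a well-defined positive integer.

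It is still worth recording this via Lemma~\ref{lem:interval}. Because $\delta<\ln 2<1$ and $\lambda>2^{k-1}$, we get $\lambda/\delta>\lambda$, so $r_{\max}\geq r^*(\lambda)=k$. Moreover, writing $\lambda=\theta 2^k$ with $\theta\in(1/2,1]$, we have $r_{\max}=k+\lceil\log_2(\theta/\delta)\rceil\leq k+\lceil\log_2(1/\delta)\rceil$. This shows that $r_{\max}-r^*$ is bounded by a constant independent of $\lambda$, which is presumably how the lemma will be used when truncating the right tail.

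There is no real obstacle here. The only point requiring care is the direction of the ceiling inequality, and checking that the bound is uniform in $\theta$, which it is, since the estimate $\lambda 2^{-r_{\max}}\leq\delta$ does not depend on $\theta$ at all.
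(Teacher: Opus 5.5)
Your proof is correct and is essentially the paper's argument. The paper uses the same two ingredients, the ceiling bound $\lambda 2^{-r_{\max}}\leq\delta$ and $1-e^{-x}\leq x$, only in the opposite order: it first uses monotonicity to get $\Pr[R_j>r_{\max}]\leq 1-e^{-\delta}$ and then $1-e^{-\delta}\leq\delta$, which makes no difference.
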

\begin{proof}
    We have 
    \begin{align*}
        \Pr[R_j > r_{\max}] = & 1 - e^{-x_{r_{\max}}} = 
    1 - e^{-\frac{\lambda}{2^{r_{\max}}}} \leq 1 - e^{-\frac{\lambda}{2^{\log_2\frac{\lambda}{\delta}}}}
    = 1 - e^{-\delta} \leq \delta \, .
    \end{align*}
\end{proof}
\begin{lemma}
    \label{lem:rmaxconstant}
    The maximum rank~$r_{\max}$ is at $O(1)$ distance from~$k$.
\end{lemma}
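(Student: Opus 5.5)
We bound $r_{\max}-k$ from above and below by constants depending only on $\delta$. The key input is Lemma~\ref{lem:interval}: for $\lambda\in I_k$ with $k\geq 1$ we can write $\lambda=\theta 2^k$ with $\theta\in(1/2,1]$. Then $\log_2\frac{\lambda}{\delta} = k + \log_2\theta + \log_2\frac{1}{\delta}$. Since $\theta\in(1/2,1]$, we have $\log_2\theta\in(-1,0]$.

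Monotonicity of the ceiling then gives the sandwich
\[
k + \left\lceil \log_2\tfrac{1}{\delta}\right\rceil - 1 \;\leq\; r_{\max} = \left\lceil k + \log_2\theta + \log_2\tfrac{1}{\delta}\right\rceil \;\leq\; k + \left\lceil \log_2\tfrac{1}{\delta}\right\rceil .
\]
The upper bound uses $\log_2\theta\leq 0$. The lower bound uses $\log_2\theta>-1$ together with the fact that $k$ is an integer, so $\lceil k - 1 + y\rceil = k-1+\lceil y\rceil$.

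Set $K_\delta = \lceil \log_2(1/\delta)\rceil$. Because $0<\delta<\ln 2<1$, we have $\log_2(1/\delta)>0$, so $K_\delta\geq 1$. The sandwich therefore says $r_{\max}\in\{k+K_\delta-1,\,k+K_\delta\}$, and so $0\leq r_{\max}-k\leq K_\delta$. This distance depends only on the fixed constant $\delta$ and is independent of $\lambda$, $k$, $n$ and $m$, which is what the statement claims.

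I do not expect any real obstacle; the only care needed is the lower bound on the ceiling, which is why I keep $k$ outside the ceiling before bounding $\log_2\theta$. I would also add a remark for later use in the band-and-pool argument: since $r_{\max}\geq k$, the truncated right tail $\{k,\dots,r_{\max}\}$ has $O(1)$ elements. Moreover, if $\delta\leq 2^{-K'_R}$, the cutoff $r_{\max}$ lies at or beyond the comb threshold $k+K'_R$ of Lemma~\ref{lem:righttaildominance}.
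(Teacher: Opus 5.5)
Your proof is correct and is essentially the paper's argument: write $\lambda=\theta 2^k$ with $\theta\in(1/2,1]$, pull the integer $k$ out of the ceiling, and use $\log_2\theta\in(-1,0]$ to sandwich $r_{\max}-k$ between constants that depend only on $\delta$. Your treatment of the ceiling is slightly more careful than the paper's: its stated upper bound $r_{\max}\leq k+\log_2\frac{1}{\delta}$ drops the rounding, whereas your $r_{\max}\leq k+\lceil\log_2\frac{1}{\delta}\rceil$ is the correct form, and the $O(1)$ conclusion holds in either case.
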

\begin{proof}
    We have $r_{\max} = \lceil\log_2 \frac{\lambda}{\delta}\rceil 
    = \lceil\log_2 \frac{2^k \theta}{\delta}\rceil = k + \lceil\log_2 \frac{\theta}{\delta}\rceil$.
    Therefore,
    $\log_2 \frac{1}{2\delta} < \log_2 \frac{\theta}{\delta} \leq \log_2 \frac{1}{\delta}$, 
    so $k+\log_2 \frac{1}{2\delta} < r_{\max} \leq k+\log_2 \frac{1}{\delta}$.
\end{proof}

Lemma~\ref{lem:rmaxwhp} gives us a high-probability bound on the maximum register value. Lemma~\ref{lem:rmaxconstant} 
says that the maximum register value satisfying the bound is at a constant distance from~$k$.
In practice, for fixed~$m$, we would choose $\delta$ sufficiently small that we would not expect to see any deviations
above~$r_{\max}$ in any register. The main effect $\delta$ has is that it controls the width of the band around
the mode, and what we relegate into the pool. The important thing here is that this choice does not depend on $\lambda$ or $n$ 
as such, and $r_{\max}$ is at a constant distance from~$k$. Due to the doubly exponential decay,
and the fact that the left tail is finite, 
no such truncation is needed for the left tail.

We shall now formally define the band of large weights around the mode, and the pool of small weight in the tails, with
respect to boundary constants $K_R$ and $K_L$ that are fixed later.
\begin{definition}
    \label{def:bandpool}
    Let $\mathcal L_k=\{0,1,\ldots,k-K_L-1\}$ be the set of indices in the deep left tail, possibly empty,
    and $\mathcal R_k=\{k+K_R+1, k+K_R+2, \ldots, r_{\max}\}$ be the set of indices in the deep right tail,
    possibly empty.
    The \emph{pool} is the set of indices $\mathcal P_k=\mathcal L_k\cup \mathcal R_k$.
    The \emph{band} consists of the constant-width window~$\mathcal B_k=\{k-K_L,k-K_L+1,\ldots,k+K_R-1,k+K_R\}$ around~$k$.
\end{definition}
Definition~\ref{def:bandpool} says that the band is a constant-width band around the mode, and 
the pool is the rest of the indices in the deep tails.
Furthermore, the band and the pool form a partition of the indices,
that is, $\mathcal P_k \cup \mathcal B_k = 
\mathcal L_k \cup \mathcal R_k \cup \mathcal B_k =
 \{0,1,\ldots, r_{\max}\}$ and $\mathcal P_k \cap \mathcal B_k = 
 \mathcal L_k \cap \mathcal R_k = \emptyset$.
It should be noted that either $\mathcal L_k$ or $\mathcal R_k$ can be empty,
and we have sloppily ignored the boundary cases where the indices go to zero or negative, or beyond the
upper bound on the ranks, but this has no effect on the results, as what we care about is establishing
that $|\mathcal B_k|=O(1)$, and that the pool does not affect the tree construction in a meaningful way.

We start by estimating the total masses of $\mathcal L_k$ and $\mathcal R_k$.
\begin{lemma}
    \label{lem:lefttailmass}
    The sum of probabilities in the left tail is at most 
    \[
    \sum_{\ell\in \mathcal L_k} p_\ell \leq 2e^{-2^{K_L}} \, .
    \]
\end{lemma}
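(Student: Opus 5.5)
We bound each left-tail probability by a cumulative probability and then sum a doubly exponentially decaying sequence. Work with $\lambda\in I_k$, $k\geq 1$, and write $\lambda=\theta 2^k$ with $\theta\in(1/2,1]$ as in Lemma~\ref{lem:interval}. If $\mathcal L_k$ is empty there is nothing to prove, so assume $k-K_L-1\geq 0$.

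For each $\ell\in\mathcal L_k$, write $\ell=k-j$ with $j\geq K_L+1$. Then
\[
p_\ell\leq \Pr[R_j\leq \ell]=e^{-x_\ell}=e^{-\theta 2^{j}}\leq e^{-2^{j-1}} .
\]
The first inequality is the crude bound $p_\ell=\Pr[R_j=\ell]\leq\Pr[R_j\leq \ell]$. The last step is exactly Lemma~\ref{lem:ranklefttail}, applied with $r^*=k$ and offset $j$; this is valid since $\lambda>1$ for $k\geq1$. The case $\ell=0$ is covered too, since $p_0=e^{-\lambda}=e^{-\theta2^k}$ fits the same formula with $j=k$.

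Summing over the left tail gives
\[
\sum_{\ell\in\mathcal L_k}p_\ell\leq \sum_{j=K_L+1}^{\infty} e^{-2^{j-1}}=\sum_{i=K_L}^{\infty}e^{-2^{i}} .
\]
The remaining step, and the only one needing any care, is to bound this doubly exponential tail by twice its first term. Consecutive terms satisfy
\[
\frac{e^{-2^{i+1}}}{e^{-2^i}}=e^{-2^i}\leq e^{-1}<\tfrac12
\]
for all $i\geq0$. Hence the series is dominated by a geometric series with ratio $1/2$ starting at $e^{-2^{K_L}}$, and its sum is at most $e^{-2^{K_L}}\sum_{t\geq0}2^{-t}=2e^{-2^{K_L}}$. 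This proves the claimed bound.

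The bound is uniform in $k$ and $\lambda\in I_k$, which is what the later dominance arguments need. The only potential obstacle is bookkeeping of the index shift $\ell=k-j$, that is, making sure the exponent is $2^{j-1}$ rather than $2^{j}$. Using the sharper $\theta>1/2$ directly gives $e^{-\theta2^j}<e^{-2^{j-1}}$, so the stated constant holds with room to spare.
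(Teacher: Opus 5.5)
Your proof is correct and follows the paper's route. Both bound each $p_\ell$ by $e^{-x_\ell}$ with $x_\ell=\theta 2^{k-\ell}>2^{k-\ell-1}\geq 2^{K_L}$, then use the doubly exponential decay to get at most twice the largest term; your ratio bound $e^{-2^i}\leq e^{-1}<1/2$ makes rigorous the paper's informal remark that the sum ``is dominated by the first term.''
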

\begin{proof}
    If $\ell\in\mathcal L_k$, then, by Definition~\ref{def:bandpool}, 
    $\ell\leq k-K_L-1$, so $x_{\ell} = \theta 2^{k-\ell} \geq \theta 2^{K_L+1} > 2^{K_L}$.
    Since the sum decreases faster than exponentially, it is dominated by the first term,
    so we can bound the entire sum $\sum_{\ell\in \mathcal L_k} p_\ell \leq 2e^{-2^{K_L}}$.
\end{proof}
\begin{lemma}
    \label{lem:righttailmass}
    The sum of probabilities in the right tail is at most 
    \[
        \sum_{r\in \mathcal R_k} p_r \leq 2^{-K_R} \, .
    \]
\end{lemma}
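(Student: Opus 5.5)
The plan is to bound the total mass of the deep right tail by the probability that the register exceeds the last band index. Every $r\in\mathcal R_k$ satisfies $r\geq k+K_R+1$, so the events $\{R_j=r\}$ for $r\in\mathcal R_k$ are disjoint and all contained in $\{R_j>k+K_R\}$. Hence
\[
\sum_{r\in\mathcal R_k} p_r \leq \sum_{r=k+K_R+1}^{\infty} p_r = \Pr[R_j > k+K_R] \, .
\]
The truncation at $r_{\max}$ only drops nonnegative terms, so it can be ignored. If $\mathcal R_k$ is empty the claim is trivial.

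Next, I will invoke Lemma~\ref{lem:interval}. For $\lambda\in I_k$ with $k\geq 1$ it gives $r^*(\lambda)=k$. Lemma~\ref{lem:rankrighttail} applied with offset $K_R\geq 0$ then yields
\[
\Pr[R_j>r^*+K_R]\leq 2^{-K_R} \, ,
\]
which is exactly the bound claimed.

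For a self-contained derivation that avoids the lemma, I will write $\theta=\lambda 2^{-k}\in(1/2,1]$. Then
\[
\Pr[R_j>k+K_R]=1-e^{-\theta 2^{-K_R}}\leq \theta 2^{-K_R}\leq 2^{-K_R} \, ,
\]
using $1-e^{-x}\leq x$.

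There is no real obstacle here; the only care needed is to confirm that $r^*=k$ on $I_k$ and that the index sets line up, i.e. that the smallest element of $\mathcal R_k$ is $k+K_R+1$.
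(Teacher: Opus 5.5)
Your proof is correct and follows the same route as the paper, which also bounds the deep right-tail mass by citing Lemma~\ref{lem:rankrighttail}. Your write-up makes explicit the two steps the paper leaves implicit: that $\sum_{r\in\mathcal R_k}p_r\leq\Pr[R_j>k+K_R]$, and that $r^*(\lambda)=k$ on $I_k$ by Lemma~\ref{lem:interval}; it also shows that the paper's remark about geometric decay is not needed.
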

\begin{proof}
    The right tail probabilities decrease geometrically and are dominated by the first term,
    and are bounded by $2^{-K_R}$ by Lemma~\ref{lem:rankrighttail}.
\end{proof}
We need to also bound the minimum values of the right tail, and the band.
\begin{lemma}
    \label{lem:leftrighttailbandlowerbound}
    For any $k\geq 1$ and $\lambda\in I_k$, we have
    \begin{itemize}
        \item For any $r \in \mathcal R_k$, $p_r \geq g(\delta/2)$, and
        \item For any $b \in \mathcal B_k$, $p_b \geq \min\{g(2^{K_L}),g(2^{-K_R-1})\}$.
    \end{itemize}
\end{lemma}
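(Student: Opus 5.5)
The plan is to express every $p_r$ through $g(x_r)$ with $x_r=\lambda 2^{-r}$. First locate $x_r$ in an explicit interval for each index range. Then use the unimodality of $g$ from Lemma~\ref{lem:unimodalityofg}: $g$ is increasing on $(0,\ln 2]$ and decreasing on $[\ln 2,\infty)$. On any interval $[a,b]\subset(0,\infty)$ this gives $g(x)\geq\min\{g(a),g(b)\}$, and on any interval inside $(0,\ln 2]$ it gives $g(x)\geq g(a)$. Throughout, write $\lambda=\theta 2^k$ with $\theta\in(1/2,1]$, as guaranteed by Lemma~\ref{lem:interval}.

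\emph{Right tail.} Take $r\in\mathcal R_k$, so $k+K_R+1\leq r\leq r_{\max}$. In particular $r\geq 1$, hence $p_r=g(x_r)$.

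For the upper bound on $x_r$, we have $x_r=\theta 2^{k-r}\leq 2^{-K_R-1}\leq \tfrac12<\ln 2$, using $K_R\geq 0$.

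For the lower bound, monotonicity in $r$ gives $x_r\geq x_{r_{\max}}$. Since $r_{\max}=\lceil\log_2(\lambda/\delta)\rceil<\log_2(\lambda/\delta)+1$, it follows that $x_{r_{\max}}=\lambda 2^{-r_{\max}}>\lambda\cdot\frac{\delta}{2\lambda}=\delta/2$.

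So $x_r\in(\delta/2,\ln 2)$. This is inside the increasing branch of $g$, because $\delta/2<\ln 2$. Therefore $p_r=g(x_r)\geq g(\delta/2)$.

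\emph{Band.} Take $b\in\mathcal B_k$ with $b\geq 1$, so $k-K_L\leq b\leq k+K_R$. Then $x_b=\theta 2^{k-b}$ satisfies
\[
2^{-K_R-1}<\theta 2^{-K_R}\leq x_b\leq \theta 2^{K_L}\leq 2^{K_L}.
\]
Thus $x_b\in[2^{-K_R-1},2^{K_L}]$. By unimodality the minimum of $g$ on this interval is attained at an endpoint, which gives $p_b=g(x_b)\geq\min\{g(2^{K_L}),g(2^{-K_R-1})\}$.

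The only step needing care is the boundary case $b=0\in\mathcal B_k$, which happens when $k\leq K_L$. Here $p_0=e^{-\lambda}$ is not of the form $g(x_0)$. However, $p_0=e^{-\lambda}\geq e^{-\lambda}(1-e^{-\lambda})=g(\lambda)=g(x_0)$. Moreover $x_0=\lambda\leq 2^k\leq 2^{K_L}$ and $x_0\geq x_1>2^{-K_R-1}$, so the same endpoint argument applies.

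Indices that fall below zero or beyond $r_{\max}$ are ignored, as the paper already does after Definition~\ref{def:bandpool}. I do not expect any genuine obstacle. The main point to get right is the strict inequality $x_{r_{\max}}>\delta/2$ that comes from the ceiling, together with checking that the whole right-tail interval stays on the increasing side of $\ln 2$.
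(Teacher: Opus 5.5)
Your proof is correct and follows essentially the paper's route: write $p_r=g(x_r)$, place $x_r$ in an explicit interval (using $x_{r_{\max}}>\delta/2$ from the ceiling, and $x_r\le 2^{-K_R-1}<\ln 2$ on the right tail), and use that $g$ is monotone on each side of $\ln 2$. The one difference is in the band, where you put every $x_b$ in $[2^{-K_R-1},2^{K_L}]$ and use unimodality of $g$ directly, rather than the paper's reduction to the band endpoints via unimodality of the sequence $p_r$. This is slightly cleaner: it avoids the paper's loose claim that $p_r$ is monotone on either side of $k$ (the mode can sit at $k\pm1$), and it covers the case $0\in\mathcal B_k$, where $p_0=e^{-\lambda}\ne g(x_0)$, which the paper glosses over.
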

\begin{proof}
    If $r\in \mathcal R_k$, then $r\leq r_{\max}$, and $g$ is increasing on $(0, \ln 2]$,
    so $x_r \geq x_{r_{\max}} > \delta/2$, implying $p_r = g(x_r) \geq g(\delta/2)$ for all $r\in \mathcal R_k$.

    For the second claim, if $b\in \mathcal B_k$, then by unimodality, we have that the probability is nonincreasing
    when we go to the right from $k$, and nonincreasing when we go to the left from $k$, so the minimum is attained
    at the endpoints of the band. For the left edge, we have 
    $x_{k-K_L} = \theta 2^{K_L} \leq 2^{K_L}$, so 
    since $g$ is decreasing on $[\ln 2,\infty)$, we have
    $p_{k-K_L} = g(x_{k-K_L}) \geq g(2^{K_L})$.
    For the right edge, we have $x_{k+K_R} = \theta 2^{-K_R} \geq 2^{-K_R-1}$,
    so since $g$ is increasing on $(0,\ln 2]$, we have $p_{k+K_R} = g(x_{k+K_R}) \geq g(2^{-K_R-1})$.
    Hence, $p_b \geq \min\{g(2^{K_L}),g(2^{-K_R-1})\}$ for all $b\in \mathcal B_k$.
\end{proof}

We can now establish the boundary constants $K_R$ and $K_L$.
\begin{lemma}
    \label{lem:bandpoolconstants}
    There exist constants $K_L,K_R\geq 1$, as in Definition~\ref{def:bandpool}, 
    dependent only on $\delta$ and independent of $\lambda$,
    such that 
    \begin{itemize}
    \item $\sum_{\ell \in \mathcal L_k} p_\ell < p_r$ for every $r\in \mathcal R_k$, and
    \item $\sum_{r\in \mathcal R_k} p_r < p_b$ for every $b\in \mathcal B_k$.
    \end{itemize}
\end{lemma}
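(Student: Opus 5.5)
The plan is to combine the two mass bounds (Lemmas~\ref{lem:lefttailmass} and~\ref{lem:righttailmass}) with the two lower bounds of Lemma~\ref{lem:leftrighttailbandlowerbound}. Every quantity involved depends only on $\delta$, $K_L$ and $K_R$, and not on $\lambda$ or $k$. The key observation is that the lower bound $g(\delta/2)$ for the right tail does not involve $K_L$ or $K_R$ at all. The left-tail mass bound $2e^{-2^{K_L}}$ depends only on $K_L$. The right-tail mass bound $2^{-K_R}$ depends only on $K_R$. The band lower bound $\min\{g(2^{K_L}),g(2^{-K_R-1})\}$ depends on both. So I would fix $K_L$ first and then $K_R$.

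\textbf{Choosing $K_L$.} Since $0<\delta<\ln 2$, we have $g(\delta/2)>0$, and it is a constant depending only on $\delta$. Because $2e^{-2^{K_L}}\to 0$ doubly exponentially, I would pick $K_L\geq 1$ as the least integer with $2e^{-2^{K_L}}<g(\delta/2)$. Then Lemmas~\ref{lem:lefttailmass} and~\ref{lem:leftrighttailbandlowerbound} give $\sum_{\ell\in\mathcal L_k}p_\ell\leq 2e^{-2^{K_L}}<g(\delta/2)\leq p_r$ for every $r\in\mathcal R_k$, which is the first claim. If $\mathcal R_k$ is empty, the claim is vacuous.

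\textbf{Choosing $K_R$.} With $K_L$ now fixed, $c_L:=g(2^{K_L})>0$ is a constant, so it suffices to have $2^{-K_R}<c_L$ and $2^{-K_R}<g(2^{-K_R-1})$. The first holds for all large $K_R$. The second is the step I expect to be the main obstacle, because both sides shrink as $K_R$ grows. Write $y=2^{-K_R-1}$. Then $g(y)=e^{-y}(1-e^{-y})=y-\tfrac32y^2+O(y^3)$, while $2^{-K_R}=2y$. So the naive comparison gives $2y<g(y)\approx y$, which \emph{fails}.

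To handle this I would sharpen the right-tail mass bound instead of using Lemma~\ref{lem:righttailmass} as stated. The right tail sums exactly to $\sum_{r>k+K_R}p_r\leq\Pr[R_j>k+K_R]=1-e^{-\theta 2^{-K_R}}$. The right edge of the band is $p_{k+K_R}=g(\theta 2^{-K_R})$. With $z=\theta2^{-K_R}$, the comparison becomes $1-e^{-z}<e^{-z}(1-e^{-z})$, which fails as well. The correct comparison is therefore with the band element $k+K_R$ excluded from, or sitting at the boundary of, the geometric sum.

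Concretely, I would use the ratio estimate from Lemma~\ref{lem:righttaildominance}: once $2^{-K_R}\leq\eta$, consecutive probabilities satisfy $p_{r+1}/p_r\leq h(\eta)\leq q_*$. Hence
\[
\sum_{r\in\mathcal R_k}p_r\leq p_{k+K_R}\sum_{i\geq1}q_*^i=p_{k+K_R}\frac{q_*}{1-q_*}.
\]
Since $q_*/(1-q_*)=q_*/q_*^2=1/q_*>1$, this is still not enough. So I would shrink $\eta$ further, choosing $K_R$ large enough that $h(2^{-K_R})<1/2$ strictly. This is possible because $h(x)\to1/2$ as $x\to0^+$ but stays below it, so $\rho:=h(2^{-K_R})<1/2$ and $\sum_{r\in\mathcal R_k}p_r\leq p_{k+K_R}\,\rho/(1-\rho)<p_{k+K_R}$.

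For the band elements $b<k+K_R$, unimodality (Proposition~\ref{prp:mode}) and the choice of $K_L$ reduce the check to the two edges. At the left edge, $p_{k-K_L}\geq c_L$, and I would additionally take $K_R$ so large that $2^{-K_R}<c_L$. Finally, I would check that all boundary and empty-set cases are harmless and that $K_L$ and $K_R$ depend only on $\delta$.
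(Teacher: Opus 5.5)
You correctly diagnose the real obstacle. Since $g(y)\le 1-e^{-y}\le y$, the requirement $2^{-K_R}<g(2^{-K_R-1})$ can never hold, and this is exactly the step the paper's own proof of the second bullet relies on. Your choice of $K_L$, the first bullet, and the left-edge comparison $2^{-K_R}<g(2^{K_L})$ are fine and match the paper.

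The gap is your repair at the right edge $b=k+K_R$. The claim that $h(x)$ stays below $1/2$ for small $x$ is false: $h(x)>1/2$ for every $x>0$. Indeed, $h(x)>1/2$ is equivalent to $g(x)<2g(x/2)$. Dividing by $e^{-x/2}(1-e^{-x/2})$, this becomes $e^{-x/2}+e^{-x}<2$, which always holds. So every admissible ratio bound $\rho$ satisfies $\rho/(1-\rho)>1$, and the geometric argument cannot close for any $K_R$. Moreover, that argument never uses that $\mathcal R_k$ stops at $r_{\max}$. If it worked, it would prove $\Pr[R_j>k+K_R]<p_{k+K_R}$, contradicting your own exact computation that this ratio is $e^{z}>1$.

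The missing idea is that the right-edge inequality is false for the untruncated tail. Any valid proof must therefore use the truncation, namely $x_{r_{\max}}>\delta/2$, which follows from $r_{\max}=\lceil\log_2(\lambda/\delta)\rceil$. Write the sum through the CDF, with $z=x_{k+K_R}=\theta 2^{-K_R}$:
\[
\sum_{r\in\mathcal R_k}p_r=e^{-x_{r_{\max}}}-e^{-z},\qquad p_{k+K_R}=e^{-z}-e^{-2z}.
\]
The needed inequality is then equivalent to $(1-e^{-z})^2<1-e^{-x_{r_{\max}}}$. Since $(1-e^{-z})^2\le z^2\le 4^{-K_R}$ and $1-e^{-x_{r_{\max}}}>1-e^{-\delta/2}$, it suffices to take $K_R$ with $4^{-K_R}\le 1-e^{-\delta/2}$, in addition to $2^{-K_R}<g(2^{K_L})$. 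Both conditions depend only on $\delta$.

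Even more bluntly, $r_{\max}=k+\lceil\log_2(\theta/\delta)\rceil$ (see the proof of Lemma~\ref{lem:rmaxconstant}). So any $K_R\ge\lceil\log_2(1/\delta)\rceil$ forces $\mathcal R_k=\emptyset$, and both bullets then hold trivially. This shows how completely the statement rests on the truncation that your argument ignored.
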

\begin{proof}
    For the first claim,
    by Lemma~\ref{lem:lefttailmass}, we have $\sum_{\ell\in \mathcal L_k} p_\ell \leq 2e^{-2^{K_L}}$,
    and by Lemma~\ref{lem:leftrighttailbandlowerbound}, we have $p_r \geq g(\delta/2)$ for every $r\in \mathcal R_k$.
    Thus, it suffices to choose $K_L$ sufficiently large such that $2e^{-2^{K_L}} < g(\delta/2)$.
    We can do this since $g(\delta/2)$ is a positive constant independent of $K_L$, 
    and $\lim_{K_L \to \infty} 2e^{-2^{K_L}} = 0$.

    For the second claim, by Lemma~\ref{lem:righttailmass}, we have $\sum_{r\in \mathcal R_k} p_r \leq 2^{-K_R}$,
    and by Lemma~\ref{lem:leftrighttailbandlowerbound}, we have $p_b \geq \min\{g(2^{K_L}),g(2^{-K_R-1})\}$ 
    for every $b\in \mathcal B_k$.
    Thus, it suffices to choose $K_R$ sufficiently large such that $2^{-K_R} < \min\{g(2^{K_L}),g(2^{-K_R-1})\}$.
    We can do this since $g(2^{K_L})$ and $g(2^{-K_R-1})$ are positive constants independent of $K_R$, 
    and $\lim_{K_R \to \infty} g(2^{-K_R-1}) = 0$.
\end{proof}
\begin{corollary}
    Lemma~\ref{lem:bandpoolconstants} implies that the band and the pool are well-defined, 
    and the band has constant width, independent of $\lambda$.
\end{corollary}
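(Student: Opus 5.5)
The corollary asserts that the constants of Lemma~\ref{lem:bandpoolconstants} make the band/pool partition of Definition~\ref{def:bandpool} well defined and uniform in $\lambda$, with $|\mathcal B_k|=O(1)$. The plan is to fix the constants in a definite order and then read off the width.

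\emph{Choosing the constants.} First fix $\delta\in(0,\ln 2)$ once and for all. Next choose $K_L$ from the first clause of Lemma~\ref{lem:bandpoolconstants}, so that $2e^{-2^{K_L}}<g(\delta/2)$; this depends only on $\delta$. Then choose $K_R$ from the second clause, so that $2^{-K_R}<\min\{g(2^{K_L}),g(2^{-K_R-1})\}$.

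Here lies the one point needing care: $K_R$ appears on both sides of this inequality, so it is not immediate that some $K_R$ satisfies it. The comparison with $g(2^{K_L})$ is harmless, since that is a fixed positive constant once $K_L$ is fixed. For the self-referential part, use $g(x)=e^{-x}(1-e^{-x})\ge e^{-x}\,x/2$ for small $x\in(0,1]$. This gives $g(2^{-K_R-1})\ge e^{-1}2^{-K_R-2}$, which compares to $2^{-K_R}$ only up to the constant factor $1/(4e)$, so the inequality as literally written may fail. I would repair this by strengthening Lemma~\ref{lem:righttailmass} to a bound of the form $\sum_{r\in\mathcal R_k}p_r\le 2\cdot 2^{-K_R-1}$ and comparing against $p_{k+K_R}$ directly, or else note that what is actually needed is dominance by the \emph{interior} band elements, i.e.\ shift the band edge by a constant $c$ with $2^{-c}<1/(8e)$. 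Either fix adds only $O(1)$ to $K_R$, and I expect this step to be the main obstacle.

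\emph{Reading off the conclusion.} Once the constants exist, they are determined by $\delta$ alone and not by $\lambda$ or $k$. Definition~\ref{def:bandpool} then gives $|\mathcal B_k|=K_L+K_R+1=O(1)$ for every $k\ge1$ and every $\lambda\in I_k$. By Lemma~\ref{lem:rmaxconstant}, $r_{\max}-k=O(1)$, so $\mathcal R_k$ is a finite set, possibly empty, and $\mathcal L_k$ is finite because it is bounded by $k$. Together with the partition identity stated after Definition~\ref{def:bandpool}, this shows that the band and pool are well defined, with constant-width band independent of $\lambda$.
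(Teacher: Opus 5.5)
Your closing paragraph is the paper's whole argument. The corollary is stated without proof, as the observation that $K_L,K_R$ from Lemma~\ref{lem:bandpoolconstants} depend only on $\delta$, so $|\mathcal B_k|\le K_L+K_R+1$ uniformly in $\lambda$. You are right that the paper's proof of the lemma's second clause is broken, and it is worse than "may fail". Since $g(x)\le x$, we have $g(2^{-K_R-1})\le 2^{-K_R-1}<2^{-K_R}$, so the required inequality fails for \emph{every} $K_R$. Once you reject that step, though, your proof rests on your two repairs, and neither works.

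\emph{First repair.} The bound $2\cdot 2^{-K_R-1}$ is just the old bound $2^{-K_R}$. Since $p_{k+K_R}=g(x_{k+K_R})\le x_{k+K_R}\le 2^{-K_R}$, nothing changes. No refinement of this kind can help, because bounding $\sum_{r\in\mathcal R_k}p_r$ by the full tail is hopeless: $\Pr[R_j>k+K_R]=1-e^{-x_{k+K_R}}=e^{x_{k+K_R}}p_{k+K_R}>p_{k+K_R}$.

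\emph{Second repair.} This one is circular. If the ranks $k+K_R+1,\dots,k+K_R+c$ join the band, its new edge has weight at most $2^{-K_R-c}$, and you face the same comparison with $K_R+c$ in place of $K_R$. If they join neither set, band and pool no longer partition the ranks. If you demand dominance only by interior band elements, the edge elements may interleave with the tail, which is exactly what Corollary~\ref{cor:bandpoolcomb} rules out.

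\emph{What actually works.} The missing ingredient is the truncation at $r_{\max}$, which you already cite. Write $\theta=\lambda 2^{-k}\in(1/2,1]$. By Lemma~\ref{lem:rmaxconstant}, $r_{\max}=k+\lceil\log_2(\theta/\delta)\rceil\le k+\lceil\log_2(1/\delta)\rceil$. So taking $K_R\ge\max\{1,\lceil\log_2(1/\delta)\rceil\}$ makes $\mathcal R_k=\emptyset$ for all $k\ge 1$ and all $\lambda\in I_k$. The lemma's second clause then holds trivially, since every $p_b>0$, and the first clause is vacuous. Both constants depend only on $\delta$, and your last paragraph then finishes the proof. This degenerate choice no longer provides the left-tail-versus-band comparison that Corollary~\ref{cor:bandpoolcomb} relies on, but that is outside the present statement.
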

\begin{corollary}
    \label{cor:bandpoolcomb}
    Lemma~\ref{lem:bandpoolconstants} implies that, applying the Huffman algorithm on the distribution,
    the left-hand tail first collapses into a single node, and then the right-hand tail collapses into a single node.
    Both the left-hand tail and the right-hand tail satisfy the comb criterion in Lemma~\ref{lem:combcriterion} 
    and collapse into a comb structure with a predefined merge order. Thus, the only nontrivial merges that can occur
    are those involving the band and only the band.
\end{corollary}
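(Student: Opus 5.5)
The plan is to follow the run of Huffman's algorithm on the weights $\{p_r\}_{r=0}^{r_{\max}}$, with $\lambda\in I_k$ fixed, in three phases. I use the paper's convention that the alphabet is truncated at $r_{\max}$. The one structural fact needed is that Huffman always merges the two currently smallest weights. It is therefore enough to show two invariants: while at least two nodes built only from $\mathcal L_k$ remain, the two smallest nodes are both of that kind; and afterwards, while at least two nodes built only from $\mathcal R_k\cup\mathcal L_k$ remain, the two smallest are of that kind. Both reduce to weight comparisons. Any node formed inside $\mathcal L_k$ has weight at most $\sum_{\ell\in\mathcal L_k}p_\ell$. By the first item of Lemma~\ref{lem:bandpoolconstants} this is smaller than every $p_r$ with $r\in\mathcal R_k$. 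Chaining with the second item gives $\sum_{\ell\in\mathcal L_k}p_\ell < p_r\le\sum_{r\in\mathcal R_k}p_r<p_b$ for every $b\in\mathcal B_k$. So an $\mathcal L_k$-node is never larger than any node outside $\mathcal L_k$, and the left tail fully collapses into one node $S_L$ before anything else is touched.

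Second, I show that this collapse is a comb, using Lemma~\ref{lem:combcriterion} on the weights of $\mathcal L_k$ sorted decreasingly, i.e.\ $p_{k-K_L-1}\ge p_{k-K_L-2}\ge\cdots\ge p_0$. I need $p_\ell\ge p_{\ell-1}+p_{\ell-2}$, which mirrors Lemma~\ref{lem:righttaildominance}. For $\ell\in\mathcal L_k$ we have $x_\ell>2^{K_L}$, and the ratio $g(2x)/g(x)=e^{-x}(1+e^{-x})$ is at most $q_*$ once $x$ exceeds a small absolute constant. This covers all $\ell$ after enlarging $K_L$ if necessary; the endpoint $p_0=e^{-\lambda}\le g(\lambda/2)\cdot$const is handled the same way. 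Equation~\eqref{eq:qpqsq} then finishes this step.

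Third, I treat the phase that starts with the nodes $\mathcal R_k\cup\{S_L\}$. I take $K_R\ge K'_R$, so that Lemma~\ref{lem:righttaildominance} gives the comb criterion along $\mathcal R_k$. The sequence to feed to Lemma~\ref{lem:combcriterion} is $p_{k+K_R+1}\ge\cdots\ge p_{r_{\max}}\ge S_L$. The only new condition is at the junction: $p_{r_{\max}-1}\ge p_{r_{\max}}+S_L$. Since $p_{r_{\max}-1}\ge p_{r_{\max}}/q_*$, this follows from $S_L\le q_*\,p_{r_{\max}}$, and that holds by choosing $K_L$ so that $2e^{-2^{K_L}}\le q_*\,g(\delta/2)$, using Lemmata~\ref{lem:lefttailmass} and~\ref{lem:leftrighttailbandlowerbound}. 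Every node built in this phase weighs at most $S_L+\sum_{r\in\mathcal R_k}p_r$. I keep this strictly below $\min_{b\in\mathcal B_k}p_b$ by enlarging $K_R$ slightly; the bounds $2^{-K_R}$ and $2e^{-2^{K_L}}$ both have slack against the fixed band minimum $\min\{g(2^{K_L}),g(2^{-K_R-1})\}$. The invariant therefore holds, the pool ends as a single comb-shaped node, and all remaining merges involve band nodes (and this pool node) only.

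I expect the main obstacle to be choosing the constants consistently. $K_R$ appears on both sides of its defining inequality through $g(2^{-K_R-1})$, and the junction conditions need extra slack beyond what Lemma~\ref{lem:bandpoolconstants} states. I would first fix $\delta$, then $K_L$ (depending only on $\delta$), then $K_R$. For $K_R$ I would check that $2^{-K_R}+2e^{-2^{K_L}}$ stays below $g(2^{-K_R-1})\approx 2^{-K_R-1}$; if that fails, I would verify directly that the last pool node still merges before the band. A secondary nuisance is ties in Huffman's algorithm, which I would handle by fixing a deterministic tie-breaking rule, or by noting that the inequalities above are strict except on a finite set of $\lambda$ in each $I_k$.
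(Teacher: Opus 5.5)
The structure of your argument matches the paper's: the two dominance items of Lemma~\ref{lem:bandpoolconstants} fix the order of collapse, and Lemma~\ref{lem:combcriterion} gives the comb shape. Your left-tail ratio $g(2x)/g(x)=e^{-x}(1+e^{-x})$ and the junction condition $S_L\le q_*\,p_{r_{\max}}$ are correct, and they fill in details the paper treats as immediate.

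\textbf{The gap is in your third step.} You require $S_L+\sum_{r\in\mathcal R_k}p_r<\min_{b\in\mathcal B_k}p_b$. You say this follows by enlarging $K_R$, because $2^{-K_R}$ and $2e^{-2^{K_L}}$ have slack against $\min\{g(2^{K_L}),g(2^{-K_R-1})\}$. That cannot work. Since $g(x)\le x$, we have $g(2^{-K_R-1})\le 2^{-K_R-1}<2^{-K_R}$, and also $g(2^{K_L})<e^{-2^{K_L}}<2e^{-2^{K_L}}$. So the sum of your two bounds exceeds both entries of that minimum for every choice of constants. Moreover, the minimum is not fixed: it shrinks with $K_R$ and always stays below half the tail bound. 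The check in your last paragraph therefore fails for every $K_R$, and ``verify directly'' is exactly the missing argument. You are right that the paper's own choice of $K_R$ in proving Lemma~\ref{lem:bandpoolconstants} has this circularity, but the corollary takes the lemma as given.

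\textbf{The repair is that you do not need the extra slack.} Huffman only requires that, while at least two pool nodes remain, each of them is lighter than every band weight. A pool node not containing $S_L$ is a union of $\mathcal R_k$-leaves, so it weighs at most $\sum_{r\in\mathcal R_k}p_r<p_b$ by the second item. A pool node $P$ containing $S_L$ coexists with another pool node built only from $\mathcal R_k$-leaves. That node weighs at least $\min_{r\in\mathcal R_k}p_r>\sum_{\ell\in\mathcal L_k}p_\ell$ by the first item. Hence the weight of $P$ is at most $S_L+\sum_{r\in\mathcal R_k}p_r-\min_{r\in\mathcal R_k}p_r<\sum_{r\in\mathcal R_k}p_r<p_b$. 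So the lemma as stated suffices, and this is what the paper's one-line derivation implicitly uses.

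\textbf{A separate issue is the empty right tail.} Your chain $\sum_{\ell\in\mathcal L_k}p_\ell<p_r\le\sum_{r\in\mathcal R_k}p_r<p_b$ in the first step presupposes $\mathcal R_k\neq\emptyset$. If $\mathcal R_k=\emptyset$, the first item is vacuous and you must check $\sum_{\ell\in\mathcal L_k}p_\ell<p_b$ directly. With $x=x_{k-K_L}>\ln 2$ one has $\sum_{\ell\in\mathcal L_k}p_\ell=e^{-2x}<e^{-x}(1-e^{-x})=p_{k-K_L}$. Combine this with $p_{r_{\max}}\ge g(\delta/2)>2e^{-2^{K_L}}$ and unimodality.
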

\begin{remark}
    We tacitly ignored the fact that the boundary constants in Lemma~\ref{lem:bandpoolconstants} can be rather large, and the band can extend beyond the lower
    or upper bounds of the ranks, but this is of no consequence, as we only care about bounding the band to be of constant width
    asymptotically. The distribution is rather well-behaved in practice, though.
\end{remark}

\subsection{Tree changes on an interval}
Corollary~\ref{cor:bandpoolcomb} says that the only nontrivial merges that can occur are those involving the constant-width band
and only the band, as the tails collapse into combs before interacting with the band. However, this is not enough
to establish the number of tree changes on an interval $I_k$. To do so, we establish that all merges involve
a subset of leaves, so their weights are finite sums of exponentials. Observing that the merge order can only change
if a difference between two weights changes sign, we can bound the number of tree changes by bounding the number of
zeros on such sums, which is constant independent of $\lambda$ by Rolle's theorem, thus establishing the claim.

We need the following two lemmata, the latter of which presents a bound on the number of zeros of bounded-term sums of exponentials, which is a consequence of 
Rolle's theorem.
\begin{lemma}
    \label{lem:exponentialtelescoping}
    For integers $1 \leq a \leq b$,
    \[
    \sum_{r=a}^b p_r(\lambda) = e^{-\frac{\lambda}{2^b}} - e^{-\frac{\lambda}{2^a}} \, .
    \]
\end{lemma}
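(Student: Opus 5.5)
The identity is a telescoping sum, so the plan is to write each $p_r(\lambda)$ with $r\geq 1$ as a difference of consecutive values of the cumulative distribution function. From the list of distributional facts in Section~\ref{sec:distribution}, $\Pr[R_j\leq r]=e^{-\lambda 2^{-r}}$ for $r\geq 1$. Moreover, $p_r=\Pr[R_j\leq r]-\Pr[R_j\leq r-1]$. To avoid any boundary case at $r=1$, I will state this algebraically as
\[
p_r(\lambda)=e^{-x_r}(1-e^{-x_r})=e^{-x_r}-e^{-2x_r}=e^{-\lambda/2^{r}}-e^{-\lambda/2^{r-1}},
\]
which holds for every integer $r\geq 1$ using only $2x_r=x_{r-1}$. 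Writing $F(r)=e^{-\lambda/2^r}$, this reads $p_r(\lambda)=F(r)-F(r-1)$.

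Summing over $r=a,\ldots,b$ then telescopes:
\[
\sum_{r=a}^b p_r(\lambda)=\sum_{r=a}^b\bigl(F(r)-F(r-1)\bigr)=F(b)-F(a-1)=e^{-\lambda/2^b}-e^{-\lambda/2^{a-1}}.
\]

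The one step that needs care is the lower endpoint. The telescoping naturally yields $e^{-\lambda/2^{a-1}}=e^{-2\lambda/2^{a}}$, not $e^{-\lambda/2^a}$ as printed in the statement. A sanity check with $a=b$ confirms this: the printed right-hand side would vanish, while $p_a>0$. So I would prove the identity with $2^{a-1}$ in the second exponent and note that the displayed $2^a$ should read $2^{a-1}$. The form that matters for the later Rolle-type argument is unaffected: the sum of any consecutive block of band probabilities is a difference of two exponentials in $\lambda$.

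I expect no real obstacle here. The only point to verify is that the formula $p_r=e^{-x_r}-e^{-x_{r-1}}$ uses the convention $x_r=\lambda2^{-r}$ and is valid down to $r=1$, which is why the hypothesis is $a\geq 1$. For $r=0$ the formula $p_0=e^{-\lambda}$ has a different form.
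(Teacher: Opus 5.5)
Your proof is correct and is essentially the paper's argument: the paper says only that the identity follows from the rank distribution, i.e.\ the difference $\Pr[R_j\le b]-\Pr[R_j\le a-1]$, which is exactly your telescoping. You are also right that the printed statement has an off-by-one error, since both arguments give $e^{-\lambda/2^b}-e^{-\lambda/2^{a-1}}$ (the case $a=b$ confirms this), and the later Rolle-type argument still goes through because the block sum remains a difference of two exponentials in $\lambda$.
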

\begin{proof}
    Follows directly from the properties of the distribution as discussed in Section~\ref{sec:distribution}.
\end{proof}
\begin{corollary}
    \label{cor:tailzeros}
    The nodes that represent the collapsed left and right tails have weights that are a sum of at most four exponentials.
\end{corollary}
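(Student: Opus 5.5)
The plan is to compute the weight of each collapsed tail node explicitly in closed form, using Lemma~\ref{lem:exponentialtelescoping} together with the formula for $p_0$. I will treat the two tails separately.

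\emph{Left tail.} By Definition~\ref{def:bandpool}, $\mathcal L_k=\{0,1,\ldots,k-K_L-1\}$. If this set is empty there is nothing to prove. Otherwise I split off $r=0$: its weight is $p_0(\lambda)=e^{-\lambda}$, a single exponential. The remaining indices $1,\ldots,k-K_L-1$ form a contiguous block. Applying Lemma~\ref{lem:exponentialtelescoping} with $a=1$ and $b=k-K_L-1$ gives $e^{-\lambda/2^{k-K_L-1}}-e^{-\lambda/2}$. The total weight is therefore
\[
e^{-\lambda}+e^{-\lambda/2^{k-K_L-1}}-e^{-\lambda/2},
\]
which has at most three exponential terms. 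This also follows more directly from the fact that $\sum_{r=0}^{b}p_r=\Pr[R_j\le b]=e^{-\lambda 2^{-b}}$, a single exponential. I will note this as a sanity check, though citing the lemma keeps the counting uniform.

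\emph{Right tail.} Here $\mathcal R_k=\{k+K_R+1,\ldots,r_{\max}\}$ is again a contiguous block of indices $\ge 1$. Lemma~\ref{lem:exponentialtelescoping} gives the weight
\[
e^{-\lambda/2^{r_{\max}}}-e^{-\lambda/2^{k+K_R+1}},
\]
which consists of two exponentials.

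\emph{Combined and intermediate nodes.} Corollary~\ref{cor:bandpoolcomb} says the left tail collapses first and is then absorbed while the right tail collapses, so the merged pool node also has to be covered. Its weight is the sum of the two expressions above, at most $3+2=5$ terms. Using the single-exponential form for the left tail instead, it is $1+2=3\le 4$ terms, which matches the stated bound of four.

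Intermediate comb nodes within each tail are also contiguous blocks: they are suffixes of the left-tail comb order and suffixes of the right tail. Each is therefore a difference of two exponentials, plus possibly $e^{-\lambda}$. I will also remark that the exponents are of the form $\lambda c$ with $c$ constant on each $I_k$, since $k$, $K_L$, $K_R$ and $r_{\max}-k$ are fixed there by Lemma~\ref{lem:rmaxconstant}. This is the property the subsequent Rolle argument needs.

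The main obstacle is bookkeeping rather than mathematics. First, the $r=0$ term has a different form from the others and is not covered by the telescoping lemma. Second, the boundary cases where $\mathcal L_k$ or $\mathcal R_k$ is empty or truncated have to be handled. The second case only reduces the number of terms.
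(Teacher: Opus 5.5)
Your argument is essentially the paper's: each tail is a contiguous block of indices, so Lemma~\ref{lem:exponentialtelescoping} reduces its weight to a difference of two exponentials. You are also more careful than the paper about the $r=0$ term, the merged pool node and the intermediate comb nodes.

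One slip is inherited from the paper's statement of that lemma. Telescoping $p_r=e^{-\lambda/2^{r}}-e^{-\lambda/2^{r-1}}$ actually gives $\sum_{r=a}^{b}p_r=e^{-\lambda/2^{b}}-e^{-\lambda/2^{a-1}}$; the stated form gives $0$ when $a=b$. As a result, your three-term left-tail expression does not agree with your own sanity check $e^{-\lambda 2^{-b}}$. With the corrected identity the $e^{-\lambda}$ terms cancel, so the left tail is exactly one exponential and the right tail is $e^{-\lambda/2^{r_{\max}}}-e^{-\lambda/2^{k+K_R}}$. The count (at most three terms, within the bound of four) is therefore unaffected.

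Separately, your side remark that $r_{\max}-k$ is fixed on $I_k$ goes beyond what Lemma~\ref{lem:rmaxconstant} gives. Since $r_{\max}-k=\lceil\log_2(\theta/\delta)\rceil$ with $\theta=\lambda 2^{-k}\in(1/2,1]$, it can take two values. The exponents are thus only piecewise constant on $I_k$, with one breakpoint, which is harmless for the subsequent Rolle step.
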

\begin{proof}
    The claim follows by observing that both the left and right tails consist of consequtive indices and by 
    Lemma~\ref{lem:exponentialtelescoping}.
\end{proof}
\begin{lemma}
    \label{lem:rollezeros}
    Let $0 < a_1 < a_2 < \cdots < a_\ell$, and
    \[
    F(\lambda) = \sum_{i=1}^\ell c_i e^{-a_i\lambda} \, ,
    \]
    such that not all $c_i$ are zero. Then, $F$ has at most $\ell-1$ distinct zeros on $(0,\infty)$.
\end{lemma}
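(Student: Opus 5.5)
The plan is the classical induction on the number of terms~$\ell$, using Rolle's theorem after factoring out one exponential. First we reduce to the case where every coefficient is nonzero. If some $c_i=0$, we drop that term; the function is unchanged and the number of terms decreases, so the claimed bound $\ell-1$ only becomes easier. Hence we may assume all $c_i\neq 0$ and argue by induction on~$\ell$.

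The base case is $\ell=1$. Here $F(\lambda)=c_1e^{-a_1\lambda}$ with $c_1\neq 0$, which never vanishes. It therefore has $0=\ell-1$ zeros.

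For the inductive step, suppose the claim holds for all sums with at most $\ell-1$ terms and nonzero coefficients, and let $F$ have $\ell\geq 2$ terms. Multiply by the positive function $e^{a_1\lambda}$ to get
\[
G(\lambda)=e^{a_1\lambda}F(\lambda)=c_1+\sum_{i=2}^{\ell} c_i e^{-(a_i-a_1)\lambda} .
\]
Because $e^{a_1\lambda}>0$, $G$ has exactly the same zeros on $(0,\infty)$ as $F$. Differentiating kills the constant:
\[
G'(\lambda)=\sum_{i=2}^{\ell} -c_i(a_i-a_1)e^{-(a_i-a_1)\lambda} .
\]
This is a sum of $\ell-1$ exponentials with exponents $0<a_2-a_1<\cdots<a_\ell-a_1$. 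Its coefficients $-c_i(a_i-a_1)$ are nonzero because $c_i\neq 0$ and $a_i>a_1$. By the induction hypothesis, $G'$ has at most $\ell-2$ zeros on $(0,\infty)$.

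Now suppose, for contradiction, that $G$ had $\ell$ distinct zeros $z_1<\cdots<z_\ell$ in $(0,\infty)$. Since $G$ is smooth, Rolle's theorem gives a zero of $G'$ strictly inside each interval $(z_j,z_{j+1})$. These $\ell-1$ zeros of $G'$ are distinct because the intervals are disjoint, contradicting the bound $\ell-2$. Hence $G$, and therefore $F$, has at most $\ell-1$ zeros.

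There is no real obstacle. The only points that need care are ensuring the derivative's coefficients stay nonzero, which is handled by the reduction at the start and the strict ordering of the $a_i$, and noting that the shifted exponent $0$ is allowed before differentiation, since only the terms with positive exponents $a_i-a_1>0$ survive.
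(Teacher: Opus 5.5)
Your proof is correct and follows the paper's argument: induction on $\ell$, multiplying by $e^{a_1\lambda}$, differentiating to get a sum of $\ell-1$ exponentials with positive strictly increasing exponents, and applying Rolle's theorem. Your preliminary reduction to all $c_i\neq 0$ also closes a small gap the paper's proof leaves open: when $c_2=\cdots=c_\ell=0$, the derivative satisfies $G'\equiv 0$, so the paper's induction hypothesis (which requires not all coefficients to be zero) does not literally apply.
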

\begin{proof}
    We induct on $\ell$. The base case $\ell=1$ is trivial, as $F$ has no zeros, by assumption that $c_1 \neq 0$.

    Assuming the claim holds for $\ell-1$, consider $G(\lambda) = e^{a_1\lambda} F(\lambda)$.
    Clearly, $G$ and $F$ have the same zeros, so it suffices to show that $G$ has at most $\ell-1$ distinct zeros.
    By Rolle's theorem, between any two distinct zeros of $G$ there is a zero of $G'$.
    We have $G'(\lambda) = \sum_{i=2}^\ell c_i (a_1 - a_i) e^{-(a_i - a_1)\lambda}$, 
    which is also a sum of $\ell-1$ exponentials with at most $\ell-2$ distinct exponents by the induction hypothesis.
    Thus, the claim holds for $G$.
\end{proof}
\begin{remark}
The result of Lemma~\ref{lem:rollezeros} could also have been obtained by arguing that $F$ is an analytic function
on a compact interval. 
\end{remark}
We can now argue that there are only a constant number of tree changes
as $\lambda$ varies over the interval $I_k$.
\begin{proposition}
    \label{prop:treechangesoninterval}
    For fixed $k\geq 1$, as $\lambda$ varies over $I_k$, 
    the Huffman tree can change at most $O(1)$ times.
\end{proposition}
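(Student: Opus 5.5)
We will reduce the statement to counting sign changes of finitely many functions of $\lambda$ on $I_k$, each a short sum of exponentials, and then apply Lemma~\ref{lem:rollezeros}.

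\emph{Step 1: reduce to the band.} Fix $k\geq 1$ and restrict to $\lambda\in I_k$. We take the constants $K_L,K_R$ from Lemma~\ref{lem:bandpoolconstants}; they are independent of $\lambda$. By Corollary~\ref{cor:bandpoolcomb}, Huffman's algorithm runs in three phases. First the left-tail indices $\mathcal L_k$ collapse into a comb with a fixed merge order into one node $\ell^*$. Then the right-tail indices $\mathcal R_k$ do the same, producing a node $\rho^*$. Only then do merges involving the band begin. The tail combs are therefore identical for all $\lambda\in I_k$. It remains to analyse Huffman's algorithm on the $O(1)$ items $\mathcal B_k\cup\{\rho^*\}$ (with $\ell^*$ absorbed into $\rho^*$'s comb), whose number $K_L+K_R+O(1)$ is independent of $\lambda$. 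One boundary issue is that $r_{\max}$ may depend on $\lambda$ within $I_k$. By Lemma~\ref{lem:rmaxconstant}, however, $r_{\max}-k$ takes only $O(1)$ values. We split $I_k$ into $O(1)$ subintervals on which $r_{\max}$ is constant, so the index sets are fixed on each subinterval.

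\emph{Step 2: the tree is determined by finitely many comparisons.} On a fixed subinterval, the run of Huffman's algorithm is determined by the outcomes of the comparisons it makes. Each comparison is between weights of two nodes, and every node is the union of a subset of the finite leaf set. The total number of possible (subset, subset) pairs is bounded by a constant $C$ depending only on $K_L,K_R,\delta$. The tree $T(\lambda)$ can change only at a $\lambda$ where the sign of some difference $D_{S,S'}(\lambda)=\sum_{r\in S}p_r(\lambda)-\sum_{r\in S'}p_r(\lambda)$ changes. We fix a deterministic tie-breaking rule so that the tree is a function of the sign vector. Hence the number of changes is at most $C$ times the maximum number of sign changes, i.e.\ zeros, of a single $D_{S,S'}$, plus $O(1)$ for the subinterval boundaries.

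\emph{Step 3: count zeros.} Each $p_r(\lambda)=e^{-\lambda 2^{-r}}-e^{-\lambda 2^{-r+1}}$ for $r\geq 1$, and $p_0=e^{-\lambda}$. Thus $D_{S,S'}$ is of the form $\sum_i c_i e^{-a_i\lambda}$, where the exponents $a_i\in\{2^{-r}\}$ range over at most $r_{\max}+2$ values. This is where the main obstacle lies. The left tail contains about $k$ terms, which is not constant, so applying Lemma~\ref{lem:rollezeros} naively gives $O(k)$ zeros. To fix this, we use Lemma~\ref{lem:exponentialtelescoping} and Corollary~\ref{cor:tailzeros}: the collapsed tails telescope, so $\ell^*$ has weight $1-e^{-\lambda 2^{-(k-K_L-1)}}$, or a similar form including $p_0$. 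In fact $\sum_{r\leq a}p_r=e^{-\lambda 2^{-a}}$ collapses fully. Because the tails collapse before interacting with the band, every node occurring in a comparison in the band phase is a union of band leaves, possibly together with $\ell^*$ and $\rho^*$. Its weight is therefore a sum of at most $2|\mathcal B_k|+4$ exponentials. Comparisons made inside the tail combs never change, by Lemma~\ref{lem:bandpoolconstants}. Each relevant nonzero $D_{S,S'}$ then has $O(1)$ zeros by Lemma~\ref{lem:rollezeros}. If some $D_{S,S'}$ vanishes identically, the two weights are always equal, so the tie-breaking rule gives a constant outcome.

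Combining the steps, the number of tree changes on $I_k$ is at most $O(1)\cdot C\cdot O(1)=O(1)$, independent of $k$.
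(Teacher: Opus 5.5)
Your proposal is correct and follows essentially the same route as the paper's proof. The pool collapses with a fixed merge order (Corollary~\ref{cor:bandpoolcomb}), so tree changes reduce to sign changes of differences between $O(1)$ candidate node weights; these weights are short exponential sums because the tail nodes telescope (Corollary~\ref{cor:tailzeros}), so each difference has $O(1)$ zeros by Lemma~\ref{lem:rollezeros}. Your explicit handling of the $\lambda$-dependence of $r_{\max}$ and of tie-breaking is a reasonable refinement that the paper leaves implicit.

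One small slip: the collapsed left tail has weight $\sum_{r\le k-K_L-1}p_r=e^{-\lambda 2^{-(k-K_L-1)}}$, not $1-e^{-\lambda 2^{-(k-K_L-1)}}$, as you note yourself in the following sentence.
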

\begin{proof}
    Since the tails have a fixed merge order by Corollary~\ref{cor:bandpoolcomb}, 
    the only way the structure of the tree can change is if there is a tie between two candidate merges involving the band 
    that can change the structure, that is, the difference between the weights of the two candidate merges changes sign 
    as $\lambda$ varies.

    Since $|\mathcal B_k|=O(1)$, there are only $O(1)$ candidate merges that can involve the band, and 
    hence only $O(1)$ candidate merges that can change the structure. Furthermore, since the weights of the nodes are
    finite-term sums of exponentials, as they are the sums of singleton weights for some subset of $\mathcal B_k$,
    and there are only constantly many such subsets, plus
    the tail node by Corollary~\ref{cor:tailzeros}, 
    the difference between the weights of any two candidate merges is a finite-term sum of exponentials, 
    so by Lemma~\ref{lem:rollezeros}, there can only be $O(1)$ values of $\lambda$ that can induce a tree change.
\end{proof}

\subsection{Total number of tree changes}
We now bound the total number of tree changes, that is, when when $\lambda$ goes up to $\frac{N}{m}$ for some upper bound~$N$ on the cardinality~$n$.
Let us recall the statement of Proposition~\ref{prp:huffmantreechanges} and prove it.
\prphuffmantreechanges*
\begin{proof}
    If $\lambda\leq 1$, then $[0,\lambda]\in I_0$, meaning $r^*(\lambda) = 1$, which implies there is no left tail.
    Thus, the right tail decays into a comb, yielding at most $O(1)$ tree changes.
    Consequently, if $\lambda_N \leq 1$, the total number of tree changes is $O(1)$, so assume $\lambda_N > 1$.

    The interval $[0,\lambda_N]$ intersects at most $O(\log N)$ intervals of the form $I_k$, and by Proposition~\ref{prop:treechangesoninterval}, 
    there are at most $O(1)$ tree changes on each such interval, so the total number of tree changes is $O(\log N)$.
\end{proof}

\section{Proofs of Theorems~\ref{thm:amortizedupdates} and~\ref{thm:amortizedupdateswithconstantwords}}
\label{app:amortizedupdatesproofs}

We will now analyze the
remaining update types. Throughout this section, we assume that $N$ is the total number of updates, an upperbound on the cardinality, and
$n=1,2,\ldots,N$ is the current cardinality at a certain time step. We fix~$m$, and set $\lambda = \frac{n}{m}$.
After $n$ updates, we receive the $(n+1)^{\text{th}}$ element with a new rank~$R\sim\textrm{Geom}\left(\frac 12\right)$ 
that is independent of $R_{\min}$ or any current register values. We denote as usual $x_r = \frac{\lambda}{2^r}$, $p_0=e^{-x_0}$ and $p_r = e^{-x_r}(1-e^{-x_r})$ for $r\geq 1$.

Let $R_{\min}=\min_{j\in[B]} R_j$ be the minimum rank contained in a bucket. 
The following lemma ais necessary for analyzing the updates.
\begin{lemma}
    \label{lem:bucketmin}
    For any integer $r\geq 0$, we have 
    \[
    \Pr[R_{\min}\leq r] = 1 - \Pr[R_j>r]^B = 1 - (1-e^{-x_r})^B \, ,
    \]
    and
    \[
    \Pr[R_{\min} = r] = \Pr[R_{\min}\leq r] - \Pr[R_{\min}\leq r-1] = (1-e^{-x_{r-1}})^B - (1-e^{-x_r})^B \, .
    \]
\end{lemma}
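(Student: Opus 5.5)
The plan is to compute the distribution of $R_{\min}$ directly from the independence of the registers in the Poissonized model, then obtain the point probabilities by differencing.

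\emph{Step 1: independence within a bucket.} By the Poissonization in Section~\ref{sec:distribution}, the counts $N_{j,r}$ are independent across all $j$ and $r$. Each $R_j$ is a function only of the row $(N_{j,r})_{r\geq 1}$. Hence the $B$ registers $R_1,\ldots,R_B$ of a bucket are independent. By Proposition~\ref{prp:bucketdistribution}, each of them has the rank distribution with load factor $\lambda=n/m$, so they are i.i.d.

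\emph{Step 2: the distribution function.} I would pass to the complementary event: $R_{\min}>r$ holds exactly when $R_j>r$ for every $j\in[B]$. Independence then gives $\Pr[R_{\min}>r]=\Pr[R_j>r]^B$, and therefore $\Pr[R_{\min}\leq r]=1-\Pr[R_j>r]^B$. The listed formula $\Pr[R_j>r]=1-e^{-\lambda 2^{-r}}=1-e^{-x_r}$ from Section~\ref{sec:distribution} is valid for all $r\geq 0$. This is consistent with the convention $x_0=\lambda$ and $p_0=e^{-\lambda}$, and it yields the first identity.

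\emph{Step 3: the point probabilities.} I would write $\Pr[R_{\min}=r]=\Pr[R_{\min}\leq r]-\Pr[R_{\min}\leq r-1]$ and substitute Step 2 at $r$ and at $r-1$. The constant terms $1$ cancel, leaving $(1-e^{-x_{r-1}})^B-(1-e^{-x_r})^B$.

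The only point needing care is the boundary case $r=0$. There $\Pr[R_{\min}\leq -1]=0$, so the second formula should be read for $r\geq 1$, or with the convention $(1-e^{-x_{-1}})^B:=1$, i.e.\ $\Pr[R_j>-1]=1$. For $r=0$ itself the first formula already gives $\Pr[R_{\min}=0]=1-(1-e^{-\lambda})^B$. I will state this convention explicitly. Beyond that there is no real obstacle: the whole proof rests on the independence granted by the Poisson model, which is exactly why that model is used.
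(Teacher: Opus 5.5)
Your proof is correct and takes the same route as the paper's one-line argument: $R_{\min}>r$ holds exactly when all $B$ registers exceed $r$, and you then use independence and take differences. Your boundary remark is also right. Read literally with $x_{-1}=2\lambda$, the second formula fails at $r=0$, and the paper itself uses $\Pr[R_{\min}=0]=1-(1-e^{-\lambda})^B$ separately when it applies the lemma.
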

\begin{proof}
    By definition: the minimum rank can be at most $r$ if and only if it is not the case that all registers are greater than $r$.
\end{proof}

An ordinary update is triggered if and only if $R\geq R_{\min}$, otherwise we can just discard the element.
The probability of such an update is by independence of $R$ and $R_{\min}$, Lemma~\ref{lem:bucketmin}, 
and marginalizing over $r$,
\begin{equation}
    \label{eq:ordinaryupdateprobability}
    \begin{split}
    \Pr[R\geq R_{\min}] & = \sum_{r=0}^\infty \sum_{r'=r}^\infty \Pr[R_{\min} = r] \Pr[R = r'] \\
    & = \sum_{r=0}^\infty \Pr[R_{\min} = r] \sum_{r'=r}^\infty \Pr[R = r'] \\
    & = \Pr[R_{\min} = 0] + \sum_{r=1}^\infty \Pr[R_{\min} = r] \Pr[R \geq r] \\
    & = \Pr[R_{\min} = 0] + \sum_{r=1}^\infty \Pr[R_{\min} = r] 2^{-r-1} \\
    & = \left(1-(1-e^{-\lambda})^B\right) + \sum_{r=1}^\infty \left((1-e^{-x_{r-1}})^B - (1-e^{-x_r})^B\right) 2^{-r-1} 
    \, .
    \end{split}
\end{equation}

Let us now provide an upper bound on the number of ordinary updates.
\begin{proposition}
    \label{prop:ordinaryupdates}
    Assuming $m=O(N/\log^3 N)$, after $N$ insertions, the expected number of ordinary updates is at most $O(N/\log N)$.
\end{proposition}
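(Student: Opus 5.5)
The plan is to bound the expected number of ordinary updates by summing, over $n=1,\dots,N$, the probability that the $(n+1)^{\text{th}}$ insertion triggers one. By \eqref{eq:ordinaryupdateprobability}, this probability is $\Pr[R\geq R_{\min}]$ evaluated at $\lambda=\lambda_n=n/m$. We then show that this probability is $O(B/\lambda_n)$ up to a negligible term. Since $B=O(\log N)$, summing $O(mB/n)$ over $n$ gives $O(mB\log N)=O(m\log^2 N)$. With $m=O(N/\log^3 N)$ this is $O(N/\log N)$, as required.

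\textbf{Step 1: splitting on the bucket minimum.} The key estimate is
\[
\Pr[R\geq R_{\min}]\leq \sum_{r\ge 0}\Pr[R_{\min}=r]\,2^{-r}.
\]
We split according to whether $R_{\min}$ is below a threshold $s$ or at least $s$. A natural choice is $s=\lfloor\log_2\lambda - \log_2\log_2 B\rfloor - c$, i.e.\ $x_s\approx\lambda 2^{-s}\ge C\ln B$.
\begin{itemize}
\item \emph{Low minimum.} By Lemma~\ref{lem:bucketmin} and a union bound, $\Pr[R_{\min}\le s]\le B e^{-x_s}$. Choosing the constant so that $x_s\ge 2\ln B+\ln\lambda$ makes this at most $1/\lambda$.
\item \emph{High minimum.} On the event $R_{\min}\ge s$, the update probability is at most $2^{-s}=x_s/\lambda$, which is $O(\log B\cdot\log\lambda/\lambda)$.
\end{itemize}
A slightly cruder but simpler option is to take $x_s=\Theta(\ln(B\lambda))$. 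This gives a per-step probability of $O(\log(B\lambda)/\lambda + 1/\lambda)=O(\log N/\lambda_n)$. Indeed, this crude bound already suffices even without the factor $B$: summing $m\log N/n$ over $n\le N$ yields $O(m\log^2 N)$.

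\textbf{Step 2: small loads.} For $\lambda_n<1$ (i.e.\ $n<m$), we simply bound the probability by $1$, contributing at most $m$ updates. For $1\le\lambda_n$, we use the bound from Step 1.

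\textbf{Step 3: summing.} Summing the bound over $m\le n\le N$ gives
\[
\sum_n O\!\left(\frac{m\log N}{n}\right)=O(m\log^2 N).
\]
Combined with the $m$ updates from Step 2, the total is $O(m+m\log^2N)=O(N/\log N)$ under the assumption $m=O(N/\log^3 N)$.

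\textbf{Main obstacle.} The hard part is Step 1: controlling the lower tail of $R_{\min}$ uniformly, including the regime where $s$ is small or negative, in which case the threshold argument degenerates for moderate $\lambda$. I would handle this by noting that when $\lambda=O(\log(B\lambda))$, the claimed bound $O(\log N/\lambda)$ exceeds a constant, so the trivial bound of $1$ suffices there.

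There is also a modeling subtlety. The Poisson model makes $R$ and the registers independent, and it takes $R\ge R_{\min}$ rather than $R>R_{\min}$ as the trigger. Both only help the upper bound, so I would take these conventions as given from the setup preceding \eqref{eq:ordinaryupdateprobability}.
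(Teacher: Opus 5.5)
Your proof is correct. Its outer structure matches the paper's: bound the per-insertion trigger probability by $O(\log N/\lambda_n)$, sum the resulting harmonic series over $n$, and plug in $m=O(N/\log^3 N)$. The route differs in how you bound $\Pr[R\ge R_{\min}]$ for a single insertion.

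The paper splits the exact series at the mode $r^*=\lceil\log_2\lambda\rceil$. For each rank $r<r^*$ it uses the per-term union bound $\Pr[R_{\min}=r]\le Be^{-x_r}$, keeps the weight $2^{-r}$, and sums the doubly exponential series $2^{-r^*}\sum_{k\ge1}2^k e^{-2^{k-1}}$. Ranks $r\ge r^*$ contribute only $O(2^{-r^*})$. This gives $O(B/\lambda)$, linear in $B$.

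You instead use one threshold $s$, placed about $\log_2\ln(B\lambda)$ levels below the mode. You bound the whole event $\{R_{\min}\le s\}$ by a single union bound $Be^{-x_s}\le 1/\lambda$. On the complement, the trigger forces $R>s$, costing $2^{-s}=x_s/\lambda$. This gives $O(\log(B\lambda)/\lambda)$.

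Since $B=\Theta(\log N)$ and $\lambda\le N$, both bounds are $O(\log N/\lambda_n)$ and yield the same $O(m\log^2 N)$ total. Your version depends on $B$ only logarithmically, so it would survive much larger buckets. The price is a $\log\lambda$ term and a $\lambda$-dependent cut point whose degenerate regime ($s<0$) must be handled separately, which you do. The paper's version exploits the doubly exponential left tail at a fixed, natural cut point and needs no tuning.

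Two minor slips, both affecting only constants:
\begin{itemize}
\item Your first ``natural choice'' of $s$ shifts only by $\log_2\log_2 B$ with a constant $c$, so it cannot make $x_s\ge 2\ln B+\ln\lambda$. Commit to the choice you call cruder: take the largest integer $s$ with $x_s\ge\ln(B\lambda)$. Then $x_s<2\ln(B\lambda)$, and the split should be $\{R_{\min}\le s\}$ versus $\{R_{\min}>s\}$.
\item With $R\sim\mathrm{Geom}(1/2)$ on $\{1,2,\dots\}$, the weight for the trigger $R\ge R_{\min}$ is $\Pr[R\ge r]=2^{1-r}$. Your $2^{-r}$ is the right weight for $R>R_{\min}$.
\end{itemize}
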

\begin{proof}
    For simplicity, let us write $\lambda_n=\frac{n}{m}$ for $n=1,2,\ldots,N$,
    $f(x) = (1-e^{-x})^B$, 
    and rewrite Equation~\eqref{eq:ordinaryupdateprobability} as
    \begin{equation}
        \label{eq:ordinaryupdateprobability2}
    S(\lambda) = f(\lambda) + \sum_{r=1}^\infty \left(f(x_{r-1}) - f(x_r)\right) 2^{-r-1} \, .
    \end{equation}
    We need to evaluate the sum
    \begin{equation}
        \label{eq:ordinaryupdateprobabilitysum}
        \sum_{n=0}^N S(\lambda_n) \, .
    \end{equation}
    To do this, we shall bound the inner sum of Equation~\eqref{eq:ordinaryupdateprobability2}. Let us split
    Equation~\eqref{eq:ordinaryupdateprobability2} into three parts: the constant term $f(\lambda)$, 
    the first ranks $1 \leq r < r^*=\lceil \log_2 \lambda \rceil$,
    and the remaining terms $r \geq r^*$:
    \begin{equation}
        \label{eq:ordinaryupdateprobability3}
    S(\lambda) = f(\lambda) + \sum_{r=1}^{r^*-1} \left(f(x_{r-1}) - f(x_r)\right) 2^{-r-1} + \sum_{r=r^*}^\infty \left(f(x_{r-1}) - f(x_r)\right) 2^{-r-1} \, .
    \end{equation}

    The first term of Equation~\eqref{eq:ordinaryupdateprobability3} bounded by Bernoulli's inequality as
    \begin{equation}
        \label{eq:constanttermbound}
        1 - (1-e^{-\lambda})^B \leq B e^{-\lambda} \leq \frac{B}{\lambda}
    \end{equation}

    For the second term, observe that $f$ is increasing and $f(x_{r-1}) \leq 1$, so therefore
    \[
    f(x_{r-1}) - f(x_r) \leq 1 - f(x_r) = 1 - (1-e^{-x_r})^B \leq Be^{-x_r} \, ,
    \]
    hence
    \begin{equation}
        \label{eq:firstsumbound}
        \sum_{r=1}^{r^*-1} \left(f(x_{r-1}) - f(x_r)\right) 2^{-r-1} 
        \leq \frac{B}{2} \sum_{r=1}^{r^*-1} e^{-x_r} 2^{-r}
        = \frac{B}{2} \sum_{r=1}^{r^*-1} e^{-\frac{\lambda}{2^r}} 2^{-r} \, .
    \end{equation}
    Substituting $k=r^*-r$, we have $2^{-r}e^{-\frac{\lambda}{2^r}} = 2^{-r^*}2^k e^{-\frac{\lambda}{2^r}}$. 
    Furthermore, we observe that $\frac{1}{2}<\frac{\lambda}{2^{r^*}} \leq 1$, implying
    $e^{-\frac{\lambda}{2^r}} \leq e^{-2^{k-1}}$, so
    we can further bound the sum of Equation~\eqref{eq:firstsumbound} as
    \begin{equation}
        \label{eq:firstsumbound2}
        \sum_{r=1}^{r^*-1} \left(f(x_{r-1}) - f(x_r)\right) 2^{-r-1} 
        \leq \frac{B}{2} 2^{-r^*} \sum_{k=1}^{\infty} e^{-2^{k-1}} 2^k = O\left(\frac{B}{\lambda}\right)
        \, ,
    \end{equation}
    by the fact that the series converges to a constant and $r^*=\lceil \log_2 \lambda \rceil$.
    
    Finally, for the third term of Equation~\eqref{eq:ordinaryupdateprobability3}, we have
    $x_r \leq 1$ for $r\geq r^*$, so therefore $1-e^{x_r} \leq x_r$, and hence
    $f(x_r) = (1-e^{-x_r})^B \leq x_r^B$ for $r \geq r^*$, so
    $f(x_{r-1}) - f(x_r) \leq f(x_{r-1}) \leq x_{r-1}^B$.
    Therefore, 
    \[
    \sum_{r=r^*+1}^\infty \left(f(x_{r-1}) - f(x_r)\right) 2^{-r-1} \leq
    \sum_{r=r^*+1}^\infty \left(\frac{\lambda}{2^{r-1}}\right)^B 2^{-r-1} = O\left(\frac{1}{\lambda}\right)\,,
    \]
    because $2^{r^*-1} < \lambda \leq 2^{r^*}$, so the series is $O(2^{-r^*}) = O\left(\frac{1}{\lambda}\right)$.
    
    We need to check the special case of $r=r^*$, where we have 
    $0\leq \left(f(x_{r^*-1}) - f(x_{r^*})\right)2^{-r^*-1} \leq 2^{-r^*-1} = O\left(\frac{1}{\lambda}\right)$.

    Putting this all together, we get that the inner sum of Equation~\eqref{eq:ordinaryupdateprobability3} is at 
    most $S(\lambda) = O\left(\frac{B}{\lambda}\right)=O\left(\frac{Bm}{n}\right)$, so we get
    \[
        \sum_{n=0}^N S(\lambda_n) = O\left(Bm \log N\right) = O\left(m \log^2 N\right)\, ,
    \]
    by approximating the harmonic series with the natural logarithm, and considering that $B=O(\log N)$.
    Finally, if $m=O(N/\log^3 N)$, then the expected number of ordinary updates is at most $O(N/\log N)$, thus
    yielding the claim.
\end{proof}
\begin{corollary}
    \label{cor:amortizedordinaryupdates}
    Assuming an ordinary update takes $O(\log n)$ time and that $m=O(n/\log^3 n)$, the amortized cost of an ordinary update is O(1)
    over $n$ updates.
\end{corollary}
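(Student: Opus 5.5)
The corollary is a direct consequence of Proposition~\ref{prop:ordinaryupdates}. The plan is to multiply the expected number of ordinary updates by the per-update cost and then divide by the number of insertions. Fix the stream length $N$, which plays the role of $n$ in the statement. The hypothesis $m=O(n/\log^3 n)$ is exactly the hypothesis $m=O(N/\log^3 N)$ of Proposition~\ref{prop:ordinaryupdates}, so that proposition bounds the expected number of ordinary updates over the whole stream by $O(N/\log N)$.

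Next, bound the per-update cost uniformly. At each time step the current cardinality is at most $N$, so an ordinary update costs $O(\log n)\le O(\log N)$. By linearity of expectation, the total expected cost of all ordinary updates is at most
\[
O\!\left(\frac{N}{\log N}\right)\cdot O(\log N)=O(N).
\]

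Every insertion that does not trigger an ordinary update is a nop after the $O(1)$ comparison with $r_{\min}$. Such insertions contribute $O(1)$ each, hence $O(N)$ in total. Dividing the total cost $O(N)$ by the $N$ insertions gives an amortized (expected) cost of $O(1)$ per insertion.

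The only subtle point is that Proposition~\ref{prop:ordinaryupdates} controls an expectation in the Poissonized model. The amortized bound is therefore in expectation, consistent with how the paper uses it. I would state this explicitly rather than attempt a high-probability version. There is no real obstacle beyond this bookkeeping; the substantive work was already done in bounding $\sum_n S(\lambda_n)=O(m\log^2 N)$.
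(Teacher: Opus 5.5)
Your proposal is correct and matches the paper's reasoning: the paper gives no separate proof and reads the corollary off Proposition~\ref{prop:ordinaryupdates}. It multiplies that proposition's $O(N/\log N)$ bound on the expected number of ordinary updates by the $O(\log N)$ per-update cost to get $O(N)$ total (as in the proof of Theorem~\ref{thm:amortizedupdates}), hence $O(1)$ amortized cost in expectation, exactly as you do.
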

Corollary~\ref{cor:amortizedordinaryupdates} makes intuitive sense: the expected number of ordinary updates is only bounded to a reasonable
number if the number of elements is sufficiently large compared to the number of registers, as otherwise almost all registers are empty,
and we are better off using linear counting~\cite{WhangVT:1990}.

Next, we analyze the number of minimum-rank updates.
\begin{proposition}
    \label{prop:minimumrankupdates}
    The expected number of minimum-rank updates after $N$ insertions is at most $O(m/\log N)$.
\end{proposition}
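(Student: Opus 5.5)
We bound the expected number of minimum-rank updates bucket by bucket and then sum over the $m/B$ buckets, with $B=\alpha\log N$ as in Proposition~\ref{prp:bucketlength}. A minimum-rank update is triggered in bucket $b$ exactly when the last register holding rank $r_{\min}$ is raised, i.e.\ when $c_{\min}$ drops to zero. At that moment the bucket minimum $R_{\min}$ strictly increases. Hence the number of minimum-rank updates in bucket $b$ over the whole stream is at most the number of distinct values taken by $R_{\min}$ in that bucket. Since $R_{\min}$ starts at $0$ and is nondecreasing, this is at most its final value. The plan is therefore to bound $\E[R_{\min}]$ at time $N$ for a single bucket, and then multiply by $m/B=O(m/\log N)$.

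The first step is the bucket-level distribution. By Proposition~\ref{prp:bucketdistribution}, a bucket behaves like a sketch with load factor $\lambda_N=N/m$, and Lemma~\ref{lem:bucketmin} gives $\Pr[R_{\min}>r]=(1-e^{-x_r})^B$. Summing the tail gives
\[
\E[R_{\min}]=\sum_{r\geq 0}(1-e^{-x_r})^B .
\]
We split this sum at $r^*=\lceil\log_2\lambda_N\rceil$, exactly as in the proof of Proposition~\ref{prop:ordinaryupdates}:
\begin{itemize}
\item For $r\geq r^*$ we have $x_r\leq 1$, so each term is at most $x_r^B$. These terms are geometrically decreasing, and their sum is $O(1)$.
\item For $r<r^*$ each term is at most $1$.
\end{itemize}
This crude route gives $\E[R_{\min}]\leq \log_2\lambda_N+O(1)$.

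The second step multiplies this by the number of buckets and uses the fact that the number of distinct values is at most one more than the final value. The crude bound yields $O\bigl((m/B)\log(N/m)\bigr)$ expected minimum-rank updates in total.

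The main obstacle is closing the gap between this $O\bigl((m/B)\log\lambda_N\bigr)=O(m)$ estimate and the claimed $O(m/\log N)$. Closing it requires showing that each bucket experiences only $O(1)$ expected minimum-rank events rather than $\Theta(\log\lambda_N)$. I would try to obtain this by sharpening the per-bucket count instead of using the final value of $R_{\min}$. A change of $R_{\min}$ at step $n$ requires two things at once: that $c_{\min}=1$, and that the incoming element hits that unique register with rank exceeding $r_{\min}$. Writing the expected count as
\[
\sum_{n}\Pr[\text{step } n \text{ hits the unique min register of its bucket with } R>R_{\min}],
\]
we bound each summand by
\[
\frac{1}{m}\,\Pr[c_{\min}=1]\;2^{-R_{\min}}.
\]
Using Lemma~\ref{lem:ranklefttail}, $R_{\min}$ is concentrated near $r^*-\log_2\ln B$. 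This suggests the summand is of order $\ln B/(Bm\lambda_n)\cdot\Pr[c_{\min}=1]$, and summing the harmonic series over $n$ would give the stated bound, provided $\Pr[c_{\min}=1]$ is shown to be small. Bounding $\Pr[c_{\min}=1]$ is where I expect the real difficulty to lie, and I would attack it using the doubly exponential left tail from Lemma~\ref{lem:ranklefttail}. If this refinement fails, the fallback bound $O(m)$ still suffices for the amortization in Theorem~\ref{thm:amortizedupdates}: with $m=O(N/\log^3N)$, the total minimum-rank cost is $O(m\log^2N)=O(N/\log N)$.
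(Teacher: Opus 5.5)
\textbf{The open step cannot be closed, because the stated bound is false.} Your first step is sound, and it is in substance what the paper's own proof establishes. Each minimum-rank update strictly raises $R_{\min}$, so a bucket sees at most $R_{\min}(N)$ of them. Since $\E[R_{\min}]=\sum_{r\ge 0}(1-e^{-x_r})^B\le r^*+O(1)$, the total is $O\bigl((m/B)\log\lambda_N\bigr)=O(m)$. The paper multiplies $m/B$ buckets by $O(\log N)$ possible rank values and actually concludes $O(m)$, not the $O(m/\log N)$ written in the statement.

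The $O(m/\log N)$ bound fails whenever $\lambda_N/\ln B\to\infty$, which includes the regime $m=O(N/\log^3 N)$ where it is used. Fix $r\ge 1$ with $2^{r+2}\ln B\le\lambda_N$, and consider the moment $\lambda=2^r\ln B$, where $x_r=\ln B$ and $x_{r-1}=2\ln B$. By Lemma~\ref{lem:bucketmin}, at that moment $\Pr[R_{\min}=r]=(1-B^{-2})^B-(1-B^{-1})^B\ge 1-B^{-1}-e^{-1}$. At time $N$ we have $x_r\ge 4\ln B$, so $\Pr[R_{\min}\le r]\le Be^{-x_r}\le B^{-3}$. Every value that $R_{\min}$ occupies and later leaves costs its own minimum-rank update. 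Hence each bucket incurs $\Omega\bigl(\log(\lambda_N/\ln B)\bigr)$ of them in expectation, and the total is $\Omega\bigl(\frac{m}{\log N}\log\frac{\lambda_N}{\ln B}\bigr)$. This is $\Theta(m)$ for $m=\sqrt N$ and $\Omega(m\log\log N/\log N)$ for $m=N/\log^3 N$.

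\textbf{Why your refinement seemed to work.} It only appears to reach $O(m/\log N)$ because of a spurious factor $1/B$. The probability that step $n$ hits a given register is $1/m$. With $2^{-R_{\min}}\approx\ln B/\lambda_n$, the per-bucket summand is therefore about $\frac{\ln B}{n}\Pr[c_{\min}=1]$. Moreover, $\Pr[c_{\min}=1]$ is only $\Theta(1/\ln B)$ on average:
\begin{itemize}
\item an epoch with $R_{\min}=r$ begins with about $\sqrt B$ minimal registers;
\item each is raised at rate $2^{-r}$ per unit of $\lambda$, so the epoch lasts about $2^r\sum_{c\le\sqrt B}1/c=\Theta(2^r\ln B)$;
\item only the final $2^r$ of that stretch has $c_{\min}=1$.
\end{itemize}
So the summand is $\Theta(1/n)$ on average. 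The harmonic sum over $n$ then returns $\Theta(\log(\lambda_N/\ln B))$ per bucket, which is your crude bound again (one update per epoch).

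\textbf{What to do instead.} State and prove the $O\bigl((m/B)\log\lambda_N\bigr)\subseteq O(m)$ bound you already have. As your fallback notes, it suffices downstream:
\begin{itemize}
\item in Theorem~\ref{thm:amortizedupdates}, $O(m)\cdot O(\log^2 n)=O(n/\log n)$ under $m=O(n/\log^3 n)$;
\item in Theorem~\ref{thm:amortizedupdateswithconstantwords}, $O(m)\cdot O(\log n)=O(n/\log n)$ under $m=O(n/\log^2 n)$.
\end{itemize}
Both proofs currently rely on the unsupported $O(m/\log n)$ count.
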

\begin{proof}
    There are $O\left(\frac{m}{B}\right) = O\left(\frac{m}{\log N}\right)$ buckets,
    and there are $O(\log N)$ possible rank values.
    Therefore, there are at most $O(m)$ possible minimum-rank updates.
\end{proof}
\begin{corollary}
    \label{cor:amortizedminimumrankupdates}
    Assuming a minimum-rank update takes $O(\log^2 n)$ time and that $m=O(n/\log^3 n)$, the amortized cost of a minimum-rank update is O(1)
    over $n$ updates.
\end{corollary}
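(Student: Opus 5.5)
The plan is to multiply the bound of Proposition~\ref{prop:minimumrankupdates} by the assumed per-update cost and compare the product with the length of the stream. By Proposition~\ref{prop:minimumrankupdates}, the expected number of minimum-rank updates over the whole stream of $N$ insertions is $O(m/\log N)$. The counting argument behind it goes like this. Each bucket's $r_{\min}$ is nondecreasing, and a recomputation is only triggered when $c_{\min}$ hits zero, which forces $r_{\min}$ to strictly increase. The ranks are bounded by $O(\log N)$, since they lie in $[w]$ with $w=O(\log N)$. So each of the $m/B$ buckets contributes at most $O(\log N)$ recomputations. I will restate this count deterministically, because it is really a worst-case bound and not only a bound in expectation.

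Next comes the cost. Each minimum-rank update costs $O(\log^2 N)$: we peek all $B=O(\log N)$ registers, at $O(\log N)$ each. The total work spent on minimum-rank updates is therefore at most
\[
O\!\left(\frac{m}{B}\cdot \log N\right)\cdot O(\log^2 N) = O(m\log^2 N),
\]
using the count of $O(m)$ possible minimum-rank updates from the proof of Proposition~\ref{prop:minimumrankupdates}. Now substitute the hypothesis $m=O(N/\log^3 N)$. This makes the total $O(N/\log N)$, which is $O(N)$. Dividing by the $N$ insertions gives $O(1/\log N)$ amortized per insertion, and in particular $O(1)$.

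The only delicate point is reconciling $n$ and $N$, since the statement is phrased with $n$ and the proposition with $N$. I will take $N=n$, the current number of updates, which gives $\log N=\Theta(\log n)$. The assumption $m=O(n/\log^3 n)$ is then exactly what the computation needs. The assumption is even stronger than necessary: $m=O(n/\log^2 n)$ would already suffice. I would also note that using the cruder $O(m)$ count from the proposition together with the $O(\log^2 n)$ cost still gives $O(m\log^2 n)=O(n/\log n)$, so no refinement of the counting is required.
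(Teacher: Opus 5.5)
Your proposal is correct and follows the paper's route. You bound the number of minimum-rank updates via Proposition~\ref{prop:minimumrankupdates} (each recomputation strictly raises a bucket's $r_{\min}$, and ranks are $O(\log n)$), multiply by the $O(\log^2 n)$ per-update cost, and invoke $m=O(n/\log^3 n)$. The paper's proof of Theorem~\ref{thm:amortizedupdates} uses the proposition's stated count $O(m/\log n)$ to get total cost $O(m\log n)$, whereas you use the $O(m)$ count that the proposition's proof actually establishes, giving $O(m\log^2 n)=O(n/\log n)$. This is the safer choice and still gives $O(1)$ amortized.
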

Corollary~\ref{cor:amortizedminimumrankupdates} is rather loose, but since this is subsumed by the cost of ordinary updates, it suffices for our 
purposes.
We can now put these results together to obtain the following results.
Let us recall the statement of Theorem~\ref{thm:amortizedupdates} and prove it.
\thmamortizedupdates*
\begin{proof}
    Since we want to bound the amortized costs over $n$ updates, we can just bound all of these separately and sum the bounds.
    By Proposition~\ref{prp:huffmantreechanges}, there are at most $O(\log n)$ tree updates, 
    so the total cost of tree updates is at most $O(m\log^2 n)$, which is $O(n/\log n)$ under the assumption on $m$.
    The total cost of ordinary updates is at most $O(n)$ by Proposition~\ref{prop:ordinaryupdates}, 
    and the total cost of minimum-rank updates is at most $O(m\log n)$ by Proposition~\ref{prop:minimumrankupdates}, 
    which is $O(n/\log^2 n)$ under the assumption on $m$.
    Thus, the total cost of all updates is at most $O(n/\log n + n + n/\log^2 n) = O(n)$, 
    so the amortized cost is O(1), as claimed.
\end{proof}

Let us recall the statement of Theorem~\ref{thm:amortizedupdateswithconstantwords} and prove it.
\thmamortizedupdateswithconstantwords*
\begin{proof}
    By Proposition~\ref{prp:huffmantreechanges}, there are at most $O(\log n)$ tree updates, 
    so the total cost of tree updates is at most $O(m\log n)$, which is $O(n/\log n)$ under the assumption on $m$.
    The total cost of ordinary updates is at most $O(n)$ by Proposition~\ref{prop:ordinaryupdates} if we adjust
    for $m$, 
    and the total cost of minimum-rank updates is at most $O(m)$ by Proposition~\ref{prop:minimumrankupdates}, 
    which is $O(n/\log^2 n)$ under the assumption on $m$.
    Thus, the total cost of all updates is at most $O(n)$, so the amortized cost is O(1), as claimed.
\end{proof}

\section{Pseudocode for the operations}
\label{app:pseudocode}
Pseudocode for the operations of the HBS sketch is given in Algorithm~\ref{alg:hbs}.

\begin{algorithm}[t]
    \begin{algorithmic}[1]
        \Function{Peek}{$b,j$}
            \State Determine the $j^\textrm{th}$ codeword start~$s_j$ and length~$\ell_j$ in bucket~$b$ with the unary array
            \State Get the codeword $c_j$ from the bits $s_j, s_j+1, \ldots, s_j+\ell_j-1$ in the codeword array
            \State Decode $c_j$ using the Huffman tree or lookup table to obtain the rank value~$r$
            \State \Return $r$
        \EndFunction
        \Function{Poke}{$r,b,j$}
            \State Encode $r$ into a codeword $c$ using the Huffman tree or lookup table
            \State Replace $c_{\textrm{old}}$ with $c$, shifting the bucket if necessary
        \EndFunction
        \Function{Insert}{$y$}
        \State Compute $b\gets h_b(y)$, $j\gets h_j(y)$, and $r\gets \rho(h_r(y))$.
        \If{$r>r_{\min}$ of the bucket~$b$}
            \State $r_{\mathrm{old}} \gets \textsc{Peek}(b,j)$
            \If{$r>r_{\mathrm{old}}$}
                \State \Call{Poke}{$r,b,j$}
                \If{$r_{\mathrm{old}}=r_{\min}$}
                    \State $c_{\min} \gets c_{\min}-1$
                    \If{$c_{\min}=0$}
                        \State Recompute $r_{\min}$ and $c_{\min}$ by peeking all registers in the bucket
                    \EndIf
                \EndIf
                \State Update $\hat{n}_b$ and $\hat{n}$.
                \If{$\hat{n}-\hat{n}_{\textrm{old}}$ is too large}
                    \State Reconstruct the Huffman tree and re-encode all buckets
                    \State $\hat{n}_{\textrm{old}} \gets \hat{n}$
                \EndIf
            \EndIf
        \EndIf
        \EndFunction
        \Function{Merge}{$S_1,S_2$}
            \ForAll{$b\in[m/B]$}
                \State Construct $\hat{n}_b$ by iterating over $j\in[B]$ and taking max of the $r_j$ from the sketches
            \EndFor
            \State $\hat{n} \gets \sum \hat{n}_b$
            \If{$\hat{n}-\max\{\hat{n}_{\textrm{old},1},\hat{n}_{\textrm{old},2}\}$ is too large}
                \State Construct the Huffman tree using $\hat{n}$
            \Else
                \State Reuse the Huffman tree from the sketch with the larger $\hat{n}_{\textrm{old}}$
            \EndIf
            \ForAll{$(b,j)\in[m/B]\times [B]$}
                \State $r\gets \max\{\textsc{Peek}_{S_1}(b,j), \textsc{Peek}_{S_2}(b,j)\}$
                \State \Call{Poke}{$r,b,j$}
            \EndFor
        \EndFunction
    \end{algorithmic}
    \caption{Huffman-Bucket Sketch operations.}
    \label{alg:hbs}
\end{algorithm}

\end{document}